\ifLuaTeX\usepackage[utf8]{luainputenc}\else\usepackage[utf8]{inputenc}\fi
\newcommand{\doublewidetilde}[1]{{%
		\mathpalette\double@widetilde{#1}%
}}
\def\BState{\State\hskip-\ALG@thistlm}
\newtheorem{definition}{Definition}
\newtheorem{theorem}{Theorem}
\newtheorem{lemma}{Lemma}
\newtheorem{corollary}{Corollary}
\newcommand{\tr}{\ensuremath{\mathsf{T}}}
\newcommand{\conjtr}{\ensuremath{\mathsf{H}}}
\newcommand{\Exp}{\ensuremath{\mathbb{E}}}
\newcommand{\abb}[1]{#1} 
\begin{document}
	%
	\title{Large-Scale-Fading Decoding in Cellular Massive MIMO Systems with Spatially Correlated Channels}

	\author{Trinh~Van~Chien,~\IEEEmembership{Student Member,~IEEE,}
		Christopher~Moll\'{e}n,
		Emil~Bj\"{o}rnson,~\IEEEmembership{Senior Member,~IEEE}
		\thanks{This paper was supported by the European Union's Horizon 2020 research and innovation programme under grant agreement No 641985 (5Gwireless). It was also supported by ELLIIT and CENIIT. Parts of this paper were submitted to IEEE International Conference on Communications 2019 \cite{Chien2019a}.}
		\thanks{T. V. Chien and E. Bj\"{o}rnson are with the Department of Electrical Engineering (ISY), Link\"{o}ping University, 581~83 Link\"{o}ping, Sweden (e-mail: trinh.van.chien@liu.se, emil.bjornson@liu.se).}
		\thanks{C. Moll\'{e}n was with the Department of Electrical Engineering (ISY), Linköping University, 581~83 Linköping, Sweden, when this work was done (e-mail: chris.mollen@gmail.com).}
	}
	
	\maketitle
	
	\begin{abstract}
		Massive multiple-input--multiple-output (\abb{MIMO}) systems can suffer from coherent intercell interference due to the phenomenon of pilot contamination. This paper investigates a two-layer decoding method that mitigates both coherent and non-coherent interference in multi-cell Massive \abb{MIMO}. To this end, each base station (\abb{BS}) first estimates the channels to intra-cell users using either minimum mean-squared error (\abb{MMSE}) or element-wise MMSE (\abb{EW-MMSE}) estimation based on uplink pilots. The estimates are used for local decoding on each \abb{BS} followed by a second decoding layer where the \abb{BS}s cooperate to mitigate inter-cell interference. An uplink achievable spectral efficiency (\abb{SE}) expression is computed for arbitrary two-layer decoding schemes.  A closed-form expression is then obtained for correlated Rayleigh fading, maximum-ratio combining, and the proposed large-scale fading decoding (\abb{LSFD}) in the second layer.  We also formulate a sum \abb{SE} maximization problem with both the data power and \abb{LSFD} vectors as optimization variables.  Since this is an \abb{NP}-hard problem, we develop a low-complexity algorithm based on the weighted \abb{MMSE} approach to obtain a local optimum. The numerical results show that both data power control and \abb{LSFD} improve the sum \abb{SE} performance over single-layer decoding multi-cell Massive \abb{MIMO} systems.
	\end{abstract}
	
	\begin{IEEEkeywords}
		Massive MIMO, Large-Scale Fading Decoding, Sum Spectral Efficiency Optimization, Channel Estimation.
	\end{IEEEkeywords}
	
	\IEEEpeerreviewmaketitle
	
	\section{Introduction}
	\IEEEPARstart{M}{assive} \abb{MIMO} \abb{BS}s, which are equipped with hundreds of antennas, exploit channel reciprocity to estimate the channel based on uplink pilots and spatially multiplex a large number of users on the same time--frequency resource \cite{Bjornson2016b, marzetta2010noncooperative}.  
	It is a promising technique to meet the growing demand for wireless data traffic of tomorrow \cite{Gupta2015survey, Andrews2014a}. In a single-cell scenario, there is no need for computationally heavy decoding or precoding methods in Massive \abb{MIMO}, such as successive interference cancellation or dirty paper coding. Linear processing schemes (e.g., zero-forcing combining) can effectively suppress interference and noise if the \abb{BS} is equipped with a large number of antennas \cite{Marzetta2016a}.
	In a multi-cell scenario, however, pilot-based channel estimation is contaminated by the non-orthogonal transmission in other cells.  This results in coherent intercell interference in the data transmission, so-called \emph{pilot contamination} \cite{Jose2011b}, unless some advanced processing schemes are used to suppress it \cite{bjornson2018a}. 
	Pilot contamination causes the gain of having more antennas to decrease and the \abb{SE} of linear decoding methods, such as maximum-ratio combining (\abb{MRC}) or zero-forcing, to saturate as the number of antennas grows. 
	
	Much work has been done to mitigate the effects of pilot contamination. 
	The first and intuitive approach to mitigate pilot contamination is to increase the length of the pilots.  In practical networks, however, it is not possible to make all pilots orthogonal due to the limited channel coherence block \cite{Bjornson2016a}.
	Hence, there is a trade-off between having longer pilots and low pilot overhead.
	Another method to mitigate pilot contamination is to assign the pilots in a way that reduces the contamination \cite{Jin2015a}, since only a few users from other cells cause substantial contamination. 
	The pilot assignment is a combinatorial problem and heuristic algorithms with low computational complexity can be developed to mitigate the pilot contamination.  
	In \cite{Xu2015a}, a greedy pilot assignment method is developed that exploits the statistical channel information and mutual interference between users.  
	Pilot assignment approaches still suffer from asymptotic \abb{SE} saturation since we only change one contaminating user for a less contaminating user.
	A third method is to utilize the spatial correlation to mitigate the coherent interference using multi-cell minimum-mean-square error (\abb{M-MMSE}) decoding   \cite{bjornson2018a}, but this method has high computational complexity.
	
	Instead of combating pilot contamination, one can utilize it using more advanced decoding schemes \cite{ashikhmin2012a, Ashikhmin2018a, adhikary2017a}.  
	This approach was initially called \textit{pilot contamination decoding} since the \abb{BS}s cooperate in suppressing the pilot contamination \cite{ashikhmin2012a}. 
	The original form of this technique used simplistic \abb{MRC}, which does not suppress interference very well, thus it required a huge number of antennas to be effective \cite{Ashikhmin2018a}.  
	The latest version of this decoding design, called \textit{large-scale fading decoding} (\abb{LSFD}) \cite{adhikary2017a}, was designed to be useful also with a practical number of antennas. 
	In the two-layer \abb{LSFD} framework, each \abb{BS} applies an arbitrary local linear decoding method in the first layer, preferably one that suppresses intra-cell interference. 
	The result is then gathered at a common central station that applies so-called \abb{LSFD} vectors in a second-layer to combine the signals from multiple \abb{BS}s to suppress pilot contamination and inter-cell interference. 
	This new decoding design overcomes the aforementioned limitations in \cite{ashikhmin2012a} and attains high \abb{SE} even with a limited number of \abb{BS} antennas. 
	
	To explain why \abb{LFSD} vectors are necessary to mitigate pilot contamination, we consider a toy example comprising of two BSs, each serving one user with the same index as their BS. There are four different channels and $\mathbf{h}_{i,j} \sim \mathcal{CN}(\mathbf{0}, \beta_{i,j} \mathbf{I}_M)$ denotes the channel between BS~$i$ and user~$j$ for  $i,j  \in \{1,2 \}$. Let $s_i$ denote the desired signal from the user in cell $i$. When using single-layer decoding with \abb{MRC}, the noise vanishes as $M \to \infty$, but pilot contamination remains \cite{marzetta2010noncooperative}. The resulting detected signals $\hat{s}_1, \hat{s}_2$ at the two BSs are then given by
	\begin{equation} \label{eq:DecodedSig2Users}
	\begin{bmatrix}
	\hat{s}_1\\ 
	\hat{s}_2
	\end{bmatrix} = 
	\begin{bmatrix}
	\beta_{1,1} s_1 +  \beta_{1,2} s_2 \\ 
	\beta_{2,1} s_1 +   \beta_{2,2} s_2
	\end{bmatrix} = 
	\underbrace{\begin{bmatrix}
		\beta_{1,1} &  \beta_{1,2}\\ 
		\beta_{2,1} &  \beta_{2,2}
		\end{bmatrix}}_{\triangleq \mathbf{B}} \, 	\begin{bmatrix}
	s_1\\ 
	s_2
	\end{bmatrix}.
	\end{equation}
	Since each BS observes a linear combination of the two signals, the asymptotic SE achieved with single-layer decoding is limited due to interference. 
	However, in a two-layer decoding system, a central station can process $\hat{s}_1$ and $\hat{s}_2$ to remove the interference as follows:
	\begin{equation} \label{eq:DecodedSig2Users-part2}
	\mathbf{B}^{-1} \begin{bmatrix}
	\hat{s}_1\\ 
	\hat{s}_2
	\end{bmatrix} =
	\mathbf{B}^{-1} \mathbf{B}
	\begin{bmatrix}
	s_1\\ 
	s_2
	\end{bmatrix}
	=  	\begin{bmatrix}
	s_1\\ 
	s_2
	\end{bmatrix}.
	\end{equation}
	The rows of the inverse matrix $\mathbf{B}^{-1}$ are called the \abb{LFSD} vectors and only depends on the statistical parameters $\beta_{1,1}, \beta_{1,2}, \beta_{2,1},\beta_{2,2}$, so the central station does not need to know the instantaneous channels.
	Since the resulting signals in \eqref{eq:DecodedSig2Users-part2} are free from noise and interference, the network can achieve an unbounded SE as $M \to \infty$.
	
	This motivating example, adapted from \cite{ashikhmin2012a}, exploits the fact that the channels are spatially uncorrelated and requires an infinite number of antennas. 
	The prior works \cite{ashikhmin2012a, adhikary2017a} are only considering uncorrelated Rayleigh fading channels and rely on the particular asymptotic properties of that channel model. The generalization to more practical correlated channels is non-trivial and has not been considered until now.\footnote{The concurrent work \cite{Adhikary2018a} appeared just as we were submitting this paper. It contains the uplink \abb{SE} for correlated Rayleigh fading described by the one-ring model and \abb{MMSE} estimation, while we consider arbitrary spatial correlation and uses two types of channel estimators. Moreover, they consider joint power control and \abb{LFSD} for max-min fairness, while we consider sum \abb{SE} optimization, making the papers complementary.} 
	In this paper, we consider spatially correlated channels with a finite number of antennas. We stress that these generalizations are practically important:  if two-layer decoding will ever be implemented in practice, the channels will be subject to spatial correlation and the BSs will have a limited number of antennas.
	
	\subsection{Main Contributions}
	In this paper, we generalize the \abb{LSFD} method from \cite{ashikhmin2012a, adhikary2017a} to a scenario with correlated Rayleigh fading and arbitrary first-layer decoders, and also develop a method for data power control in the system.  We evaluate the performance by deriving an \abb{SE} expression for the system.  Our main contributions are summarized as follows:
	\begin{itemize}
		
		\item An uplink per-user \abb{SE} is derived as a function of the second-layer decoding weights.  
		A closed-form expression is then obtained for correlated Rayleigh fading and a system that uses \abb{MRC} in the first decoding layer and an arbitrary choice of \abb{LSFD} in the second layer.   
		The  second-layer decoding weights that maximize the \abb{SE} follows in closed form.
		
		\item An uplink sum \abb{SE} optimization problem with power constraints is formulated.  Because it is non-convex and \abb{NP}-hard, we propose an alternating optimization approach that converges to a local optimum with polynomial complexity.
		
		\item Numerical results demonstrate the effectiveness of two-layer decoding for Massive \abb{MIMO} communication systems with correlated Rayleigh fading.
	\end{itemize}
	
	The rest of this paper is organized as follows: Multi-cell Massive \abb{MIMO} with two-layer decoding is presented in Section~\ref{Section: System Model}.  An \abb{SE} for the uplink together with the optimal \abb{LSFD} design is presented in Section~\ref{Section:ULPerformance}.  A maximization problem for the sum \abb{SE} is formulated and a solution is proposed in Section~\ref{Section:SumRateOpt}.  Numerical results in Section~\ref{Section:NumericalResults} demonstrate the performance of the proposed system.  Section~\ref{Section:Conclustion} states the major conclusions of the paper.
	
	\textbf{Reproducible research:} All the simulation results can be reproduced using the Matlab code and data files available at: \url{https://github.com/emilbjornson/large-scale-fading-decoding}

	\textbf{Notation}: Lower and upper case bold letters are used for vectors and matrices.  The expectation of the random variable $X$ is denoted by $\Exp \{X \}$ and the Euclidean norm of the vector $\mathbf{x}$ by $\| \mathbf{x}\|$. The transpose and Hermitian transpose of a matrix $\mathbf{M}$ are written as $\mathbf{M}^\tr$ and $\mathbf{M}^\conjtr$, respectively. The $L\times L$\mbox{-}dimensional diagonal matrix with the diagonal elements $d_1, d_2,\ldots,d_L$ is denoted $\operatorname{diag}(d_1,d_2,\ldots,d_L)$. $\mathfrak{Re}(\cdot)$ and $\mathfrak{Im}(\cdot)$ are the real and imaginary parts of a complex number. $\nabla g (\mathbf{x})|_{ \mathbf{x}_0 } $ denotes the gradient of a multivariate function $g$  at $\mathbf{x} = \mathbf{x}_0 $. Finally, $\mathcal{CN}(\boldsymbol{0}, \mathbf{R})$ is a vector of circularly symmetric, complex, jointly Gaussian distributed random variables with zero mean and correlation matrix $\mathbf{R}$. 
	
	\section{System Model} \label{Section: System Model}
	We consider a network with $L$~cells. Each cell consists of a \abb{BS} equipped with $M$ antennas that serves $K$ single-antenna users.\footnote{In the uplink, the considered network consists of multiple interfering single-input multiple-output (SIMO) channels. Such a setup has been referred to as multiuser \abb{MIMO} in the information theoretic-literature for decades, which is why we adopt this terminology in the paper.} The $M$\mbox{-}dimensional channel vector in the uplink between user~$k$ in cell~$l$ and \abb{BS}~$l'$ is denoted by $\mathbf{h}_{l,k}^{l'} \in \mathbb{C}^M$. 
	We consider the standard block-fading model, where the channels are static within a coherence block of size $\tau_\mathrm{c}$ channel uses and assume one independent realization in each block, 
	according to a stationary ergodic random process. 
	Each channel follows a correlated Rayleigh fading model:
	\begin{equation}
	\mathbf{h}_{l,k}^{l'} \sim \mathcal{CN} \left( \mathbf{0}, \mathbf{R}_{l,k}^{l'} \right),
	\end{equation}
	where $\mathbf{R}_{l,k}^{l'} \in \mathbb{C}^{M \times M}$ is the \emph{spatial correlation matrix of the channel}. 
	The \abb{BS}s know the channel statistics,
	but have no prior knowledge of the channel realizations, which need to be estimated in every coherence block.
	
	\subsection{Channel Estimation}
	As in conventional Massive \abb{MIMO} \cite{bjornson2018a},  
	the channels are estimated by letting the users transmit $\tau_\text{p}$-symbol long \emph{pilots} in a dedicated part of the coherence block, called the \emph{pilot phase}. 	
	All the cells share a common set of $\tau_\text{p}=K$ mutually orthogonal pilots $\{\pmb{\phi}_1, \ldots, \pmb{\phi}_K \}$, where the pilot $\pmb{\phi}_k \in \mathbb{C}^{\tau_p}$ spans $\tau_p$ symbols. Such orthogonal pilots are disjointly distributed among the $K$ users in each cell:
	\begin{equation}
	\pmb{\phi}_k^\conjtr \pmb{\phi}_{k'} = 
	\begin{cases}
	\tau_\text{p} & k = k',\\ 
	0 & k \neq k.
	\end{cases}
	\end{equation}
	Without loss of generality, we assume that all the users in different cells, which share the same index, use the same pilot and thereby cause pilot contamination to each other \cite{marzetta2010noncooperative}.  
	
	During the pilot phase, at BS~$l$, the signals received in the pilot phase are collectively denoted by the $M\times\tau_\text{p}$\mbox{-}dimensional matrix $\mathbf{Y}_l$ and it is given by
	\begin{equation}
	\mathbf{Y}_l = \sum_{l'=1}^L \sum_{k=1}^K \sqrt{\hat{p}_{l',k}}  \mathbf{h}_{l',k}^l \boldsymbol{\phi}_{k}^\conjtr + \mathbf{N}_l,
	\end{equation}
	where $\hat{p}_{l',k}$ is the power of the pilot of user~$k$ in cell~$l'$ and $\mathbf{N}_l$ is a matrix of independent and identically distributed noise terms, each distributed as $\mathcal{CN}(0,\sigma^2)$.
	
	An intermediate observation of the channel from user~$k$ to BS~$l$ is obtained through correlation with the pilot of user~$k$ in the following way:
	\begin{equation}
	\tilde{\mathbf{y}}_{l,k} = \mathbf{Y}_l \boldsymbol{\phi}_k = \tau_{\text{p}} \sqrt{\hat{p}_{l,k}}  \mathbf{h}_{l,k}^l +  \sum_{ \substack{l' =1 \\ l' \neq l} }^L \tau_{\text{p}} \sqrt{\hat{p}_{l',k}}  \mathbf{h}_{l',k}^l  + \tilde{\mathbf{n}}_{l,k}, \label{eq:processed-pilot}
	\end{equation}
	where $\tilde{\mathbf{n}}_{l,k} \triangleq \mathbf{N}_l \pmb{\phi}_{k} \sim \mathcal{CN}(\mathbf{0}, \tau_{\text{p}} \sigma^2 \mathbf{I}_M)$ are independent over $l$ and $k$. 
	The channel estimate and estimation error of the \abb{MMSE} estimation of $\mathbf{h}_{l,k}^l$ is presented in the following lemma.
	\begin{lemma} \label{lemma:StandardMMSE}
		If \abb{BS}~$l$ uses \abb{MMSE} estimation based on the observation in \eqref{eq:processed-pilot}, the estimate of the channel between user~$k$ in cell~$l$ and BS~$l$ is
		\begin{equation}
		\hat{\mathbf{h}}_{l,k}^l = \sqrt{\hat{p}_{l,k}} \mathbf{R}_{l,k}^l \pmb{\Psi}_{l,k}^{-1} \tilde{\mathbf{y}}_{l,k},
		\end{equation}
		where $\pmb{\Psi}_{l,k} = \Exp\{ \tilde{\mathbf{y}}_{l,k} (\tilde{\mathbf{y}}_{l,k})^{\conjtr}\}/\tau_\mathrm{p}$ is given by
		\begin{equation}
		\pmb{\Psi}_{l,k} \triangleq \sum_{l'=1}^L \tau_\mathrm{p} \hat{p}_{l',k}  \mathbf{R}_{l',k}^l +  \sigma^2 \mathbf{I}_M.
		\end{equation}
		The channel estimate is distributed as
		\begin{equation}
		\hat{\mathbf{h}}_{l,k}^l \sim \mathcal{CN} \left( \mathbf{0}, \tau_\mathrm{p} \hat{p}_{l,k} \mathbf{R}_{l,k}^l \pmb{\Psi}_{l,k}^{-1} \mathbf{R}_{l,k}^l \right),
		\end{equation}
		and the channel estimation error, $\mathbf{e}_{l,k}^l \triangleq \mathbf{h}_{l,k}^l - \hat{\mathbf{h}}_{l,k}^l$, is independently distributed as
		\begin{equation}
		\mathbf{e}_{l,k}^l \sim \mathcal{CN} \left( \mathbf{0}, \mathbf{R}_{l,k}^l - \tau_{\text{p}} \hat{p}_{l,k} \mathbf{R}_{l,k}^l \pmb{\Psi}_{l,k}^{-1} \mathbf{R}_{l,k}^l \right).
		\end{equation}
	\end{lemma}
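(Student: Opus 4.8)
The plan is to recognize this as a textbook linear minimum mean-squared error estimation problem for jointly circularly-symmetric complex Gaussian vectors. The quantity to be estimated, $\mathbf{h}_{l,k}^l$, and the observation, $\tilde{\mathbf{y}}_{l,k}$, are both zero-mean and jointly Gaussian, since $\tilde{\mathbf{y}}_{l,k}$ in \eqref{eq:processed-pilot} is a linear combination of independent Gaussian channels and Gaussian noise. For such vectors the conditional mean is linear, so the \abb{MMSE} estimate coincides with the linear \abb{MMSE} estimate and takes the form $\hat{\mathbf{h}}_{l,k}^l = \mathbf{R}_{\mathbf{h}\tilde{\mathbf{y}}} \, \mathbf{R}_{\tilde{\mathbf{y}}\tilde{\mathbf{y}}}^{-1} \, \tilde{\mathbf{y}}_{l,k}$, where $\mathbf{R}_{\mathbf{h}\tilde{\mathbf{y}}} \triangleq \Exp\{ \mathbf{h}_{l,k}^l (\tilde{\mathbf{y}}_{l,k})^\conjtr \}$ and $\mathbf{R}_{\tilde{\mathbf{y}}\tilde{\mathbf{y}}} \triangleq \Exp\{ \tilde{\mathbf{y}}_{l,k} (\tilde{\mathbf{y}}_{l,k})^\conjtr \}$.

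Next I would compute the two second-order statistics. For the cross-covariance, only the desired term $\tau_\text{p} \sqrt{\hat{p}_{l,k}} \mathbf{h}_{l,k}^l$ in \eqref{eq:processed-pilot} is correlated with $\mathbf{h}_{l,k}^l$, because the interfering channels $\mathbf{h}_{l',k}^l$ with $l' \neq l$ and the noise $\tilde{\mathbf{n}}_{l,k}$ are independent of $\mathbf{h}_{l,k}^l$ and zero-mean; this yields $\mathbf{R}_{\mathbf{h}\tilde{\mathbf{y}}} = \tau_\text{p} \sqrt{\hat{p}_{l,k}} \, \mathbf{R}_{l,k}^l$. For the observation covariance I would expand $\tilde{\mathbf{y}}_{l,k}(\tilde{\mathbf{y}}_{l,k})^\conjtr$ and drop all cross terms by independence, so each channel contributes $\tau_\text{p}^2 \hat{p}_{l',k} \mathbf{R}_{l',k}^l$ and the noise contributes $\tau_\text{p} \sigma^2 \mathbf{I}_M$, giving $\mathbf{R}_{\tilde{\mathbf{y}}\tilde{\mathbf{y}}} = \tau_\text{p} \pmb{\Psi}_{l,k}$ with $\pmb{\Psi}_{l,k}$ exactly as defined in the statement. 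Substituting both into the estimator formula, the factors $\tau_\text{p}$ cancel and produce $\hat{\mathbf{h}}_{l,k}^l = \sqrt{\hat{p}_{l,k}} \, \mathbf{R}_{l,k}^l \pmb{\Psi}_{l,k}^{-1} \tilde{\mathbf{y}}_{l,k}$, as claimed.

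The distributional claims then follow quickly. The estimate is a fixed linear transformation of the Gaussian vector $\tilde{\mathbf{y}}_{l,k}$, hence zero-mean Gaussian; its covariance is $\hat{p}_{l,k} \mathbf{R}_{l,k}^l \pmb{\Psi}_{l,k}^{-1} \Exp\{\tilde{\mathbf{y}}_{l,k}(\tilde{\mathbf{y}}_{l,k})^\conjtr\} \pmb{\Psi}_{l,k}^{-1} \mathbf{R}_{l,k}^l$, and inserting $\Exp\{\tilde{\mathbf{y}}_{l,k}(\tilde{\mathbf{y}}_{l,k})^\conjtr\} = \tau_\text{p}\pmb{\Psi}_{l,k}$ together with the Hermitian symmetry of $\pmb{\Psi}_{l,k}$ collapses one factor to leave $\tau_\text{p}\hat{p}_{l,k} \mathbf{R}_{l,k}^l \pmb{\Psi}_{l,k}^{-1} \mathbf{R}_{l,k}^l$. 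For the error $\mathbf{e}_{l,k}^l = \mathbf{h}_{l,k}^l - \hat{\mathbf{h}}_{l,k}^l$, I would invoke the orthogonality principle: the \abb{MMSE} error is uncorrelated with the estimate, and under joint Gaussianity uncorrelated implies independent, so $\mathbf{e}_{l,k}^l$ is an independent zero-mean Gaussian. Writing $\mathbf{h}_{l,k}^l = \hat{\mathbf{h}}_{l,k}^l + \mathbf{e}_{l,k}^l$ with the two summands independent gives $\mathbf{R}_{l,k}^l = \Exp\{\hat{\mathbf{h}}_{l,k}^l (\hat{\mathbf{h}}_{l,k}^l)^\conjtr\} + \Exp\{\mathbf{e}_{l,k}^l (\mathbf{e}_{l,k}^l)^\conjtr\}$, so the error covariance is $\mathbf{R}_{l,k}^l - \tau_\text{p}\hat{p}_{l,k} \mathbf{R}_{l,k}^l \pmb{\Psi}_{l,k}^{-1} \mathbf{R}_{l,k}^l$.

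I do not expect a genuine obstacle, as this is a standard result; the only points requiring care are the consistent bookkeeping of the $\tau_\text{p}$ factors between the two covariance matrices and the estimator, and the explicit appeal to joint Gaussianity to upgrade the orthogonality principle's uncorrelatedness into the full statistical independence of $\mathbf{e}_{l,k}^l$ and $\hat{\mathbf{h}}_{l,k}^l$ that the lemma asserts.
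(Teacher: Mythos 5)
Your proposal is correct and is precisely the standard jointly-Gaussian \abb{MMSE} derivation (cross-covariance times inverse observation covariance, followed by the orthogonality principle upgraded to independence via Gaussianity) that the paper itself invokes only by citation to \cite{Kay1993a} and \cite[Section~3]{Bjornson2017bo}. The bookkeeping of the $\tau_\mathrm{p}$ factors and the identification $\Exp\{\tilde{\mathbf{y}}_{l,k}\tilde{\mathbf{y}}_{l,k}^\conjtr\} = \tau_\mathrm{p}\pmb{\Psi}_{l,k}$ are handled correctly, so your writeup is a faithful, self-contained version of the proof the paper delegates to its references.
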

	\begin{proof}
		This lemma follows from adopting standard \abb{MMSE} estimation results from \cite{Kay1993a}, \cite[Section~3]{Bjornson2017bo} to our system model and notation.
	\end{proof}
	Lemma~\ref{lemma:StandardMMSE} provides statistical information for the BS to construct the decoding and precoding vectors for the up- and downlink data transmission.  
	However, to compute the \abb{MMSE} estimate, the inverse matrix of $\pmb{\Psi}_{l,k}$ has to be computed for every user, which can lead to a computational complexity that might be infeasible when there are many antennas.  
	This motivates us to use the simpler estimation technique called \emph{element-wise \abb{MMSE}} (\abb{EW-MMSE}) \cite{Bjornson2017bo}. 
	
	To simplify the presentation, we make the standard assumption that the correlation matrix $\mathbf{R}_{l,k}^{l'}$ has equal diagonal elements, denoted by $\beta_{l,k}^{l'}$. This assumption is well motivated for elevated macro BSs that only observe far-field scattering effects from every cell. However, \abb{EW-MMSE} estimation of the channel can also be done when the diagonal elements are different.  
	The generalization to this case is straightforward. \abb{EW-MMSE} estimation is given in Lemma~\ref{Lemma:E-MMSE} together with the statistics of the estimates.
	\begin{lemma} \label{Lemma:E-MMSE}
		If BS~$l$ uses \abb{EW-MMSE} estimation and the diagonal elements of the  spatial correlation matrix of the channel are equal, the channel estimate between user~$k$ in cell~$l$ and BS~$l$ is
		\begin{equation}
		\hat{\mathbf{h}}_{l,k}^l = \varrho_{l,k}^l \tilde{\mathbf{y}}_{l,k},
		\end{equation}
		where 
		\begin{equation}
		\varrho_{l,k}^l = \frac{\sqrt{\hat{p}_{l,k}}  \beta_{l,k}^l}{\sum_{l'=1 }^L \tau_{\text{p}} \hat{p}_{l',k}  \beta_{l',k}^l + \sigma^2}, 
		\end{equation}
		and the channel estimate and estimation error of $\mathbf{h}_{l,k}^l$ are distributed as
		\begin{align}
		\hat{\mathbf{h}}_{l,k}^l &\sim \mathcal{CN} \left( \mathbf{0},  (\varrho_{l,k}^l)^2 \tau_{\text{p}} \pmb{\Psi}_{l,k} \right),\\
		\mathbf{e}_{l,k}^l &\sim \mathcal{CN} \left( \mathbf{0}, \mathbf{R}_{l,k}^l - (\varrho_{l,k}^l)^2 \tau_{\text{p}} \pmb{\Psi}_{l,k} \right)
		\end{align}
		and are not independent.
	\end{lemma}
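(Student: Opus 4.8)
The plan is to derive the element-wise MMSE estimator from first principles and then obtain the two distributions and the non-independence claim by direct second-moment computations, exploiting that the equal-diagonal assumption collapses the $M$ per-antenna scalar estimators into a single common scaling.

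First I would set up the element-wise estimator. For each antenna index $m$, write $h_m \triangleq [\mathbf{h}_{l,k}^l]_m$ and $y_m \triangleq [\tilde{\mathbf{y}}_{l,k}]_m$ and form the scalar LMMSE estimate $\hat{h}_m = c_m y_m$ with $c_m = \Exp\{ h_m y_m^\ast \} / \Exp\{ |y_m|^2 \}$. Using \eqref{eq:processed-pilot} and the mutual independence of the channels and the noise, only the desired term survives in the numerator, giving $\Exp\{ h_m y_m^\ast \} = \tau_{\mathrm{p}} \sqrt{\hat{p}_{l,k}}\, \beta_{l,k}^l$, while the denominator is the $m$th diagonal entry of $\Exp\{ \tilde{\mathbf{y}}_{l,k} (\tilde{\mathbf{y}}_{l,k})^\conjtr \} = \tau_{\mathrm{p}} \pmb{\Psi}_{l,k}$, namely $\tau_{\mathrm{p}}(\sum_{l'} \tau_{\mathrm{p}} \hat{p}_{l',k} \beta_{l',k}^l + \sigma^2)$. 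The equal-diagonal assumption makes $c_m$ independent of $m$ and equal to $\varrho_{l,k}^l$, so stacking the per-antenna estimates yields $\hat{\mathbf{h}}_{l,k}^l = \varrho_{l,k}^l \tilde{\mathbf{y}}_{l,k}$.

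Next I would read off the estimate distribution. Since $\tilde{\mathbf{y}}_{l,k}$ is a sum of independent circularly symmetric Gaussian vectors it is itself zero-mean Gaussian with covariance $\tau_{\mathrm{p}} \pmb{\Psi}_{l,k}$, which I would verify directly and identify with the definition of $\pmb{\Psi}_{l,k}$. Multiplying by the deterministic scalar $\varrho_{l,k}^l$ preserves Gaussianity and scales the covariance by $(\varrho_{l,k}^l)^2$, giving the stated law $\hat{\mathbf{h}}_{l,k}^l \sim \mathcal{CN}(\mathbf{0}, (\varrho_{l,k}^l)^2 \tau_{\mathrm{p}} \pmb{\Psi}_{l,k})$. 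For the error, $\mathbf{e}_{l,k}^l = \mathbf{h}_{l,k}^l - \varrho_{l,k}^l \tilde{\mathbf{y}}_{l,k}$ is again a linear combination of jointly Gaussian vectors, hence zero-mean Gaussian, and its covariance follows by expanding $\Exp\{\mathbf{e}_{l,k}^l (\mathbf{e}_{l,k}^l)^\conjtr\}$ into four second-moment terms, using $\Exp\{\mathbf{h}_{l,k}^l (\tilde{\mathbf{y}}_{l,k})^\conjtr\} = \tau_{\mathrm{p}} \sqrt{\hat{p}_{l,k}}\, \mathbf{R}_{l,k}^l$.

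The \emph{main obstacle} is exactly this error covariance, and it is where EW-MMSE departs from the clean MMSE bookkeeping of Lemma~\ref{lemma:StandardMMSE}. Because $\varrho_{l,k}^l \tilde{\mathbf{y}}_{l,k}$ is not the conditional mean, the orthogonality principle fails and the convenient identity $\Exp\{\mathbf{e}\mathbf{e}^\conjtr\} = \mathbf{R} - \Exp\{\hat{\mathbf{h}}\hat{\mathbf{h}}^\conjtr\}$ cannot simply be invoked; the cross terms do not cancel. The same computation proves the final non-independence claim, since $\Exp\{\hat{\mathbf{h}}_{l,k}^l (\mathbf{e}_{l,k}^l)^\conjtr\} = \varrho_{l,k}^l \tau_{\mathrm{p}} \sqrt{\hat{p}_{l,k}}\, \mathbf{R}_{l,k}^l - (\varrho_{l,k}^l)^2 \tau_{\mathrm{p}} \pmb{\Psi}_{l,k}$ is nonzero (the defining relation $\varrho_{l,k}^l [\pmb{\Psi}_{l,k}]_{mm} = \sqrt{\hat{p}_{l,k}} \beta_{l,k}^l$ only forces the diagonal to vanish, not the off-diagonal entries), so estimate and error are correlated and, being jointly Gaussian, not independent. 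I would therefore pay particular attention to reconciling the direct expansion with the stated covariance $\mathbf{R}_{l,k}^l - (\varrho_{l,k}^l)^2 \tau_{\mathrm{p}} \pmb{\Psi}_{l,k}$: the two agree on the diagonal by the definition of $\varrho_{l,k}^l$, and checking the off-diagonal entries is the delicate point I would verify carefully before declaring the proof complete.
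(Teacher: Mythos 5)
Your plan is the right one, and it is exactly the ``straightforward computation of the correlation matrices'' that the paper's one-line proof alludes to: the per-antenna LMMSE coefficients, which collapse to the common scalar $\varrho_{l,k}^l$ under the equal-diagonal assumption; the law of $\hat{\mathbf{h}}_{l,k}^l$ obtained by scaling $\tilde{\mathbf{y}}_{l,k}\sim\mathcal{CN}(\mathbf{0},\tau_\mathrm{p}\pmb{\Psi}_{l,k})$; and the non-independence claim via the cross-covariance $\Exp\{\hat{\mathbf{h}}_{l,k}^l(\mathbf{e}_{l,k}^l)^\conjtr\}=\varrho_{l,k}^l\tau_\mathrm{p}\sqrt{\hat{p}_{l,k}}\,\mathbf{R}_{l,k}^l-(\varrho_{l,k}^l)^2\tau_\mathrm{p}\pmb{\Psi}_{l,k}$, whose diagonal vanishes by the definition of $\varrho_{l,k}^l$ but whose off-diagonal entries do not. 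All of these steps are correct.

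The gap is the step you explicitly defer: ``checking the off-diagonal entries'' of the error covariance is not a verification you can close, because it fails. Your own expansion gives
\begin{equation*}
\Exp\{\mathbf{e}_{l,k}^l(\mathbf{e}_{l,k}^l)^\conjtr\}
= \mathbf{R}_{l,k}^l - 2\varrho_{l,k}^l\tau_\mathrm{p}\sqrt{\hat{p}_{l,k}}\,\mathbf{R}_{l,k}^l + (\varrho_{l,k}^l)^2\tau_\mathrm{p}\pmb{\Psi}_{l,k}
= \left(\mathbf{R}_{l,k}^l - (\varrho_{l,k}^l)^2\tau_\mathrm{p}\pmb{\Psi}_{l,k}\right) - 2\,\Exp\{\hat{\mathbf{h}}_{l,k}^l(\mathbf{e}_{l,k}^l)^\conjtr\},
\end{equation*}
so the covariance stated in the lemma is exact if and only if the cross-covariance vanishes, i.e.\ (by joint circular-symmetric Gaussianity) if and only if estimate and error are independent. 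The lemma's covariance formula and its non-independence claim are therefore mutually exclusive: whenever the channels are genuinely spatially correlated, the quantity you computed to prove dependence simultaneously disproves the stated error distribution off the diagonal. A concrete check with $L=1$, $M=2$, $\mathbf{R}_{l,k}^l=\beta\bigl(\begin{smallmatrix}1 & r\\ r^\ast & 1\end{smallmatrix}\bigr)$, $r\neq 0$, and $D=\tau_\mathrm{p}\hat{p}_{l,k}\beta+\sigma^2$: the true $(1,2)$ entry is $\beta r\sigma^4/D^2$, while the stated formula gives $\beta r\sigma^2(2\tau_\mathrm{p}\hat{p}_{l,k}\beta+\sigma^2)/D^2$. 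Consequently, a completed proof must either (i) restrict to diagonal correlation matrices, where the formula is exact but estimate and error are then independent, contradicting the last claim, or (ii) replace the stated covariance by the three-term expression above, which is the standard error-correlation formula for a linear estimator that is not the conditional mean. You should carry your deferred check through and state this conclusion explicitly---as written, your proposal establishes the estimator, the estimate's distribution, and the dependence, but leaves unproved (because it is unprovable in general) the one remaining claim of the lemma.
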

	\begin{proof}
		The statistics of the estimate and estimation error follow from straightforward computation of the correlation matrices and the derivation is therefore omitted.
	\end{proof}
	
	As compared to \abb{MMSE} estimation, \abb{EW-MMSE} estimation simplifies the computations, since no inverse matrix computation is involved.
	Moreover, each \abb{BS} only needs to know the diagonal of the spatial correlation matrices, which are easier to acquire in practice than the full matrices. 
	We can also observe the relationship between two users utilizing nonorthogonal pilots by a simple expression as shown in Corollary~\ref{Corollary1}. 
	
	\begin{corollary}\label{Corollary1}
		When the diagonal elements of the spatial correlation matrix of the channel are equal, the two \abb{EW-MMSE} estimates $\hat{\mathbf{h}}_{l,k}^l$ and $\hat{\mathbf{h}}_{l'',k}^l$ of the channels of users~$k$ in cells~$l$ and $l''$ that are computed at BS~$l$ are related as:
		\begin{equation} \label{eq:Relationship2Channels}
		\frac{\hat{\mathbf{h}}_{l,k}^l}{\sqrt{\hat{p}_{l,k}} \beta_{l,k}^l}  = \frac{\hat{\mathbf{h}}_{l'',k}^l}{\sqrt{\hat{p}_{l'',k}} \beta_{l'',k}^l},
		\end{equation}
		where $\hat{\mathbf{h}}_{l'',k}^l = \varrho_{l'',k}^l \tilde{\mathbf{y}}_{l,k}$ with
		\begin{equation}
		\varrho_{l'',k}^l = \sqrt{\hat{p}_{l'',k}}  \beta_{l'',k}^l / \left(\sum_{l'=1 }^L \tau_{\text{p}} \hat{p}_{l',k}  \beta_{l',k}^l + \sigma^2 \right).
		\end{equation}
	\end{corollary}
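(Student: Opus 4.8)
The plan is to exploit the fact that both EW-MMSE estimates in question are formed at the same BS~$l$ from a single processed pilot observation, namely $\tilde{\mathbf{y}}_{l,k}$ from \eqref{eq:processed-pilot}. Since all users sharing the pilot index~$k$ contaminate the same correlation output, the EW-MMSE estimate of any such channel---whether the intra-cell channel $\mathbf{h}_{l,k}^l$ or the cross-cell channel $\mathbf{h}_{l'',k}^l$---is obtained by scaling the common vector $\tilde{\mathbf{y}}_{l,k}$ by a scalar coefficient. By Lemma~\ref{Lemma:E-MMSE}, applied with each target channel in turn, these coefficients are $\varrho_{l,k}^l$ and $\varrho_{l'',k}^l$, respectively.

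The key structural observation is that the denominators of $\varrho_{l,k}^l$ and $\varrho_{l'',k}^l$ are identical: both equal $D_{l,k} \triangleq \sum_{l'=1}^L \tau_{\text{p}} \hat{p}_{l',k} \beta_{l',k}^l + \sigma^2$. This denominator depends only on the receiving BS~$l$ and the pilot index~$k$, and is insensitive to which cell's user is being estimated. The two coefficients therefore differ solely through their numerators, $\sqrt{\hat{p}_{l,k}}\,\beta_{l,k}^l$ and $\sqrt{\hat{p}_{l'',k}}\,\beta_{l'',k}^l$.

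The remaining step is a direct cancellation. Dividing $\hat{\mathbf{h}}_{l,k}^l = \varrho_{l,k}^l \tilde{\mathbf{y}}_{l,k}$ by its numerator gives $\hat{\mathbf{h}}_{l,k}^l / (\sqrt{\hat{p}_{l,k}}\,\beta_{l,k}^l) = \tilde{\mathbf{y}}_{l,k}/D_{l,k}$, while dividing $\hat{\mathbf{h}}_{l'',k}^l = \varrho_{l'',k}^l \tilde{\mathbf{y}}_{l,k}$ by its numerator yields the identical right-hand side $\tilde{\mathbf{y}}_{l,k}/D_{l,k}$. Equating the two establishes \eqref{eq:Relationship2Channels}.

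I anticipate no substantive obstacle here; the only point requiring care is the recognition that the cross-cell EW-MMSE estimate $\hat{\mathbf{h}}_{l'',k}^l$ reuses the same observation $\tilde{\mathbf{y}}_{l,k}$ as the intra-cell estimate, which holds precisely because pilot~$k$ is shared across all cells so that correlating $\mathbf{Y}_l$ with $\pmb{\phi}_k$ extracts a single mixed signal. Once this is noted, the common-denominator structure makes the claimed proportionality immediate, and the result can be regarded as a corollary of the explicit estimator form in Lemma~\ref{Lemma:E-MMSE} rather than requiring a fresh derivation.
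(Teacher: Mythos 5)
Your proof is correct and follows essentially the same route as the paper, which presents Corollary~\ref{Corollary1} as an immediate consequence of the estimator form in Lemma~\ref{Lemma:E-MMSE}: both EW-MMSE estimates scale the common observation $\tilde{\mathbf{y}}_{l,k}$ by coefficients sharing the denominator $\sum_{l'=1}^L \tau_{\text{p}} \hat{p}_{l',k} \beta_{l',k}^l + \sigma^2$, so dividing each estimate by its coefficient's numerator yields the same vector. Your explicit verification that the cross-cell estimate $\hat{\mathbf{h}}_{l'',k}^l$ reuses the same correlated observation is exactly the point the paper leaves implicit, and it is handled correctly.
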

	Corollary~\ref{Corollary1} mathematically shows that the channel estimates of two users with the same pilot signal only differ from each other by a scaling factor. 
	Using \abb{EW-MMSE} estimation leads to severe pilot contamination that cannot be mitigated by linear processing of the data signal only, at least not with the approach in \cite{bjornson2018a}.
	
	\begin{figure}[t]
		\centering
		\includegraphics[trim=0.0cm 0cm 0cm 0cm, clip=true, width=2.7in]{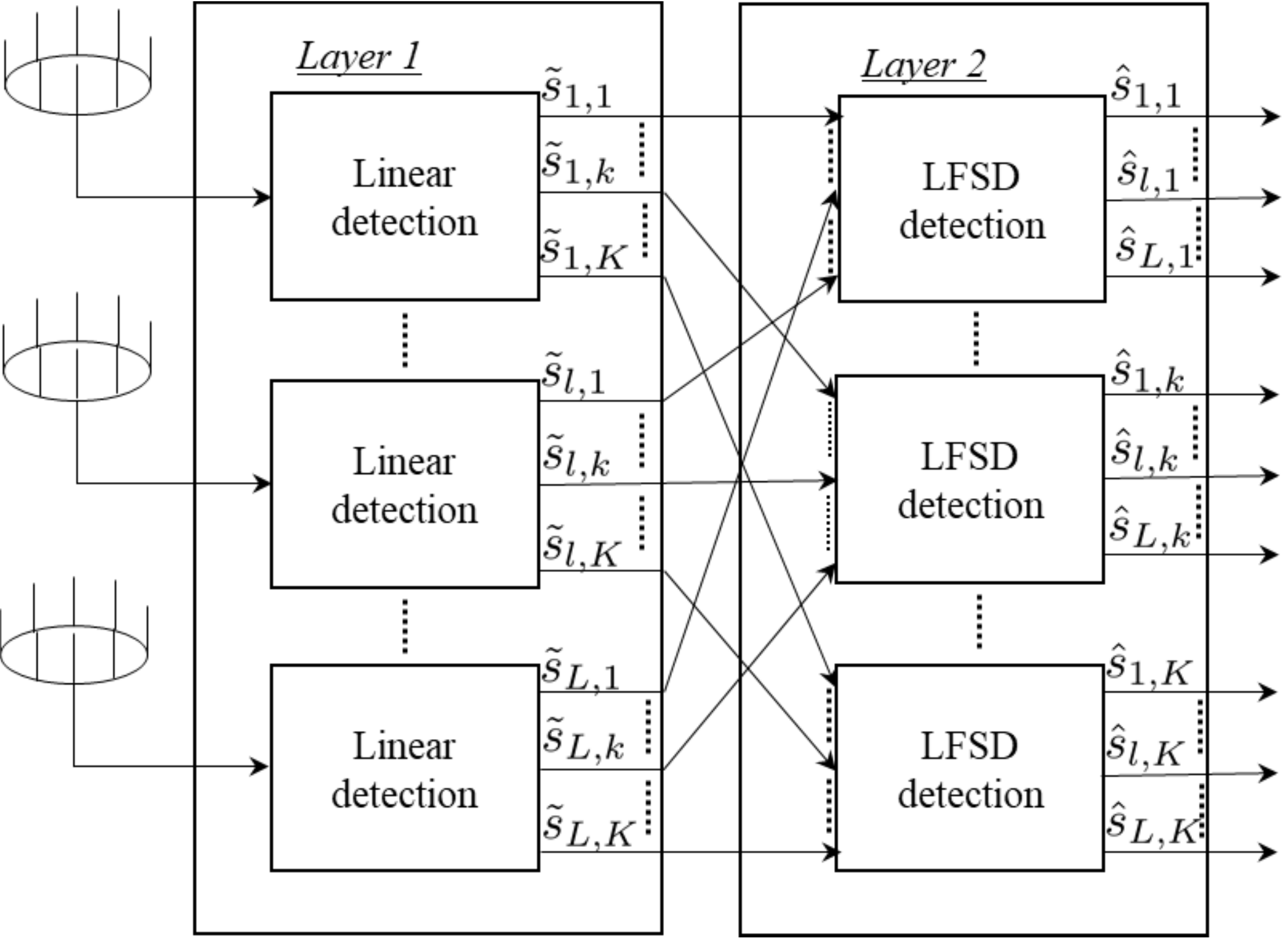} 
		\caption{Desired signals are detected by the two-layer decoding technique.}
		\label{fig:decoder}
	\end{figure}
	\subsection{Uplink Data Transmission}
	During the \emph{data phase}, it is assumed that user~$k$ in cell~$l'$ sends a zero-mean symbol~$s_{l',k}$ with variance $\Exp \{ |s_{l',k}|^2 \} = 1$. 
	The received signal at BS~$l$ is then
	\begin{equation} \label{eq:ReceivedSigBSl}
	\mathbf{y}_l = \sum_{l'=1}^L \sum_{k=1}^K \sqrt{p_{l',k}} \mathbf{h}_{l',k}^l s_{l',k} + \mathbf{n}_l,
	\end{equation}
	where $p_{l',k}$ denotes the transmit power of user~$k$ in cell~$l'$. 
	Based on the signals in \eqref{eq:ReceivedSigBSl}, the BSs decode the symbols with the two-layers-decoding technique that is illustrated in Fig.~\ref{fig:decoder}. The general idea of a two-layer decoding system is that each BS decodes the desired signals from its coverage area in the first layer. A central station is then collecting the decoded signals of all users that used the same pilot and jointly processes these signals in the second layer to suppress inter-cell interference using LSFD vectors. In detail, an estimate of the symbol from user~$k$ in cell~$l$ is obtained by local linear decoding in the first layer as
	\begin{equation}
	\tilde{s}_{l,k} = \mathbf{v}_{l,k}^\conjtr \mathbf{y}_l = \sum_{l'=1}^L \sum_{k'=1}^K \sqrt{p_{l',k'}} \mathbf{v}_{l,k}^\conjtr \mathbf{h}_{l',k'}^l s_{l',k'} + \mathbf{v}_{l,k}^\conjtr \mathbf{n}_l,
	\end{equation} 
	where $\mathbf{v}_{l,k}$ is the \emph{linear decoding vector}. The symbol estimate $\tilde{s}_{l,k}$ generally contains interference and, in Massive \abb{MIMO}, the pilot contamination from all the users with the same pilot sequence is particularly large. 
	To mitigate the pilot contamination, all the symbol estimates of the contaminating users are collected in a vector 
	\begin{equation}
	\tilde{\mathbf{s}}_{k} \triangleq [\tilde{s}_{1,k}, \tilde{s}_{2,k}, \ldots,  \tilde{s}_{L,k} ]^\tr \in \mathbb{C}^L.
	\end{equation}
	After the local decoding, a second layer of centralized decoding is performed on this vector using the \emph{\abb{LSFD} vector} $\mathbf{a}_{l,k} \triangleq [a_{l,k}^1, a_{l,k}^2, \ldots, a_{l,k}^L]^\tr \in \mathbb{C}^L$, where $a_{l,k}^{l'}$ is the \emph{\abb{LSFD} weight}. The final estimate of the data symbol from user~$k$ in cell~$l$ is then given by
	\begin{equation} \label{eq:919838377}
	\begin{split}
	& \hat{s}_{l,k} = \mathbf{a}_{l,k}^\conjtr \tilde{\mathbf{s}}_{k} = \sum_{l'=1}^L (a_{l,k}^{l'})^{\ast} \tilde{s}_{l',k}.
	\end{split}
	\end{equation}
	
	In the next section, we use the decoded signals $\hat{s}_{l,k}$ together with the asymptotic channel properties \cite[Section~2.5]{Bjornson2017bo} to derive a closed-from expression of an uplink \abb{SE}.
	
	\section{Uplink Performance Analysis} \label{Section:ULPerformance}
	This section first derives a general \abb{SE} expression for each user~$k$ in each cell~$l$ and a closed-form expression when using \abb{MRC}. 
	These expressions are then used to obtain the \abb{LSFD} vectors that maximize the \abb{SE}. 
	The use-and-then-forget capacity bounding technique \cite[Chapter~2.3.4]{Marzetta2016a}, \cite[Section~4.3]{bjornson2018a} allows us to compute a lower bound on the uplink ergodic capacity (i.e., an achievable \abb{SE}). We first rewrite \eqref{eq:919838377} as
	\begin{align} \label{eq:UseandForget}
	&\hat{s}_{l,k} = \vphantom{\sum_{\substack{k'=1 \\ k'\neq k}}^K} \sum_{l'=1}^L (a_{l,k}^{l'})^{\ast} \Exp\{\mathbf{v}_{l',k}^\conjtr \mathbf{h}_{l,k}^{l'} \} \sqrt{p_{l,k}}s_{l,k}\notag\\
	&+ \sum_{l'=1}^L (a_{l,k}^{l'})^{\ast} \sum_{\substack{l''=1 \\ l'' \neq l }}^L \Exp \{\mathbf{v}_{l',k}^\conjtr \mathbf{h}_{l'',k}^{l'} \} \sqrt{p_{l'',k}}s_{l'',k}\notag\\
	&+\sum_{l'=1}^L (a_{l,k}^{l'})^{\ast} \sum_{l''=1}^L \left(\mathbf{v}_{l',k}^\conjtr \mathbf{h}_{l'',k}^{l'} - \Exp \{\mathbf{v}_{l',k}^\conjtr \mathbf{h}_{l'',k}^{l'} \} \right) \sqrt{p_{l'',k}} s_{l'',k}\notag\\
	&+\sum_{l'=1}^L (a_{l,k}^{l'})^{\ast} \sum_{l''=1}^L \sum_{\substack{k'=1 \\ k'\neq k}}^K \sqrt{p_{l'',k'}} \mathbf{v}_{l',k}^\conjtr \mathbf{h}_{l'',k'}^{l'} s_{l'',k'}\notag\\
	&+\vphantom{\sum_{\substack{k'=1 \\ k'\neq k}}^K}  \sum_{l'=1}^L (a_{l,k}^{l'})^{\ast} \mathbf{v}_{l',k}^\conjtr \mathbf{n}_{l'},
	\end{align}
	then by considering the first part of \eqref{eq:UseandForget} as the desired signal from user~$k$ in cell~$l$ while the remaining is effective Gaussian noise, a lower bound on the uplink ergodic capacity is shown in Lemma~\ref{lemma:General_Rate}.
	
	\begin{lemma} \label{lemma:General_Rate}
		A lower bound on the uplink ergodic capacity is
		\begin{equation} \label{eq:GeneralSE}
		R_{l,k} = \max_{\{a_{l,k}^{l'}\}} \left( 1- \frac{\tau_\mathrm{p}}{\tau_\mathrm{c}} \right) \log_2 \left(1 + \mathrm{SINR}_{l,k} \right),
		\end{equation}
		where the effective SINR, denoted by $\mathrm{SINR}_{l,k}$, is 
		\begin{equation} \label{eq:GeneralSINR}
		\mathrm{SINR}_{l,k} = \Exp\{ |\mathtt{DS}_{l,k}|^2 \} / D_{l,k},
		\end{equation}
		where $D_{l,k}$ is given by
		\begin{multline}
		D_{l,k} = \Exp \{ |\mathtt{PC}_{l,k} |^2 \} + \Exp \{ |\mathtt{BU}_{l,k}|^2\}\\
		+\sum\limits_{l'=1}^{L} \sum\limits_{\substack{k'=1\\k'\neq k}}^K \Exp \{ |\mathtt{NI}_{l',k'}|^2\} + \Exp \{ |\mathtt{AN}_{l,k}|^2\}.
		\end{multline}
		Here $\mathtt{DS}_{l,k}, \mathtt{PC}_{l,k}, \mathtt{BU}_{l,k},\mathtt{NI}_{l',k'},$ and $\mathtt{AN}_{l,k}$ stand for the desired signal, the pilot contamination, the beamforming gain uncertainty, the non-coherent interference, and the additive noise, respectively, whose expectations are defined as
		\begin{align}
		\mathbb{E} \{ |\mathtt{DS}_{l,k }|^2 \} &\triangleq p_{l,k} \left| \sum_{l'=1}^L (a_{l,k}^{l'})^{\ast} \Exp \{\mathbf{v}_{l',k}^\conjtr \mathbf{h}_{l,k}^{l'} \}\right|^2 ,\\
		\mathbb{E} \{ |\mathtt{PC}_{l,k}|^2 \} &\triangleq  \sum_{\substack{l''=1 \\ l'' \neq l }}^L p_{l'',k} \left| \sum_{l'=1}^L (a_{l,k}^{l'})^{\ast}  \Exp \{\mathbf{v}_{l',k}^\conjtr \mathbf{h}_{l'',k}^{l'} \} \right|^2,
		\end{align}
		\begin{align}
		\mathbb{E} \{|\mathtt{BU}_{l,k}|^2 \} &\triangleq \sum_{l'=1}^L p_{l',k} \Exp \Bigg\{ \Bigg| \sum_{l''=1}^L (a_{l,k}^{l''})^{\ast} \Bigg( \mathbf{v}_{l'',k}^\conjtr \mathbf{h}_{l',k}^{l''} -  \notag \\ & \quad \Exp \{\mathbf{v}_{l'',k}^\conjtr \mathbf{h}_{l',k}^{l''} \} \Bigg) \Bigg|^2 \Bigg\},\\
		\mathbb{E} \{ |\mathtt{NI}_{l',k'}|^2 \} &\triangleq  p_{l',k'} \Exp \left\{ \left| \sum_{l''=1}^L (a_{l,k}^{l''})^{\ast}  \mathbf{v}_{l'',k}^\conjtr \mathbf{h}_{l',k'}^{l''} \right|^2 \right\} , \\
		\mathbb{E} \{ |\mathtt{AN}_{l,k}|^2 \} &\triangleq \Exp \left\{ \left| \sum_{l'=1}^L (a_{l,k}^{l'})^\ast  (\hat{\mathbf{h}}_{l',k}^{l'})^\conjtr \mathbf{n}_{l'} \right|^2 \right\}.
		\end{align}  
	\end{lemma}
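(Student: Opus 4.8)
The plan is to apply the \emph{use-and-then-forget} (\abb{UatF}) capacity bounding technique \cite[Chapter~2.3.4]{Marzetta2016a}, \cite[Section~4.3]{bjornson2018a} to the decomposition in \eqref{eq:UseandForget}. I would regard the final estimate $\hat{s}_{l,k}$ as the output of a deterministic single-user channel whose gain is the statistical average $\sum_{l'=1}^L (a_{l,k}^{l'})^{\ast} \Exp\{\mathbf{v}_{l',k}^\conjtr \mathbf{h}_{l,k}^{l'}\}$ multiplying $\sqrt{p_{l,k}}\, s_{l,k}$, while the four remaining sums in \eqref{eq:UseandForget} are collected into a single effective-noise term $w_{l,k}$. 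Once $w_{l,k}$ is shown to be zero-mean and uncorrelated with $s_{l,k}$, the standard worst-case-Gaussian argument shows that treating $w_{l,k}$ as independent Gaussian noise of variance $\Exp\{|w_{l,k}|^2\}$ yields the achievable per-symbol rate $\log_2(1 + \Exp\{|\mathtt{DS}_{l,k}|^2\}/\Exp\{|w_{l,k}|^2\})$; multiplying by the data fraction $(1 - \tau_\mathrm{p}/\tau_\mathrm{c})$ of the coherence block produces the pre-log factor in \eqref{eq:GeneralSE}.

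First I would establish that the effective noise is uncorrelated with the desired symbol. The pilot-contamination and non-coherent-interference terms carry the symbols $s_{l'',k}$ with $l'' \neq l$ and $s_{l'',k'}$ with $k' \neq k$, respectively; since all data symbols are zero-mean, unit-variance, and mutually independent and independent of $s_{l,k}$, these terms have zero cross-correlation with $\sqrt{p_{l,k}}\, s_{l,k}$. The additive-noise term is independent of all data symbols and is likewise uncorrelated. The only delicate case is the beamforming-gain-uncertainty term, whose $l''=l$ contribution is also proportional to $s_{l,k}$: here the factor $\mathbf{v}_{l',k}^\conjtr \mathbf{h}_{l,k}^{l'} - \Exp\{\mathbf{v}_{l',k}^\conjtr \mathbf{h}_{l,k}^{l'}\}$ is zero-mean and independent of $s_{l,k}$, so its product with the deterministic desired-signal gain still averages to zero. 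This is the step I expect to require the most care, because the desired symbol appears in both the signal term and the $\mathtt{BU}_{l,k}$ term, so the argument hinges on the centering of the fluctuation rather than on independence of distinct symbols.

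Next I would compute $\Exp\{|w_{l,k}|^2\}$ and show it equals $D_{l,k}$, i.e. that the cross-terms between $\mathtt{PC}_{l,k}$, $\mathtt{BU}_{l,k}$, $\mathtt{NI}_{l',k'}$, and $\mathtt{AN}_{l,k}$ vanish so that the variances simply add. Cross-terms between any two groups carrying distinct data symbols (or noise) vanish by the same zero-mean and independence properties used above, and the cross-term between the mean part and the centered $\mathtt{BU}_{l,k}$ part vanishes again by the centering argument. Within $\mathtt{BU}_{l,k}$ the contributions indexed by different cells are weighted by independent symbols $s_{l',k}$, which is why that term reduces to the single sum $\sum_{l'} p_{l',k}\Exp\{|\cdots|^2\}$ in its definition. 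Assembling the four surviving variances reproduces $D_{l,k}$ and hence $\mathrm{SINR}_{l,k}$ in \eqref{eq:GeneralSINR}.

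Finally, the outer maximization over $\{a_{l,k}^{l'}\}$ is immediate: the entire derivation holds for any fixed choice of \abb{LSFD} weights, which depend only on channel statistics and not on instantaneous realizations, so every choice yields an achievable \abb{SE} and one may retain the largest among them. This establishes \eqref{eq:GeneralSE}.
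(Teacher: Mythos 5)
Your proposal is correct and follows essentially the same route as the paper, which proves Lemma~\ref{lemma:General_Rate} by invoking the use-and-then-forget bound on the decomposition in \eqref{eq:UseandForget}, treating the first term as a deterministic effective channel and the rest as uncorrelated worst-case Gaussian noise. Your writeup simply fills in the details the paper leaves implicit---the zero-correlation check (including the delicate $l''=l$ part of $\mathtt{BU}_{l,k}$, which vanishes by centering of the fluctuation), the additivity of the four variances into $D_{l,k}$, and the validity of maximizing over the statistics-based weights $\{a_{l,k}^{l'}\}$---all of which are argued correctly.
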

	Note that the lower bound on the uplink ergodic capacity in Lemma~\ref{lemma:General_Rate} can be applied to any linear decoding method and any \abb{LSFD} design.
	
	To maximize the \abb{SE} of user~$k$ in cell~$l$ is equivalent to selecting the \abb{LSFD} vector that maximizes a \textit{Rayleigh quotient} as shown in the proof of the following theorem. 
	This is the first main contribution of this paper.
	\begin{theorem} \label{Theorem1v1}
		For a given set of pilot and data power coefficients, the \abb{SE} of user~$k$ in cell~$l$ is
		\begin{equation} \label{eq:OptimalRate}
		R_{l,k} = \left( 1- \frac{\tau_\mathrm{p}}{\tau_\mathrm{c}} \right) \log_2 \left(1 +  p_{l,k} \mathbf{b}_{l,k}^\conjtr \left( \sum_{i=1}^4 \mathbf{C}_{l,k}^{(i)}  \right)^{-1} \mathbf{b}_{l,k} \right),
		\end{equation}
		where the matrices $\mathbf{C}_{l,k}^{(1)},\mathbf{C}_{l,k}^{(2)}, \mathbf{C}_{l,k}^{(3)}, \mathbf{C}_{l,k}^{(4)} \in \mathbb{C}^{L \times L}$ and the vector $\mathbf{b}_{l,k}$ are defined as 
		\begin{align}
		\mathbf{C}_{l,k}^{(1)} &\triangleq \sum_{\substack{l'=1 \\ l' \neq l }}^L p_{l',k} \mathbf{b}_{l',k} \mathbf{b}_{l',k}^\conjtr, \\
		\mathbf{C}_{l,k}^{(2)} &\triangleq \sum_{l'=1}^L p_{l',k} \Exp \left\{ \tilde{\mathbf{b}}_{l',k}  \tilde{\mathbf{b}}_{l',k}^\conjtr \right\},
		\\
		\mathbf{C}_{l,k}^{(3)} &\triangleq \mathrm{diag} \left(\sum_{l'=1}^L \sum_{\substack{k'=1 \\ k' \neq k}}^K  p_{l',k'} \Exp \left\{\left|  \mathbf{v}_{1,k}^\conjtr \mathbf{h}_{l',k'}^{1} \right|^2  \right\}, \ldots, \right. \notag  \\ & \qquad \left. \sum_{l'=1}^L \sum_{\substack{k'=1 \\ k' \neq k}}^K  p_{l',k'} \Exp \left\{\left|  \mathbf{v}_{L,k}^\conjtr \mathbf{h}_{l',k'}^{L} \right|^2  \right\} \right),\\
		\mathbf{C}_{l,k}^{(4)} &\triangleq \mathrm{diag} \left( \sigma^2 \Exp \left\{ \| \mathbf{v}_{1,k} \|^2 \right\}, \ldots, \sigma^2  \Exp  \left\{ \| \mathbf{v}_{L,k} \|^2 \right\} \right),
		\end{align}
		and the vectors $\mathbf{b}_{l',k}, \tilde{\mathbf{b}}_{l',k} \in \mathbb{C}^L, \forall l' = 1,\ldots,L,$ are defined as
		\begin{align}
		\mathbf{b}_{l',k} &\triangleq \left[ 
		\Exp  \{ \mathbf{v}_{1,k}^H \mathbf{h}_{l',k}^1 \}, \ldots, \Exp  \{ \mathbf{v}_{L,k}^H \mathbf{h}_{l',k}^L \} \right]^\tr, \label{eq:bikv1}\\
		\tilde{\mathbf{b}}_{l',k} &\triangleq \left[ \mathbf{v}_{1,k}^\conjtr \mathbf{h}_{l',k}^{1} ,  \ldots, \mathbf{v}_{L,k}^\conjtr \mathbf{h}_{l',k}^{L} \right]^\tr - \mathbf{b}_{l',k}.
		\end{align}
		In order to attain this \abb{SE}, the \abb{LSFD} vector is formulated as
		\begin{equation} \label{eq:LSFDVector}
		\mathbf{a}_{l,k} = \left( \sum_{i=1}^4 \mathbf{C}_{l,k}^{(i)} \right)^{-1} \mathbf{b}_{l,k}, \quad\forall l,k.
		\end{equation}
	\end{theorem}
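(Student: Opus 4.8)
The plan is to start from the SINR in Lemma~\ref{lemma:General_Rate} and show that, for fixed power coefficients and first-layer decoders, it is a generalized Rayleigh quotient in the LSFD vector $\mathbf{a}_{l,k}=[a_{l,k}^1,\ldots,a_{l,k}^L]^\tr$, whose maximization is classical. The first step is to rewrite each of the five expectation terms as a Hermitian quadratic form in $\mathbf{a}_{l,k}$. The key observation is that in every term the inner sum over the receiving-cell index pairs the conjugated weight $(a_{l,k}^{l'})^\ast$ with the $l'$-th entry of one of the vectors $\mathbf{b}_{\cdot,k}$ or $\tilde{\mathbf{b}}_{\cdot,k}$; for instance $\sum_{l'} (a_{l,k}^{l'})^\ast \Exp\{\mathbf{v}_{l',k}^\conjtr \mathbf{h}_{l'',k}^{l'}\}=\mathbf{a}_{l,k}^\conjtr \mathbf{b}_{l'',k}$ by the definition \eqref{eq:bikv1}. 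Using this identification I would obtain $\Exp\{|\mathtt{DS}_{l,k}|^2\}=p_{l,k}\,\mathbf{a}_{l,k}^\conjtr \mathbf{b}_{l,k}\mathbf{b}_{l,k}^\conjtr \mathbf{a}_{l,k}$ for the numerator, and $\Exp\{|\mathtt{PC}_{l,k}|^2\}=\mathbf{a}_{l,k}^\conjtr \mathbf{C}_{l,k}^{(1)}\mathbf{a}_{l,k}$ and $\Exp\{|\mathtt{BU}_{l,k}|^2\}=\mathbf{a}_{l,k}^\conjtr \mathbf{C}_{l,k}^{(2)}\mathbf{a}_{l,k}$ directly from the outer-product definitions of $\mathbf{C}_{l,k}^{(1)}$ and $\mathbf{C}_{l,k}^{(2)}$.

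The second step is to treat the noise and non-coherent-interference terms, where the diagonal structure of $\mathbf{C}_{l,k}^{(3)}$ and $\mathbf{C}_{l,k}^{(4)}$ must be justified. I would expand $\Exp\{|\mathtt{AN}_{l,k}|^2\}$ and $\sum_{l',k'\neq k}\Exp\{|\mathtt{NI}_{l',k'}|^2\}$ into double sums over receiving base stations and argue that the cross terms (indices $l_1''\neq l_2''$) vanish: the noise vectors $\mathbf{n}_{l''}$ are independent and zero mean across base stations, and for an interfering user $k'\neq k$ the channels $\mathbf{h}_{l',k'}^{l_1''}$ and $\mathbf{h}_{l',k'}^{l_2''}$ to distinct base stations are independent, zero mean, and independent of the local decoders built from pilot~$k$. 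Hence only the $l_1''=l_2''$ terms survive, yielding the stated diagonal matrices, so that $\Exp\{|\mathtt{AN}_{l,k}|^2\}+\sum_{l',k'\neq k}\Exp\{|\mathtt{NI}_{l',k'}|^2\}=\mathbf{a}_{l,k}^\conjtr(\mathbf{C}_{l,k}^{(3)}+\mathbf{C}_{l,k}^{(4)})\mathbf{a}_{l,k}$, using $\Exp\{|\mathbf{v}^\conjtr \mathbf{n}|^2\}=\sigma^2\Exp\{\|\mathbf{v}\|^2\}$ for the noise scaling.

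Collecting the four interference/noise quadratic forms into $\mathbf{C}_{l,k}\triangleq\sum_{i=1}^4 \mathbf{C}_{l,k}^{(i)}$, the effective SINR in \eqref{eq:GeneralSINR} becomes the ratio $\mathrm{SINR}_{l,k}=p_{l,k}\,(\mathbf{a}_{l,k}^\conjtr \mathbf{b}_{l,k}\mathbf{b}_{l,k}^\conjtr \mathbf{a}_{l,k})/(\mathbf{a}_{l,k}^\conjtr \mathbf{C}_{l,k}\mathbf{a}_{l,k})$, i.e., a generalized Rayleigh quotient with a rank-one numerator. The final step is the maximization over $\mathbf{a}_{l,k}$: substituting $\mathbf{c}=\mathbf{C}_{l,k}^{1/2}\mathbf{a}_{l,k}$ and applying the Cauchy--Schwarz inequality to $|(\mathbf{C}_{l,k}^{-1/2}\mathbf{b}_{l,k})^\conjtr \mathbf{c}|^2/\|\mathbf{c}\|^2$ gives the maximum value $p_{l,k}\,\mathbf{b}_{l,k}^\conjtr \mathbf{C}_{l,k}^{-1}\mathbf{b}_{l,k}$, attained when $\mathbf{c}\propto\mathbf{C}_{l,k}^{-1/2}\mathbf{b}_{l,k}$, equivalently $\mathbf{a}_{l,k}=\mathbf{C}_{l,k}^{-1}\mathbf{b}_{l,k}$ (any nonzero scaling is immaterial since the quotient is scale invariant). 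This produces \eqref{eq:OptimalRate} and \eqref{eq:LSFDVector}.

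I expect the main obstacle to be the careful bookkeeping of the first step and the independence argument of the second step, rather than the optimization itself: one must keep straight that the summation index contracted with the LSFD weights is the receiving-base-station index, and the vanishing of the off-diagonal $\mathtt{NI}$/$\mathtt{AN}$ terms relies on invoking independence across base stations (and, for $\mathtt{NI}$, across pilot groups) for \emph{arbitrary} first-layer decoders $\mathbf{v}_{l,k}$. Once every term is in quadratic form, the Rayleigh-quotient maximization is standard and only requires that $\mathbf{C}_{l,k}$ be positive definite, which holds because $\mathbf{C}_{l,k}^{(4)}$ is a strictly positive diagonal matrix whenever the decoding vectors are nonzero.
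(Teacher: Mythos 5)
Your proposal is correct and follows essentially the same route as the paper's proof: rewrite each term of the SINR in Lemma~\ref{lemma:General_Rate} as a Hermitian quadratic form in $\mathbf{a}_{l,k}$, recognize the generalized Rayleigh quotient with rank-one numerator, and conclude that the maximizer is $\mathbf{a}_{l,k} = \bigl( \sum_{i=1}^4 \mathbf{C}_{l,k}^{(i)} \bigr)^{-1} \mathbf{b}_{l,k}$ with optimal value $p_{l,k}\mathbf{b}_{l,k}^\conjtr \bigl( \sum_{i=1}^4 \mathbf{C}_{l,k}^{(i)} \bigr)^{-1} \mathbf{b}_{l,k}$. The only cosmetic differences are that you carry out the Cauchy--Schwarz argument explicitly where the paper invokes \cite[Lemma~B.10]{Bjornson2017bo}, and that you spell out the cross-BS independence argument behind the diagonal structure of $\mathbf{C}_{l,k}^{(3)}$ and $\mathbf{C}_{l,k}^{(4)}$, which the paper states without elaboration.
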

	\begin{proof}
		The proof is available in Appendix~\ref{Appendix: Proof-Theorem-v1}.
	\end{proof}
	We stress that the \abb{LSFD} vector in \eqref{eq:LSFDVector} is designed to maximize the \abb{SE} in \eqref{eq:OptimalRate} for every user in the network for a given data and pilot power and a given first-layer decoder. Note that Theorem~\ref{Theorem1v1} can be applied to practical correlated Rayleigh fading channels with either \abb{MMSE} or \abb{EW-MMSE} estimation and any conceivable choice of first-layer decoder. This stands in contrast to the previous work \cite{nayebi2016,adhikary2017a} that only considered uncorrelated Rayleigh fading channels, which are unlikely to occur in practice, and particular linear combining methods that were selected to obtained closed-form expressions.
	Theorem~\ref{Theorem1v1} explicitly reveals the influence that mutual interference and noise have on the \abb{SE} when utilizing the optimal \abb{LFSD} vector given in \eqref{eq:LSFDVector}:  $\mathbf{C}_{l,k}^{(1)}$ determines the amount of remaining pilot contamination from the $(L-1)$ users using the same pilot sequence as user~$k$ in cell~$l$. The beamforming gain uncertainty is represented by $\mathbf{C}_{l,k}^{(2)}$, while $\mathbf{C}_{l,k}^{(3)}$  is the noncoherent mutual interference from the remaining users and $\mathbf{C}_{l,k}^{(4)}$ represent the additive noise.
	
	The following theorem states a closed-form expression of the \abb{SE} for the case of \abb{MRC}, i.e., $\mathbf{v}_{l,k} = \hat{\mathbf{h}}_{l,k}^l$. 
	This is the second main contribution of this paper.
	\begin{theorem} \label{Theorem1}
		When \abb{MRC} is used, the \abb{SE} in \eqref{eq:GeneralSE} of user~$k$ in cell~$l$ is given by
		\begin{equation} \label{eq: ClosedForm_Rate_MMSE}
		R_{l,k} = \left( 1- \frac{\tau_{\text{p}}}{\tau_c} \right) \log_2 \left(1 + \mathrm{SINR}_{l,k} \right),
		\end{equation}
		where the SINR value is given in \eqref{eq:MMSE_SINR}.
		\begin{figure*}
			\begin{equation} \label{eq:MMSE_SINR}
			\mathrm{SINR}_{l,k} = \frac{  p_{l,k} \left| \sum_{l'=1}^L (a_{l,k}^{l'})^{\ast} b_{l,k}^{l'} \right|^2  }{  \sum_{\substack{l'=1 \\ l' \neq l }}^L p_{l',k} \left| \sum_{l''=1}^L (a_{l,k}^{l''})^{\ast} b_{l',k}^{l''} \right|^2 + \sum_{l'=1}^L \sum_{k'=1}^K  \sum_{l''=1}^L  p_{l',k'} |a_{l,k}^{l''}|^2 c_{l'',k}^{l',k'}  + \sum_{l'=1}^L |a_{l,k}^{l'}|^2 d_{l',k} }.
			\end{equation} \hrule
		\end{figure*}
		The values $b_{l',k}^{l''}, c_{l'',k}^{l',k'},$ and $d_{l',k}$ are different depending on the channel estimation technique. \abb{MMSE} estimation results in
		\begin{align}
		b_{l',k}^{l''} &= \sqrt{\tau_\mathrm{p} \hat{p}_{l',k} \hat{p}_{l'',k}} \mathrm{tr} \left( \pmb{\Psi}_{l'',k}^{-1} \mathbf{R}_{l'',k}^{l''} \mathbf{R}_{l',k}^{l''} \right),\label{eq:8926616}\\
		c_{l'',k}^{l',k'} &= \hat{p}_{l'',k} \mathrm{tr} \left( \mathbf{R}_{l'',k}^{l''} \pmb{\Psi}_{l'',k}^{-1} \mathbf{R}_{l'',k}^{l''} \mathbf{R}_{l',k'}^{l''} \right),\\
		d_{l',k} &= \sigma^2 \hat{p}_{l',k} \mathrm{tr} \left( \pmb{\Psi}_{l',k}^{-1} \mathbf{R}_{l',k}^{l'} \mathbf{R}_{l',k}^{l'} \right),
		\end{align}
		whereas \abb{EW-MMSE} results in
		\begin{align}
		b_{l',k}^{l''} &= \sqrt{\tau_\mathrm{p}} \varrho_{l'',k}^{l''} \varrho_{l',k}^{l''} \mathrm{tr} \left( \pmb{\Psi}_{l'',k} \right),\\
		c_{l'',k}^{l',k'} &=  ( \varrho_{l'',k}^{l''})^2 \mathrm{tr} \left( \mathbf{R}_{l',k'}^{l''} \pmb{\Psi}_{l'',k} \right),\\
		d_{l',k} &= ( \varrho_{l',k}^{l'})^2 \sigma^2 \mathrm{tr} \left( \pmb{\Psi}_{l',k} \right).
		\end{align}
	\end{theorem}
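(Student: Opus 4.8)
The plan is to specialize the general SINR of Theorem~\ref{Theorem1v1} to $\mathbf{v}_{l,k}=\hat{\mathbf{h}}_{l,k}^l$ and to evaluate in closed form every expectation appearing in the five terms $\mathtt{DS}$, $\mathtt{PC}$, $\mathtt{BU}$, $\mathtt{NI}$, $\mathtt{AN}$ of Lemma~\ref{lemma:General_Rate}, using the estimator statistics of Lemma~\ref{lemma:StandardMMSE} for \abb{MMSE} and Lemma~\ref{Lemma:E-MMSE} for \abb{EW-MMSE}. The whole computation rests on one structural observation: with \abb{MRC} the combiner $\hat{\mathbf{h}}_{l',k}^{l'}$ is a linear image of the processed pilot $\tilde{\mathbf{y}}_{l',k}$ in \eqref{eq:processed-pilot}, so combiners associated with distinct \abb{BS}s are mutually independent (they are built from independent pilot observations), and a combiner is correlated with a data channel $\mathbf{h}_{l'',k'}^{l'}$ only when the two share the pilot index, i.e. $k'=k$.

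First I would compute the coherent means $b_{l',k}^{l''}=\Exp\{(\hat{\mathbf{h}}_{l'',k}^{l''})^\conjtr\mathbf{h}_{l',k}^{l''}\}$. Writing the estimate as a linear map of $\tilde{\mathbf{y}}_{l'',k}$ and isolating the single term of $\tilde{\mathbf{y}}_{l'',k}$ that correlates with $\mathbf{h}_{l',k}^{l''}$ reduces $b_{l',k}^{l''}$ to a trace of a product of correlation matrices with $\pmb{\Psi}_{l'',k}^{-1}$; for \abb{EW-MMSE} the step is even shorter, since Corollary~\ref{Corollary1} makes $\hat{\mathbf{h}}_{l',k}^{l''}$ a scalar multiple of $\hat{\mathbf{h}}_{l'',k}^{l''}$. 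These means are nonzero only for same-pilot channels, so they feed the desired-signal numerator ($l'=l$) and the pilot-contamination term ($l'\neq l$), reproducing the two coherent $|\sum_{l''}(a_{l,k}^{l''})^{\ast}b_{\cdot,k}^{l''}|^2$ structures in \eqref{eq:MMSE_SINR}. The noise coefficient follows directly as $d_{l',k}=\sigma^2\Exp\{\|\hat{\mathbf{h}}_{l',k}^{l'}\|^2\}=\sigma^2\,\mathrm{tr}(\mathrm{Cov}(\hat{\mathbf{h}}_{l',k}^{l'}))$, and cross-\abb{BS} independence of the data noise removes all off-diagonal \abb{LSFD} terms, leaving the diagonal $\sum_{l'}|a_{l,k}^{l'}|^2 d_{l',k}$.

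The key step is the second-moment coefficient $c_{l'',k}^{l',k'}=\Exp\{|(\hat{\mathbf{h}}_{l'',k}^{l''})^\conjtr\mathbf{h}_{l',k'}^{l''}|^2\}$, which I would split by pilot index. When $k'\neq k$ the combiner and the channel are independent zero-mean Gaussians, so $c_{l'',k}^{l',k'}=\mathrm{tr}(\mathbf{R}_{l',k'}^{l''}\,\mathrm{Cov}(\hat{\mathbf{h}}_{l'',k}^{l''}))$. When $k'=k$ the two vectors are correlated jointly circular Gaussians, and the circular-Gaussian fourth-moment identity gives $\Exp\{|(\hat{\mathbf{h}}_{l'',k}^{l''})^\conjtr\mathbf{h}_{l',k}^{l''}|^2\}=\mathrm{tr}(\mathbf{R}_{l',k}^{l''}\,\mathrm{Cov}(\hat{\mathbf{h}}_{l'',k}^{l''}))+|b_{l',k}^{l''}|^2$; the extra $|b_{l',k}^{l''}|^2$ is exactly the coherent square already carried by $\mathtt{DS}$ and $\mathtt{PC}$, so after removing the mean the residual fluctuation (the $\mathtt{BU}$ contribution) is again $\mathrm{tr}(\mathbf{R}_{l',k}^{l''}\,\mathrm{Cov}(\hat{\mathbf{h}}_{l'',k}^{l''}))$. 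Hence the same trace serves both $\mathtt{NI}$ ($k'\neq k$) and $\mathtt{BU}$ ($k'=k$), which is precisely what lets $\mathtt{BU}$ and $\mathtt{NI}$ merge into the single summation $\sum_{k'=1}^{K}$ over all users in \eqref{eq:MMSE_SINR}. Off-diagonal \abb{LSFD} cross terms also vanish here: for $k'\neq k$ by full cross-\abb{BS} independence, and for $k'=k$ because the cross-\abb{BS} expectation factorizes into $b_{l',k}^{l_1}(b_{l',k}^{l_2})^{\ast}$, which is absorbed into the coherent $|\sum_{l''}(a_{l,k}^{l''})^{\ast}b_{l',k}^{l''}|^2$ of $\mathtt{PC}$, leaving only the diagonal $|a_{l,k}^{l''}|^2 c_{l'',k}^{l',k'}$ in the merged term.

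I expect the main obstacle to be the bookkeeping of the $k'=k$ case: one must carry the correlated fourth-moment identity, correctly attribute its $|b_{l',k}^{l''}|^2$ part to the coherent terms, and simultaneously verify that the off-diagonal $b\,b^{\ast}$ cross terms of the same-pilot fluctuations recombine with $\mathtt{PC}$ instead of surviving as genuine cross terms. Once these cancellations are confirmed, the remainder is routine: substitute $\mathrm{Cov}(\hat{\mathbf{h}}_{l',k}^{l'})=\tau_{\text{p}}\hat{p}_{l',k}\mathbf{R}_{l',k}^{l'}\pmb{\Psi}_{l',k}^{-1}\mathbf{R}_{l',k}^{l'}$ from Lemma~\ref{lemma:StandardMMSE} to obtain the stated \abb{MMSE} forms of $b_{l',k}^{l''}$, $c_{l'',k}^{l',k'}$, $d_{l',k}$, and substitute $\mathrm{Cov}(\hat{\mathbf{h}}_{l',k}^{l'})=(\varrho_{l',k}^{l'})^2\tau_{\text{p}}\pmb{\Psi}_{l',k}$ from Lemma~\ref{Lemma:E-MMSE}, using the equal-diagonal assumption $\mathrm{tr}(\mathbf{R})=M\beta$ to evaluate the traces, for the \abb{EW-MMSE} forms. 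Collecting $\mathtt{DS}$ in the numerator and $\mathtt{PC}+\mathtt{BU}+\mathtt{NI}+\mathtt{AN}$ in the denominator then yields \eqref{eq:MMSE_SINR}.
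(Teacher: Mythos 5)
Your proposal is correct, and at the structural level it mirrors the paper's own proof: specialize Lemma~\ref{lemma:General_Rate} to $\mathbf{v}_{l,k}=\hat{\mathbf{h}}_{l,k}^l$, compute the means that feed $\mathtt{DS}$ and $\mathtt{PC}$ from the linear dependence of the estimates on $\tilde{\mathbf{y}}_{l,k}$, kill all cross-\abb{BS} and cross-pilot terms by independence, and evaluate the residual second moments to obtain the $c$ and $d$ coefficients. The genuine difference is the tool used for the same-pilot fourth moments. The paper treats the two estimators asymmetrically: for \abb{MMSE} (Appendix~\ref{Appendix: Proof-Theorem-1}) it decomposes $\mathbf{h}_{l',k}^{l''}=\hat{\mathbf{h}}_{l',k}^{l''}+\mathbf{e}_{l',k}^{l''}$ with estimate and error independent, and applies the quadratic-form identity of Lemma~\ref{Lemma:Supplementary} to the estimate part; for \abb{EW-MMSE} (Appendix~\ref{Appendix: Proof-Theorem-2}), where estimate and error are \emph{not} independent, it instead leans on the proportionality relation of Corollary~\ref{Corollary1} to reduce every same-pilot coherent term to moments of $\|\hat{\mathbf{h}}_{l'',k}^{l''}\|^2$ and of $(\hat{\mathbf{h}}_{l'',k}^{l''})^\conjtr\mathbf{e}_{l',k}^{l''}$. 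Your route replaces both treatments by the single Isserlis/Wick identity
\begin{equation}
\Exp \left\{ \left|\mathbf{u}^\conjtr \mathbf{h}\right|^2 \right\} = \left| \Exp\left\{ \mathbf{u}^\conjtr \mathbf{h}\right\} \right|^2 + \mathrm{tr}\left( \mathrm{Cov}(\mathbf{u}) \, \mathrm{Cov}(\mathbf{h}) \right),
\end{equation}
which is valid for any jointly circularly-symmetric zero-mean Gaussian pair \emph{irrespective of their cross-covariance} (Lemma~\ref{Lemma:Supplementary} is the special case $\mathbf{h}=\mathbf{M}\mathbf{u}$), and which I verified reproduces the paper's $\mathcal{I}_1=\tau_\mathrm{p}((b_{l',k}^{l''})^2+c_{l'',k}^{l',k})$. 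What your approach buys is uniformity: the awkward non-independence of the \abb{EW-MMSE} estimate and error never has to be confronted, and the identity makes transparent both why $\mathtt{BU}$ and $\mathtt{NI}$ merge into the single sum over all $k'$ in \eqref{eq:MMSE_SINR} and why the cross-\abb{BS} $b\,b^{\ast}$ products recombine with the coherent squares of $\mathtt{DS}$ and $\mathtt{PC}$ rather than surviving as separate terms. What the paper's route buys is that it only invokes a published lemma, whereas you must prove or cite your identity (a short complex-Isserlis computation). Two bookkeeping remarks: your $b$, $c$, $d$ differ from the theorem's by common factors $\sqrt{\tau_\mathrm{p}}$ and $\tau_\mathrm{p}$ that cancel in the SINR, exactly as in the paper's appendices; and, as you note, the equal-diagonal assumption $\mathrm{tr}(\mathbf{R}_{l',k}^{l''})=M\beta_{l',k}^{l''}$ is what converts the \abb{EW-MMSE} traces into the stated $\mathrm{tr}(\pmb{\Psi}_{l'',k})$ form.
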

	\begin{proof}
		The proofs consist of computing the moments of complex Gaussian distributions.  They are available in Appendix~\ref{Appendix: Proof-Theorem-1} and Appendix~\ref{Appendix: Proof-Theorem-2} for \abb{MMSE} and \abb{EW-MMSE} estimation, respectively.
	\end{proof}
	
	Theorem~\ref{Theorem1} describes the exact impact of the spatial correlation of the channel on the system performance through the coefficients $b_{l',k}^{l''}, c_{l'',k}^{l',k'},$ and $d_{l',k}$.  
	It is seen that the numerator of \eqref{eq:MMSE_SINR} grows as the square of the number of antennas, $M^2$, since the trace in \eqref{eq:8926616} is the sum of $M$ terms. This gain comes from the coherent combination of the signals from the $M$ antennas.  It can also be seen from Theorem~\ref{Theorem1} that the pilot contamination in \eqref{eq:919838377} combines coherently, i.e., its variance---the first term in the denominator that contains $b^{l''}_{l,k}$---grows as $M^2$.  The other terms in the denominator represent the impact of non-coherent interference and Gaussian noise, respectively.  These two terms only grow as $M$. Since the interference terms contain products of correlation matrices of different users, the interference is smaller between users that have very different spatial correlation characteristics \cite{Bjornson2017bo}.
	
	The following corollary gives the optimal \abb{LSFD} vector $\mathbf{a}_{l,k}$ that maximizes the \abb{SE} of every user in the network for a given set of pilot and data powers, 
	which is expected to work well when each \abb{BS} is equipped with a practical number of antennas. 
	\begin{corollary} \label{corollary:Opt_LSFD}
		For a given set of data and pilot powers, by using \abb{MRC} and \abb{LSFD}, the \abb{SE} in Theorem~\ref{Theorem1} is given in the closed form as
		\begin{equation} \label{eq:MaxRateMRC}
		R_{l,k} = \left( 1- \frac{\tau_\mathrm{p}}{\tau_\mathrm{c}} \right) \log_2 \left(1 +  p_{l,k}\mathbf{b}_{l,k}^\conjtr  \mathbf{C}_{l,k}^{-1} \mathbf{b}_{l,k} \right)
		\end{equation}
		where $\mathbf{C}_{l,k}\in \mathbb{C}^{L \times L}$ and $\mathbf{b}_{l,k} \in \mathbb{C}^{L}$ are defined as
		\begin{align}
		\mathbf{C}_{l,k} &\triangleq \sum_{ \substack{l'=1\\l' \neq l}}^L  p_{l',k} \mathbf{b}_{l',k} \mathbf{b}_{l',k}^\conjtr + \mathrm{diag} \left( \sum_{l'=1}^L \sum_{k'=1}^K p_{l',k'} c_{1,k}^{l',k'} + d_{1,k}, \right. \notag \\ 
		& \quad \left. \ldots, \sum_{l'=1}^L \sum_{k'=1}^K p_{l',k'} c_{L,k}^{l',k'} + d_{L,k} \right), \label{eq:Clk}\\
		\mathbf{b}_{l',k} &\triangleq [b_{l',k}^1, \ldots, b_{l',k}^L]^\tr. \label{eq:bik}
		\end{align}
		The \abb{SE} in \eqref{eq:MaxRateMRC} is obtained by using \abb{LSFD} vector defined as
		\begin{equation}
		\mathbf{a}_{l,k} = \mathbf{C}_{l,k}^{-1} \mathbf{b}_{l,k}.
		\end{equation}
	\end{corollary}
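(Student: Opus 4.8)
The plan is to recognize that the SINR in \eqref{eq:MMSE_SINR} from Theorem~\ref{Theorem1} is an explicit scalar function of the LSFD weights $\{a_{l,k}^{l'}\}$, and to recast it as a generalized Rayleigh quotient in the vector variable $\mathbf{a}_{l,k}$. First I would collect the weights into $\mathbf{a}_{l,k} = [a_{l,k}^1, \ldots, a_{l,k}^L]^\tr$ and observe that each inner sum $\sum_{l'} (a_{l,k}^{l'})^\ast b_{\cdot,k}^{l'}$ equals an inner product $\mathbf{a}_{l,k}^\conjtr \mathbf{b}_{\cdot,k}$ with the vectors defined in \eqref{eq:bik}. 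The numerator then becomes $p_{l,k} |\mathbf{a}_{l,k}^\conjtr \mathbf{b}_{l,k}|^2 = p_{l,k} \mathbf{a}_{l,k}^\conjtr \mathbf{b}_{l,k} \mathbf{b}_{l,k}^\conjtr \mathbf{a}_{l,k}$.

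Next I would rewrite the three terms of the denominator as a single Hermitian quadratic form $\mathbf{a}_{l,k}^\conjtr \mathbf{C}_{l,k} \mathbf{a}_{l,k}$. The pilot-contamination term $\sum_{l' \neq l} p_{l',k} |\mathbf{a}_{l,k}^\conjtr \mathbf{b}_{l',k}|^2$ contributes the rank-$(L-1)$ piece $\sum_{l' \neq l} p_{l',k} \mathbf{b}_{l',k} \mathbf{b}_{l',k}^\conjtr$. The remaining two terms each carry a factor $|a_{l,k}^{l''}|^2$, so they assemble into the diagonal matrix whose $l''$-th entry is $\sum_{l'}\sum_{k'} p_{l',k'} c_{l'',k}^{l',k'} + d_{l'',k}$. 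Adding these reproduces exactly $\mathbf{C}_{l,k}$ in \eqref{eq:Clk}, so that $\mathrm{SINR}_{l,k} = p_{l,k} (\mathbf{a}_{l,k}^\conjtr \mathbf{b}_{l,k} \mathbf{b}_{l,k}^\conjtr \mathbf{a}_{l,k}) / (\mathbf{a}_{l,k}^\conjtr \mathbf{C}_{l,k} \mathbf{a}_{l,k})$.

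Finally I would maximize this generalized Rayleigh quotient over $\mathbf{a}_{l,k}$. Since $\mathbf{C}_{l,k}$ is Hermitian positive definite --- the diagonal noise contributions $d_{l'',k}$ keep it strictly positive --- the standard result gives $\max_{\mathbf{a}_{l,k}} (\mathbf{a}_{l,k}^\conjtr \mathbf{b}_{l,k} \mathbf{b}_{l,k}^\conjtr \mathbf{a}_{l,k}) / (\mathbf{a}_{l,k}^\conjtr \mathbf{C}_{l,k} \mathbf{a}_{l,k}) = \mathbf{b}_{l,k}^\conjtr \mathbf{C}_{l,k}^{-1} \mathbf{b}_{l,k}$, attained at $\mathbf{a}_{l,k} = \mathbf{C}_{l,k}^{-1} \mathbf{b}_{l,k}$ (the quotient being invariant to scaling of $\mathbf{a}_{l,k}$). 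This is the same argument already invoked in the proof of Theorem~\ref{Theorem1v1}. Substituting the optimal value into \eqref{eq: ClosedForm_Rate_MMSE} yields \eqref{eq:MaxRateMRC} together with the stated optimal LSFD vector.

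The main obstacle will be the careful bookkeeping in the second step: one must verify that the two non-pilot-contamination terms genuinely depend on $\mathbf{a}_{l,k}$ only through $|a_{l,k}^{l''}|^2$ (hence produce a purely diagonal matrix with no cross-terms), and that the index over which the diagonal is built matches the component index of $\mathbf{a}_{l,k}$. Once the quadratic-form identification is confirmed, the remainder is a direct application of the Rayleigh-quotient maximization.
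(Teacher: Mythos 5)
Your proposal is correct and follows essentially the same route as the paper: the paper treats this corollary as the MRC special case of Theorem~\ref{Theorem1v1}, whose proof (Appendix~\ref{Appendix: Proof-Theorem-v1}) likewise rewrites the SINR as a generalized Rayleigh quotient $p_{l,k}|\mathbf{a}_{l,k}^\conjtr\mathbf{b}_{l,k}|^2/(\mathbf{a}_{l,k}^\conjtr\mathbf{C}_{l,k}\mathbf{a}_{l,k})$ and invokes the standard maximization result (Lemma~B.10 of the cited Massive MIMO book) to obtain $\mathbf{a}_{l,k}=\mathbf{C}_{l,k}^{-1}\mathbf{b}_{l,k}$. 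Your bookkeeping of the quadratic forms, including the observation that the interference and noise terms enter only through $|a_{l,k}^{l''}|^2$ and hence form the diagonal part of $\mathbf{C}_{l,k}$, matches the paper's construction exactly.
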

	Even though Corollary~\ref{corollary:Opt_LSFD} is a special case of Theorem~\ref{Theorem1v1} when \abb{MRC} is used, its contributions  are two-fold: The \abb{LSFD} vector $\mathbf{a}_{l,k}$ is  computed in the closed form which is independent of the small-scale fading, so it is easy to compute and store. Moreover, this \abb{LSFD} vector is the generalization of the vector given in \cite{adhikary2017a} to the larger class of correlated Rayleigh fading channels.
	
	\section{Data Power Control and \abb{LFSD} Design for Sum \abb{SE} Optimization} \label{Section:SumRateOpt}
	In this section, how to choose the powers $\{p_{l,k}\}$ (power control) and the \abb{LSFD} vector to maximize the sum \abb{SE} is investigated.  
	The sum \abb{SE} maximization problem for a multi-cell Massive \abb{MIMO} system is first formulated based on results from previous sections.  
	Next, an iterative algorithm based on solving a series of convex optimization problems is proposed to efficiently find a stationary point.
	
	\subsection{Problem Formulation}
	We consider sum \abb{SE} maximization:
	\begin{equation} \label{Problem: Sumrate}
	\begin{aligned} 
	& \underset{ \{ p_{l,k} \geq 0 \}, \{ \mathbf{a}_{l,k} \} }{\textrm{maximize}}
	&&  \sum_{l=1}^L \sum_{k=1}^K R_{l,k} \\
	& \mathrm{subject\,\,to}
	& & p_{l,k} \leq P_{\mathrm{max},l,k} \quad \forall l,k.
	\end{aligned}
	\end{equation}
	Using the rate \eqref{eq: ClosedForm_Rate_MMSE} in \eqref{Problem: Sumrate}, and removing the constant pre-log factor, we obtain the equivalent formulation
	\begin{equation} \label{Problem: Sumratev1}
	\begin{aligned}
	& \underset{ \{ p_{l,k} \geq 0 \},  \{ \mathbf{a}_{l,k} \}  }{\textrm{maximize}}
	&&  \sum_{l=1}^L \sum_{k=1}^K \log_2 \left(1 + \mathrm{SINR}_{l,k} \right) \\
	& \mathrm{subject\,\,to}
	& & p_{l,k} \leq P_{\mathrm{max},l,k} \quad \forall l,k.
	\end{aligned}
	\end{equation}
	This can be shown to be a non-convex and \abb{NP}-hard problem using the same methodology as in \cite{Annapureddy2011a}, even if the fine details will be different since that paper considers small-scale multi-user \abb{MIMO} systems with perfect channel knowledge.
	Therefore, the global optimum is difficult to find in general. Nevertheless, solving the ergodic sum SE maximization~\eqref{Problem: Sumratev1} for a Massive \abb{MIMO} system is more practical than maximizing the instantaneous \abb{SE}s for a small-scale \abb{MIMO} network and a given realization of the small-scale fading \cite{Shi2011, Weeraddana2012a}. 
	In contrast, the sum \abb{SE} maximization in \eqref{Problem: Sumratev1} only depends on the large-scale fading coefficients, which simplifies matters and allows the solution to be used for a long period of time. 
	Another key difference from prior work is that we jointly optimize the data powers and \abb{LSFD} vectors.
	
	Instead of seeking the global optimum to~\eqref{Problem: Sumratev1}, which has an exponential computational complexity, we will use the weighted \abb{MMSE} method \cite{Christensen2008a, brandt2014a} to obtain a stationary point to \eqref{Problem: Sumratev1} in polynomial time. This is a standard method to break down a sum SE problem into subproblems that can be solved sequentially. We stress that the resulting subproblems and algorithms are different for every problem that the method is applied to, thus our solution is a novel contribution.
	To this end, we first formulate the weighted \abb{MMSE} problem from \eqref{Problem: Sumratev1} as shown in Theorem~\ref{Theorem:MMSEOptProblem}.
	\begin{theorem} \label{Theorem:MMSEOptProblem}
		The optimization problem
		\begin{equation} \label{Problem: Sumratev2}
		\begin{aligned}
		& \underset{ \substack{\{ p_{l,k} \geq 0 \}, \{ \mathbf{a}_{l,k} \},\\ \{ w_{l,k} \geq 0 \}, \{ u_{l,k} \} }}{\mathrm{minimize}}
		&&  \sum_{l=1}^L \sum_{k=1}^K w_{l,k} e_{l,k} - \ln (w_{l,k}) \\
		& \mathrm{subject\,\,to}
		& & p_{l,k} \leq P_{\mathrm{max},l,k} \;, \forall l,k,\\
		\end{aligned}
		\end{equation}
		where $e_{l,k}$ is defined as
		\begin{multline} \label{eq: errorOpt}
		e_{l,k} \triangleq  |u_{l,k}|^2 \left( \sum\limits_{l'=1 }^L p_{l',k} \left| \sum\limits_{l''=1}^L  (a_{l,k}^{l''})^{\ast} b_{l',k}^{l''} \right|^2  \right.\\
		\left.+  \sum\limits_{l'=1}^L \sum\limits_{k'=1}^K \sum\limits_{l''=1}^L p_{l',k'} |a_{l,k}^{l''} |^2  c_{l'',k}^{l',k'} + \sum\limits_{l'=1}^L |a_{l,k}^{l'}|^2 d_{l',k}  \right)\\
		- 2\sqrt{p_{l,k}} \mathfrak{Re} \left(u_{l,k} \left( \sum_{l'=1}^L (a_{l,k}^{l'})^{\ast} b_{l,k}^{l'} \right)\right)  +1,
		\end{multline}
		is equivalent to the sum \abb{SE} optimization problem \eqref{Problem: Sumratev1} in the sense that \eqref{Problem: Sumratev1} and \eqref{Problem: Sumratev2} have the same global optimal power solution $\{p_{l,k}\}, \forall l,k,$ and the same \abb{LSFD} elements $ \{ a^{l'}_{l,k} \}, \forall l,k,l'$. 
	\end{theorem}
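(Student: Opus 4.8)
The plan is to prove the equivalence by the weighted-\abb{MMSE} methodology of \cite{Christensen2008a, Shi2011}: recognize the quadratic $e_{l,k}$ in \eqref{eq: errorOpt} as a mean-squared error, and eliminate the two sets of auxiliary variables $\{u_{l,k}\}$ and $\{w_{l,k}\}$ in closed form so that the objective of \eqref{Problem: Sumratev2}, after this inner minimization, reduces (up to an affine transformation) to the negative of the sum-\abb{SE} objective in \eqref{Problem: Sumratev1}. The starting observation is that $e_{l,k}$ is exactly the error incurred when the scalar receiver $u_{l,k}$ is used to estimate $s_{l,k}$ from the decoded quantity whose signal, interference, and noise powers are the coefficients $b_{l',k}^{l''}$, $c_{l'',k}^{l',k'}$, and $d_{l',k}$ of \eqref{eq:MMSE_SINR}: the large parenthetical multiplying $|u_{l,k}|^2$ is the total received power (desired plus all interference plus noise), and the remaining cross term is twice the real part of the correlation of $u_{l,k}$ with the desired signal.

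I would carry out the elimination in two steps. First, fix the powers $\{p_{l,k}\}$ and the \abb{LSFD} vectors $\{\mathbf{a}_{l,k}\}$ and minimize $e_{l,k}$ over $u_{l,k}$. Since $e_{l,k}$ is a convex quadratic in $u_{l,k}$, setting the Wirtinger derivative $\partial e_{l,k}/\partial u_{l,k}^{\ast}=0$ yields a unique closed-form \abb{MMSE} receiver $u_{l,k}^{\mathrm{opt}}$; substituting it back collapses the parenthetical structure, and a direct simplification gives
\[
e_{l,k}^{\mathrm{opt}} = \frac{1}{1 + \mathrm{SINR}_{l,k}},
\]
with $\mathrm{SINR}_{l,k}$ exactly the quantity in \eqref{eq:MMSE_SINR}. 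Second, minimize the objective over $w_{l,k}>0$: the map $w_{l,k}\mapsto w_{l,k}e_{l,k}-\ln(w_{l,k})$ is strictly convex with unique minimizer $w_{l,k}^{\mathrm{opt}}=1/e_{l,k}$, at which the summand equals $1+\ln(e_{l,k})$.

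Evaluating at both optimal auxiliary values, each summand becomes $1+\ln(e_{l,k}^{\mathrm{opt}}) = 1-\ln(1+\mathrm{SINR}_{l,k})$, so the reduced objective over $(\{p_{l,k}\},\{\mathbf{a}_{l,k}\})$ is $LK-\ln(2)\sum_{l,k}\log_2(1+\mathrm{SINR}_{l,k})$. Since this differs from the negative of the sum-\abb{SE} objective in \eqref{Problem: Sumratev1} only by the positive multiplicative constant $\ln 2$ and the additive constant $LK$, and since the feasible set for $(\{p_{l,k}\},\{\mathbf{a}_{l,k}\})$ is identical in both problems, minimizing the reduced objective is equivalent to maximizing the sum \abb{SE}. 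Because the auxiliary variables were removed by an exact closed-form inner minimization, the joint minimizer of \eqref{Problem: Sumratev2}, restricted to the original variables, coincides with the maximizer of \eqref{Problem: Sumratev1}, giving the claimed equivalence of the optimal powers $\{p_{l,k}\}$ and \abb{LSFD} elements $\{a_{l,k}^{l'}\}$.

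The step requiring the most care is the first one: verifying that the minimum error simplifies to $1/(1+\mathrm{SINR}_{l,k})$ with the \abb{SINR} matching \eqref{eq:MMSE_SINR}. This hinges on isolating the $l'=l$ term inside the first parenthetical of \eqref{eq: errorOpt} as the desired-signal power $p_{l,k}\bigl|\sum_{l''}(a_{l,k}^{l''})^{\ast}b_{l,k}^{l''}\bigr|^2$ and recognizing all the other terms as the \abb{SINR} denominator, so that ``total power'' equals ``signal'' plus ``interference-plus-noise.'' A secondary, but routine, point is justifying the interchange of the minimization order: for every fixed $(\{p_{l,k}\},\{\mathbf{a}_{l,k}\})$ the inner problem in $(u_{l,k},w_{l,k})$ is separable across $(l,k)$ and admits the unique closed-form optima above, so the partial minimization is well defined and the overall minimizer is characterized by the stationarity conditions already derived.
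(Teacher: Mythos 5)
Your proposal is correct and follows essentially the same route as the paper's proof: both eliminate the auxiliary receiver $u_{l,k}$ and weight $w_{l,k}$ in closed form (via the stationarity conditions of the convex inner problems), verify that the resulting error equals $\left(1+\mathrm{SINR}_{l,k}\right)^{-1}$ with $\mathrm{SINR}_{l,k}$ as in \eqref{eq:MMSE_SINR}, and observe that the reduced objective is a strictly decreasing affine transformation of the sum-\abb{SE} objective of \eqref{Problem: Sumratev1} over the identical feasible set. The only cosmetic difference is that the paper first constructs a virtual \abb{SISO} channel to interpret $e_{l,k}$ as a mean-square error, whereas you take the definition \eqref{eq: errorOpt} directly, which is equally valid.
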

	\begin{proof} 
		The proof is available in Appendix~\ref{Appendix: Proof-Theorem-MMSEOptProblem}.
	\end{proof}
	
	\subsection{Iterative Algorithm}
	We now find a stationary point to  \eqref{Problem: Sumratev2} by decomposing it into a sequence of subproblems, each having a closed-form solution.  By changing variable as $\rho_{l,k} = \sqrt{p_{l,k}}$, the optimization problem \eqref{Problem: Sumratev2} is equivalent to
	\begin{equation} \label{Problem: Sumratev3}
	\begin{aligned}
	& \underset{ \substack{\{ \rho_{l,k} \geq 0 \}, \{ \mathbf{a}_{l,k} \},\\ \{ w_{l,k} \geq 0 \}, \{ u_{l,k} \} }}{\mathrm{minimize}}
	&&  \sum_{l=1}^L \sum_{k=1}^K w_{l,k} e_{l,k} - \ln (w_{l,k}) \\
	& \mathrm{subject\,\,to}
	& & \rho_{l,k}^2 \leq P_{\mathrm{max},l,k} \;, \forall l,k,\\
	\end{aligned}
	\end{equation}
	where $e_{l,k}$ is
	\begin{multline} \label{eq: errorOptv1}
	e_{l,k} =  |u_{l,k}|^2 \left( \sum\limits_{l'=1 }^L \rho_{l',k}^2 \left| \sum\limits_{l''=1}^L  (a_{l,k}^{l''})^{\ast} b_{l',k}^{l''} \right|^2  \right.\\\left.
	+ \sum\limits_{l'=1}^L \sum\limits_{k'=1}^K \sum\limits_{l''=1}^L \rho_{l',k'}^2  |a_{l,k}^{l''}|^2 c_{l'',k}^{l',k'} + \sum\limits_{l'=1}^L | a_{l,k}^{l'}|^2 d_{l',k}  \right) \\
	- 2 \rho_{l,k} \mathfrak{Re} \left( u_{l,k} \left( \sum_{l'=1}^L (a_{l,k}^{l'})^{\ast} b_{l,k}^{l'}\right) \right) + 1.
	\end{multline}
	
	As a third main contribution of this paper, the following theorem provides an algorithm that relies on alternating optimization to find a stationary point to \eqref{Problem: Sumratev3}.
	\begin{theorem} \label{Theorem: IterativeSol}
		A stationary point to \eqref{Problem: Sumratev3} is obtained by iteratively updating $\{ \mathbf{a}_{l,k}, u_{l,k}, w_{l,k}, \rho_{l,k} \}$. Let $ \mathbf{a}_{l,k}^{n-1}, u_{l,k}^{n-1}, w_{l,k}^{n-1}, \rho_{l,k}^{n-1}$ the values after iteration $n-1$. At iteration $n$, the optimization parameters are updated in the following way:
		\begin{itemize}[leftmargin=*]
			\item $u_{l,k}$ is updated as
			\begin{equation} \label{eq:u_lk_n1}
			u_{l,k}^{(n)} = \frac{\rho_{l,k}^{(n-1)}  \sum\limits_{l'=1}^L a_{l,k}^{l',(n-1)} (b_{l,k}^{l'})^{\ast} }{ \tilde{u}_{l,k}^{(n-1)}},
			\end{equation}
			where the value $\tilde{u}_{l,k}^{(n-1)}$ is defined in \eqref{eq:tildeu_lk_n1}.
			\begin{figure*}
				\begin{equation} \label{eq:tildeu_lk_n1}
				\tilde{u}_{l,k}^{(n-1)} = \sum\limits_{l'=1}^L (\rho_{l',k}^{(n-1)})^2 \left| \sum\limits_{l''=1}^L (a_{l,k}^{l'', (n-1)})^{\ast} b_{l',k}^{l''} \right|^2  + \sum\limits_{l'=1}^L \sum\limits_{k'=1}^K \sum\limits_{l''=1}^L (\rho_{l',k'}^{(n-1)})^2 | a_{l,k}^{l'',(n-1)}|^2 c_{l'',k}^{l',k'}
				+  \sum\limits_{l'=1}^L | a_{l,k}^{l',(n-1)}|^2 d_{l',k}.
				\end{equation} 
			\end{figure*}
			
			\item $w_{l,k}$ is updated as
			\begin{equation} \label{eq:w_lk_n1}
			w_{l,k}^{(n)} = \left(e_{l,k}^{(n)}\right)^{-1},
			\end{equation}
			where $e_{l,k}^{(n)}$ is defined as
			\begin{multline} \label{eq:e_lk_n1}
			e_{l,k}^{(n)} = |u_{l,k}^{(n)}|^2 \tilde{u}_{l,k}^{(n-1)}\\ -2\rho_{l,k}^{(n-1)} \mathfrak{Re} \left(u_{l,k}^{(n)}   \left( \sum_{l'=1}^L (a_{l,k}^{l',(n-1)})^{\ast} b_{l,k}^{l'} \right) \right) + 1.
			\end{multline}
			\item $\mathbf{a}_{l,k}$ is updated as
			\begin{equation} \label{eq: a_lk_n1}
			\mathbf{a}_{l,k}^{(n)} = \frac{\tilde{u}_{l,k}^{\ast, (n)}}{\sum\limits_{l'=1}^L (a_{l,k}^{l',(n-1)})^{\ast} b_{l,k}^{l'}} \left(\widetilde{\mathbf{C}}_{l,k}^{(n-1)}\right)^{-1} \mathbf{b}_{l,k},
			\end{equation}
			where $\widetilde{\mathbf{C}}_{l,k}^{(n-1)}$ is computed as in \eqref{eq:C_lk_n}.
			\begin{figure*}
				\begin{equation} \label{eq:C_lk_n}
				\widetilde{\mathbf{C}}_{l,k}^{(n-1)} = \sum_{l'=1}^L  (\rho_{l',k}^{(n-1)})^2 \mathbf{b}_{l',k} \mathbf{b}_{l',k}^\conjtr + \mathrm{diag} \left( \sum_{l'=1}^L \sum_{k'=1}^K (\rho_{l',k'}^{(n-1)})^2 c_{1,k}^{l',k'} + d_{1,k}, \ldots, \sum_{l'=1}^L \sum_{k'=1}^K (\rho_{l',k'}^{(n-1)})^2 c_{L,k}^{l',k'} + d_{L,k} \right).
				\end{equation} 
			\end{figure*}
			\item $\rho_{l,k}$ is updated as in \eqref{eq:rho_lkSol}.
			\begin{figure*}
				\begin{equation} \label{eq:rho_lkSol}
				\rho_{l,k}^{(n)} = \min \left(\frac{w_{l,k}^{(n)} \mathfrak{Re}  \left( u_{l,k}^{(n)}   \sum_{l'=1}^L (a_{l,k}^{l',(n)})^{\ast} b_{l,k}^{l'} \right) }{\sum_{l' =1}^L w_{l',k}^{(n)} |u_{l',k}^{(n)}|^2 \left| \sum_{l''=1}^L (a_{l',k}^{l'',(n)})^{\ast} b_{l,k}^{l''} \right|^2 +  \sum_{l'=1}^L \sum_{k'=1}^K w_{l',k'}^{(n)} |u_{l',k'}^{(n)}|^2 \sum_{l''=1}^L | a_{l',k'}^{l'',(n)}|^2 c_{l'',k'}^{l,k}}, \sqrt{P_{\max,l,k}} \right).
				\end{equation} 
				\hrulefill
			\end{figure*}
		\end{itemize}
		If we denote the stationary point to \eqref{Problem: Sumratev3} that is attained by the above iterative algorithm as $n\to \infty$ by $u_{l,k}^{\mathrm{opt}}$, $w_{l,k}^{\mathrm{opt}}$, $\mathbf{a}_{l,k}^{\mathrm{opt}}$, and $\rho_{l,k}^{\mathrm{opt}}$, for all $l,k$, 
		then the solution $\{\mathbf{a}_{l,k}^{\mathrm{opt}}\}, \{(\rho_{l,k}^{\mathrm{opt}})^2\},$ is also a stationary point to the problem \eqref{Problem: Sumratev1}.
	\end{theorem}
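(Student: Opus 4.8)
The plan is to recognize \eqref{Problem: Sumratev3} as a block-multiconvex problem and to prove the theorem by the standard block coordinate descent (alternating minimization) argument, combined with the equivalence already established in Theorem~\ref{Theorem:MMSEOptProblem}. First I would verify that the objective $\sum_{l,k} w_{l,k} e_{l,k} - \ln(w_{l,k})$, with $e_{l,k}$ given in \eqref{eq: errorOptv1}, is smooth and convex in each of the four variable blocks $\{u_{l,k}\}$, $\{w_{l,k}\}$, $\{\mathbf{a}_{l,k}\}$, and $\{\rho_{l,k}\}$ separately when the other three are held fixed. Indeed, $e_{l,k}$ is a Hermitian quadratic form in $u_{l,k}$ and in $\mathbf{a}_{l,k}$ with positive-semidefinite curvature, a convex quadratic in $\rho_{l,k}$, while $w_{l,k} e_{l,k} - \ln(w_{l,k})$ is strictly convex in $w_{l,k} > 0$. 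The feasible set is the product of the box constraints $0 \le \rho_{l,k} \le \sqrt{P_{\mathrm{max},l,k}}$ and the unconstrained spaces of the remaining blocks, so alternating minimization over the blocks is well defined.

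Next I would derive each of the four update rules as the exact minimizer of its block subproblem through first-order optimality. For $u_{l,k}$ and $\mathbf{a}_{l,k}$ the subproblems are unconstrained convex quadratics, so setting the Wirtinger gradient to zero yields the linear, \abb{MMSE}-type solutions \eqref{eq:u_lk_n1} and \eqref{eq: a_lk_n1}, where the matrix $\widetilde{\mathbf{C}}_{l,k}^{(n-1)}$ in \eqref{eq:C_lk_n} is positive definite thanks to the strictly positive diagonal contributions $d_{l,k}$, which guarantees a unique solution. For $w_{l,k}$, differentiating $w_{l,k} e_{l,k} - \ln(w_{l,k})$ and equating to zero gives $w_{l,k} = 1/e_{l,k}$, i.e. \eqref{eq:w_lk_n1}. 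For $\rho_{l,k}$, the subproblem is a one-dimensional convex quadratic over the interval $[0, \sqrt{P_{\mathrm{max},l,k}}]$, so its minimizer is the unconstrained stationary point clipped to the interval, which is precisely the $\min(\cdot, \sqrt{P_{\mathrm{max},l,k}})$ expression in \eqref{eq:rho_lkSol}. Because each block minimizer is unique, the alternating updates are unambiguous.

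Since every update replaces the current iterate by an exact block minimizer, the objective in \eqref{Problem: Sumratev3} is monotonically non-increasing along the iterations. It is also bounded below: by Theorem~\ref{Theorem:MMSEOptProblem}, after optimizing out $u$ and $w$ it equals the negative sum-\abb{SE} plus a constant, which is finite because the data powers are capped by $P_{\mathrm{max},l,k}$. Hence the sequence of objective values converges. Using the uniqueness of each block minimizer established above, I would then invoke the standard convergence theorem for block coordinate descent on a continuously differentiable objective over a Cartesian product of closed convex constraint sets to conclude that every limit point of $\{ \mathbf{a}_{l,k}^{(n)}, u_{l,k}^{(n)}, w_{l,k}^{(n)}, \rho_{l,k}^{(n)} \}$ is a stationary point of \eqref{Problem: Sumratev3}.

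Finally, to transfer the conclusion to the sum-\abb{SE} problem \eqref{Problem: Sumratev1}, I would again use Theorem~\ref{Theorem:MMSEOptProblem} together with the change of variable $p_{l,k} = \rho_{l,k}^2$ that links \eqref{Problem: Sumratev3} to \eqref{Problem: Sumratev2}. At the limit point the auxiliary variables take their closed-form optimal values $u_{l,k}^{\mathrm{opt}}$ and $w_{l,k}^{\mathrm{opt}}$; substituting them back makes the weighted-\abb{MMSE} objective, viewed as a function of $\{\mathbf{a}_{l,k}, \rho_{l,k}\}$ alone, coincide with the negative sum-\abb{SE} objective up to an additive constant. By the envelope principle the partial gradients with respect to $\{\mathbf{a}_{l,k}\}$ and $\{\rho_{l,k}\}$ therefore agree, so the stationarity (KKT) conditions of \eqref{Problem: Sumratev3} in these variables are exactly the KKT conditions of \eqref{Problem: Sumratev1}. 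Hence $\{\mathbf{a}_{l,k}^{\mathrm{opt}}\}, \{(\rho_{l,k}^{\mathrm{opt}})^2\}$ is a stationary point of \eqref{Problem: Sumratev1}. I expect the main obstacle to be the convergence-to-stationarity step: with four blocks, objective monotonicity alone is not enough, and the argument hinges on verifying the strict convexity (uniqueness) of each block minimizer---in particular the positive definiteness of $\widetilde{\mathbf{C}}_{l,k}$---so that the block coordinate descent stationarity theorem applies, and then on carefully matching the weighted-\abb{MMSE} stationarity conditions to the original sum-\abb{SE} KKT system.
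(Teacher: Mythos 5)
Your proposal is correct and follows essentially the same route as the paper's proof: the same four-block alternating minimization with each update obtained from first-order optimality (including the clipped update for $\rho_{l,k}$ arising from the power constraint via complementary slackness), monotone decrease of the bounded objective, and transfer of stationarity to \eqref{Problem: Sumratev1} through the equivalence of Theorem~\ref{Theorem:MMSEOptProblem}. The only differences are presentational: you invoke the standard block-coordinate-descent convergence theorem (with uniqueness of block minimizers, noting positive definiteness of $\widetilde{\mathbf{C}}_{l,k}$) and the envelope principle, where the paper verifies the stationarity inequalities directly and matches the gradients of the weighted-MMSE and sum-SE objectives explicitly via the chain rule.
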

	\begin{proof}
		The proof is available in Appendix~\ref{Appendix: Proof_Of_Theorem: IterativeSol}.
	\end{proof}
	The iterative algorithm is summarized in Algorithm~\ref{Algorithm:CentralizedApproach}. With initial data power values in the feasible set, the related \abb{LSFD} vectors are computed by using Corollary~\ref{corollary:Opt_LSFD}.\footnote{We observe faster convergence with a hierarchical initialization of $\rho_{l,k}^{(0)}, \forall l,k,$ than with an all-equal initialization. In simulations, we initialize $\rho_{l,k}^{(0)}, \forall l,k,$ as uniformly distributed over the range $\left[0, \sqrt{P_{\max,l,k}} \right]$.} After that, the iterative algorithm in Theorem~\ref{Theorem: IterativeSol} is used to obtain a stationary point to the sum \abb{SE} optimization problem~\eqref{Problem: Sumrate}.  Algorithm~\ref{Algorithm:CentralizedApproach} can be terminated when the variation between two consecutive iterations are small. In particular, for a given $\epsilon \geq 0$, the stopping criterion can, for instance, be defined as 
	\begin{equation} \label{eq:StopCriterion}
	\left| \sum_{l=1}^L \sum_{k=1}^K R_{l,k}^{(n)} - \sum_{l=1}^L \sum_{k=1}^K R_{l,k}^{(n-1)} \right| \leq \epsilon.
	\end{equation}
	Because all the update states in Algorithm~\ref{Algorithm:CentralizedApproach} are in closed form, for an initial point in the feasible set we can compute the exact number of arithmetic operations needed to obtain a given $\epsilon$-accuracy. For simplicity, let us only count complex multiplications, divisions, and logarithms, which are the main operations. Then, the number of arithmetic operations that Algorithm~\ref{Algorithm:CentralizedApproach} requires is
	\begin{multline} \label{eq:NumOfArithmeticPointv1}
	N \Big( 11 L^3 K^2 + 6 L^3 K + \frac{L^4K + 53 L^2 K }{3}  \\ 
	+ 3L^2K^2 + 16LK +2 \Big),
	\end{multline}
	where $N$ is the number of iterations to reach the stationary point. We obtain \eqref{eq:NumOfArithmeticPointv1} by assuming that a Cholesky decomposition is used to compute matrix inversion in \eqref{eq: a_lk_n1}.
	\begin{algorithm}[t]
		\caption{Alternating optimization approach for \eqref{Problem: Sumratev3}} \label{Algorithm:CentralizedApproach}
		\textbf{Input}: Maximum data powers $P_{\max,l,k}, \forall l,k$; Large-scale fading coefficients $\beta_{l,k}^{l'}, \forall, l,k,l'$. Initial coefficients $\rho_{l,k}^{(0)}, \forall l,k$.  Set up $n=1$ and  compute $\mathbf{a}_{l,k}^{(0)}, \forall l,k,$ using Corollary~\ref{corollary:Opt_LSFD}.
		\begin{itemize}
			\item[1.] \emph{Iteration} $n$:
			\begin{itemize}
				\item[1.1.] Update $u_{l,k}^{(n)}$ using \eqref{eq:u_lk_n1} where $\tilde{u}_{l,k}^{(n-1)}$ is computed as in \eqref{eq:tildeu_lk_n1}.
				\item[1.2.] Update $w_{l,k}^{(n)}$ using \eqref{eq:w_lk_n1} where $e_{l,k}^{(n)}$ is computed as in \eqref{eq:e_lk_n1}. 
				\item[1.3.] Update $\mathbf{a}_{l,k}^{(n)}$ by using \eqref{eq: a_lk_n1} where $\mathbf{C}_{l,k}^{(n-1)}$ is computed as in \eqref{eq:C_lk_n} and $\mathbf{b}_{l,k}$ as in \eqref{eq:bik}.
				\item[1.4.] Update $\rho_{l,k}^{(n)}$ by using \eqref{eq:rho_lkSol}.
			\end{itemize}
			\item[2.] If \textit{Stopping criterion \eqref{eq:StopCriterion} is satisfied $\rightarrow$ Stop}. Otherwise, go to Step 3.
			\item[3.] Store the currently best solution: $\rho_{l,k}^{(n)}$ and $\mathbf{a}_{l,k}^{(n)}$, $\forall l,k$. Set $n = n+1$, then go to Step $1$.
		\end{itemize}
		\textbf{Output}: The optimal solutions: $\rho_{l,k}^{\mathrm{opt}} = \rho_{l,k}^{(n)}, \mathbf{a}_{l,k}^{\mathrm{opt}}= \mathbf{a}_{l,k}^{(n)} \forall l,k.$
	\end{algorithm}
	\subsection{Sum \abb{SE} Optimization Without Using \abb{LFSD}}
	For completeness, 
	we also study a multi-cell Massive \abb{MIMO} system that only uses one-layer decoding. 
	This scenario is considered as a benchmark to investigate the improvements of our proposed joint data power control and \abb{LSFD} design in the previous section. 
	Mathematically, this is a special case of the above analysis, in which the elements of the \abb{LSFD} vector $\mathbf{a}_{l,k}, \forall l,k$ are defined as
	\begin{equation}
	a_{l,k}^{l'}  = \begin{cases}
	1, \quad \mathrm{for} \quad l' = l,\\ 
	0,  \quad \mathrm{for} \quad l' \neq l.
	\end{cases}
	\end{equation}
	With the \abb{LSFD} vector fixed, the \abb{SE} for each user in the network is a function only of the data power coefficients and it saturates when the number of \abb{BS} antennas is increased without bound.
	The \abb{SE} can, thus, only be improve through data power control. 
	For this communication scenario, the problem \eqref{Problem: Sumratev3} becomes
	\begin{equation} \label{Problem: Sumratev5}
	\begin{aligned}
	& \underset{ \substack{\{ \rho_{l,k} \geq 0 \}, \{ w_{l,k} \geq 0 \}, \{ u_{l,k} \} }}{\mathrm{minimize}}
	&&  \sum_{l=1}^L \sum_{k=1}^K w_{l,k} e_{l,k} - \ln (w_{l,k}) \\
	& \mathrm{subject\,\,to}
	& & \rho_{l,k}^2 \leq P_{\mathrm{max},l,k} \;, \forall l,k,\\
	\end{aligned}
	\end{equation}
	where $e_{l,k}$ is defined as
	\begin{equation} \label{eq:errorOptv2}
	\begin{split}
	e_{l,k} \triangleq& |u_{l,k}|^2 \left( \sum\limits_{l'=1 }^L \rho_{l',k}^2 |b_{l',k}^{l}|^2  + \sum\limits_{l'=1}^L \sum\limits_{k'=1}^K \rho_{l',k'}^2 c_{l,k}^{l',k'} + d_{l,k}  \right) \\
	&- 2 \rho_{l,k} \mathfrak{Re} \left(u_{l,k} b_{l,k}^{l} \right) + 1.
	\end{split}
	\end{equation}
	
	The alternating optimization approach in Algorithm~\ref{Algorithm:CentralizedApproach} can also be applied to the problem in \eqref{Problem: Sumratev5} to obtain a stationary point as shown in the following corollary.
	\begin{corollary} \label{Corollary: IterativeSol}
		A stationary point to \eqref{Problem: Sumratev5} is obtained by iteratively updating $\{ u_{l,k}, w_{l,k}, \rho_{l,k} \}$. At iteration $n$, these optimization parameters are updated as
		\begin{itemize}
			\item $u_{l,k}$ is updated as 
			\begin{equation} \label{eq:u_lk_n2}
			u_{l,k}^{(n)} = \frac{\rho_{l,k}^{(n-1)}  ( b_{l,k}^{l} )^{\ast} }{\tilde{u}_{l,k}^{(n-1)} },
			\end{equation}
			where $\tilde{u}_{l,k}^{(n-1)}$ is computed as
			\begin{multline} \label{eq:ulktilde}
			\tilde{u}_{l,k}^{(n-1)} = \sum\limits_{l'=1}^L (\rho_{l',k}^{(n-1)})^2 |b_{l',k}^{l} |^2 \\
			+ \sum\limits_{l'=1}^L \sum\limits_{k'=1}^K (\rho_{l',k'}^{(n-1)})^2 c_{l,k}^{l',k'} +  d_{l,k}.
			\end{multline}
			\item $w_{l,k}$ is updated as:
			\begin{equation} \label{eq:w_lk_n2}
			w_{l,k}^{(n)} = \left(e_{l,k}^{(n)}\right)^{-1},
			\end{equation}
			where $e_{l,k}^{(n)}$ is computed as
			\begin{equation} \label{eq:e_lk_n2}
			e_{l,k}^{(n)} =  |u_{l,k}^{(n)}|^2 \tilde{u}_{l,k}^{(n-1)}
			- 2 \rho_{l,k}^{(n-1)} \mathfrak{Re}\left( u_{l,k}^{(n)}   b_{l,k}^{l} \right)  + 1.
			\end{equation}
			
			\item $\rho_{l,k}$ is updated as 
			\begin{equation} \label{eq:rho_lkSoln2}
			\rho_{l,k}^{(n)} = \min \left(\tilde{\rho}_{l,k}^{(n)}, \sqrt{P_{\max,l,k}} \right),
			\end{equation}
			where
			\begin{equation}
			\hspace{-1.7em}\tilde{\rho}_{l,k}^{(n)} \triangleq \frac{w_{l,k}^{(n)} \mathfrak{Re}  \left( u_{l,k}^{(n)} b_{l,k}^{l} \right) }{\sum\limits_{l' =1}^L w_{l',k}^{(n)} |u_{l',k}^{(n)}|^2 \left| b_{l,k}^{l'} \right|^2 +  \sum\limits_{l'=1}^L \sum\limits_{k'=1}^K w_{l',k'}^{(n)} |u_{l',k'}^{(n)}|^2 c_{l',k'}^{l,k}}.
			\end{equation}
		\end{itemize}
	\end{corollary}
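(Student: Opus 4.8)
The plan is to treat this corollary as the specialization of Theorem~\ref{Theorem: IterativeSol} to the single-layer case, in which the \abb{LSFD} vectors are frozen to $a_{l,k}^{l'} = 1$ for $l'=l$ and $0$ otherwise. First I would substitute these fixed weights into the general weighted-\abb{MMSE} error \eqref{eq: errorOptv1}: the Kronecker-delta structure collapses every inner sum over $l''$, so that $\sum_{l''} (a_{l,k}^{l''})^{\ast} b_{l',k}^{l''} \mapsto b_{l',k}^{l}$, $\sum_{l''} |a_{l,k}^{l''}|^2 c_{l'',k}^{l',k'} \mapsto c_{l,k}^{l',k'}$, and $\sum_{l'} |a_{l,k}^{l'}|^2 d_{l',k} \mapsto d_{l,k}$, reducing \eqref{eq: errorOptv1} exactly to \eqref{eq:errorOptv2}. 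The power constraints are unchanged, so \eqref{Problem: Sumratev5} is of the same block-structured weighted-\abb{MMSE} form as \eqref{Problem: Sumratev3}, only with the $\{\mathbf{a}_{l,k}\}$ block removed. Consequently the equivalence established in Theorem~\ref{Theorem:MMSEOptProblem} still applies, and it suffices to run block-coordinate descent over the remaining variables $\{u_{l,k}, w_{l,k}, \rho_{l,k}\}$.

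Next I would derive each block update as the exact per-block minimizer, mirroring the proof of Theorem~\ref{Theorem: IterativeSol}. Holding $w$ and $\rho$ fixed, each $e_{l,k}$ is a convex scalar quadratic in $u_{l,k}$ with positive leading coefficient $\tilde{u}_{l,k}$ given in \eqref{eq:ulktilde}; setting its derivative to zero yields the \abb{MMSE}-receiver expression \eqref{eq:u_lk_n2}. Holding $u$ and $\rho$ fixed, the scalar map $w_{l,k} \mapsto w_{l,k} e_{l,k} - \ln(w_{l,k})$ is strictly convex on $w_{l,k} > 0$ and is minimized at $w_{l,k} = 1/e_{l,k}$, giving \eqref{eq:w_lk_n2} with $e_{l,k}$ evaluated as in \eqref{eq:e_lk_n2}.

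The main step to get right is the power update. Holding $u$ and $w$ fixed, the objective $\sum_{l,k} w_{l,k} e_{l,k}$ is jointly quadratic in the full collection $\{\rho_{l,k}\}$, but the quadratic part contains only squared terms $\rho_{l',k'}^2$ and no cross products between distinct index pairs; hence its Hessian is diagonal and the constrained minimization decouples into independent scalar problems, one per user. For user~$(l,k)$ I would collect (i)~the linear coefficient $w_{l,k}\,\mathfrak{Re}(u_{l,k} b_{l,k}^{l})$ arising from its own error term, and (ii)~the aggregate quadratic coefficient obtained by tracing where $\rho_{l,k}^2$ appears across all error expressions---namely the $|b_{l,k}^{l'}|^2$ contributions in the same-pilot errors $e_{l',k}$ and the $c_{l',k'}^{l,k}$ contributions in every error $e_{l',k'}$. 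The unconstrained scalar minimizer is their ratio, and projecting it onto $[0,\sqrt{P_{\max,l,k}}]$ by clipping from above yields precisely \eqref{eq:rho_lkSoln2}. The only real obstacle is this bookkeeping of all error terms that depend on a given power variable; once the diagonal (decoupled) structure is recognized, the closed form is immediate.

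Finally, since each of the three updates is an exact block minimizer of an objective that is bounded below, the resulting sequence of objective values is monotonically non-increasing and convergent, so the iterates approach a stationary point of \eqref{Problem: Sumratev5} by the same block-coordinate-descent argument used for Theorem~\ref{Theorem: IterativeSol}; the equivalence of Theorem~\ref{Theorem:MMSEOptProblem} then transfers this point back to a stationary point of the corresponding single-layer sum-\abb{SE} problem.
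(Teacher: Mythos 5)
Your proposal is correct and follows essentially the same route as the paper, which treats this corollary as the specialization of Theorem~\ref{Theorem: IterativeSol} (and its proof in Appendix~\ref{Appendix: Proof_Of_Theorem: IterativeSol}) to the fixed choice $a_{l,k}^{l'}=\delta_{l,l'}$: the Kronecker-delta substitution collapses \eqref{eq: errorOptv1} to \eqref{eq:errorOptv2}, each block update ($u_{l,k}$, $w_{l,k}$, $\rho_{l,k}$ with Lagrangian/projection) is the exact per-block minimizer, and monotone decrease of the bounded objective gives convergence to a stationary point. Your observation that the power subproblem decouples because the quadratic part has a diagonal Hessian is exactly the structure the paper exploits implicitly in \eqref{eq:rho_lkSol}, so nothing is missing.
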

	After initializing the data power coefficients to a point in the feasible set, Corollary~\ref{Corollary: IterativeSol} provides closed-form expressions to update each variable in the optimization~\eqref{Problem: Sumratev5} iteratively. 
	This benchmark only treats the data powers as optimization variables, so it is a simplification of Algorithm~\ref{Algorithm:CentralizedApproach}. 
	
		\begin{figure*}[t]
		\begin{minipage}{0.48\textwidth}
			\centering
			\vspace*{0.5cm}
			\includegraphics[trim=0cm 0cm 0cm 0cm, clip=true, width=9.25cm]{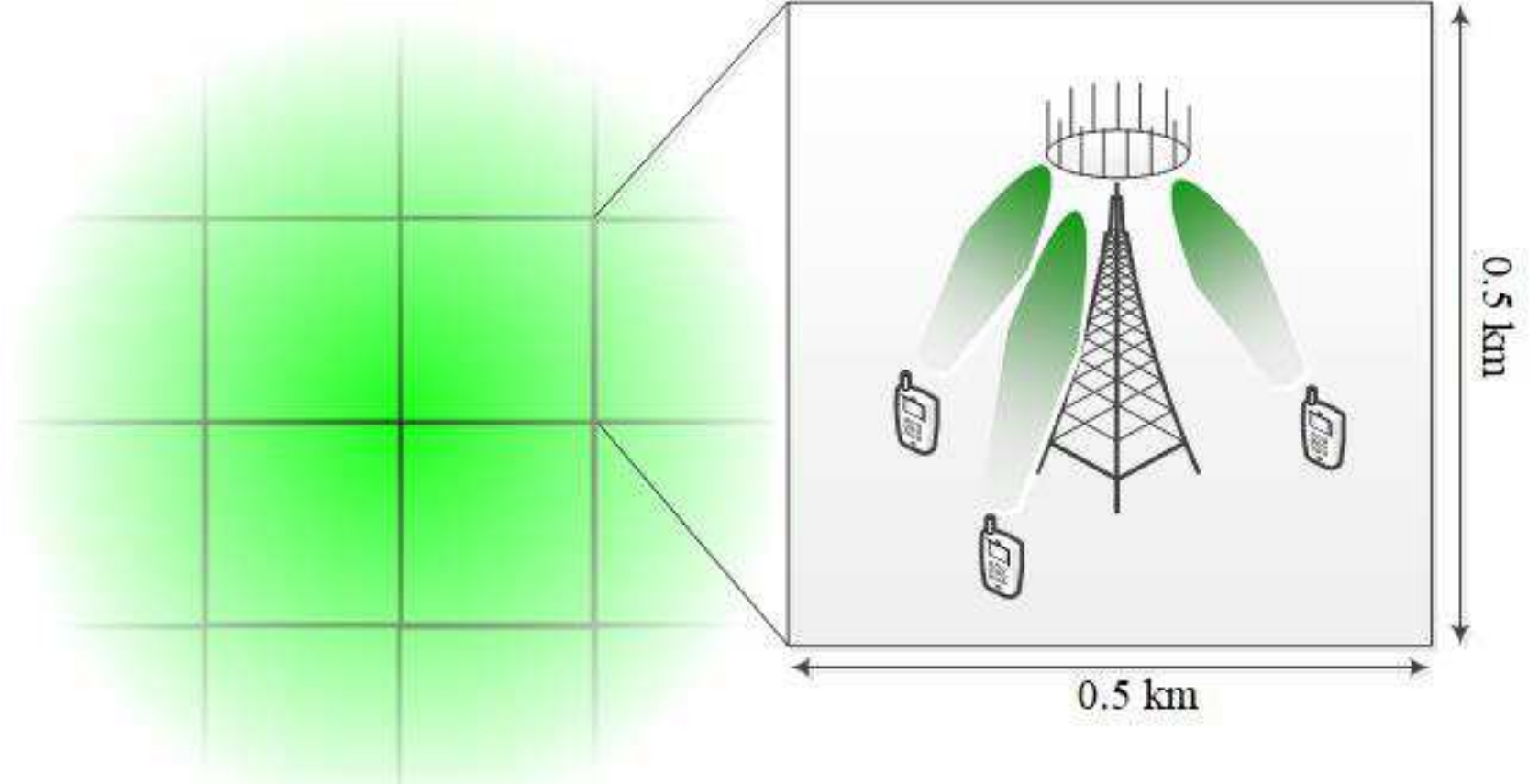} 
			\vspace*{-0.25cm}
			\caption{A wrapped-around cellular network used for simulation.}
			\label{Fig-SMMSEOptimized}
		\end{minipage}
		\hfill
		\begin{minipage}{0.48\textwidth}
			\centering
			\includegraphics[trim=0.5cm 0.1cm 1cm 0.5cm, clip=true, width=3in]{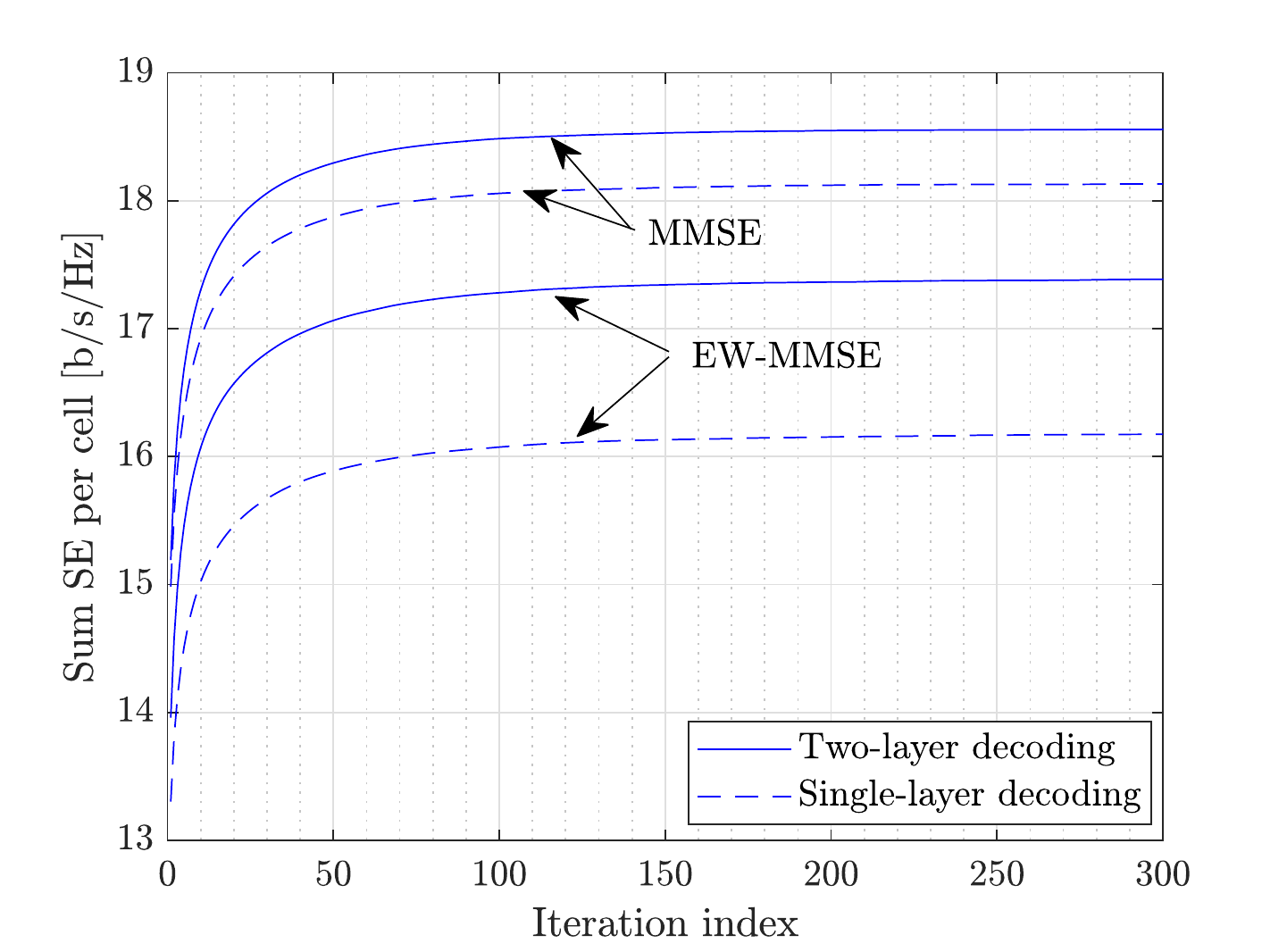} \vspace*{-0.15cm}
			\caption{Convergence of the proposed sum \abb{SE} optimization with $M=200$, $K= 5$, and $\varsigma = 0.8$.}
			\label{Fig-Convergence}
			\vspace*{-0.15cm}
		\end{minipage}
	\end{figure*}
	\begin{figure*}[t]
		\begin{minipage}{0.48\textwidth}
			\centering
			\includegraphics[trim=0.5cm 0.1cm 1cm 0.5cm, clip=true, width=3in]{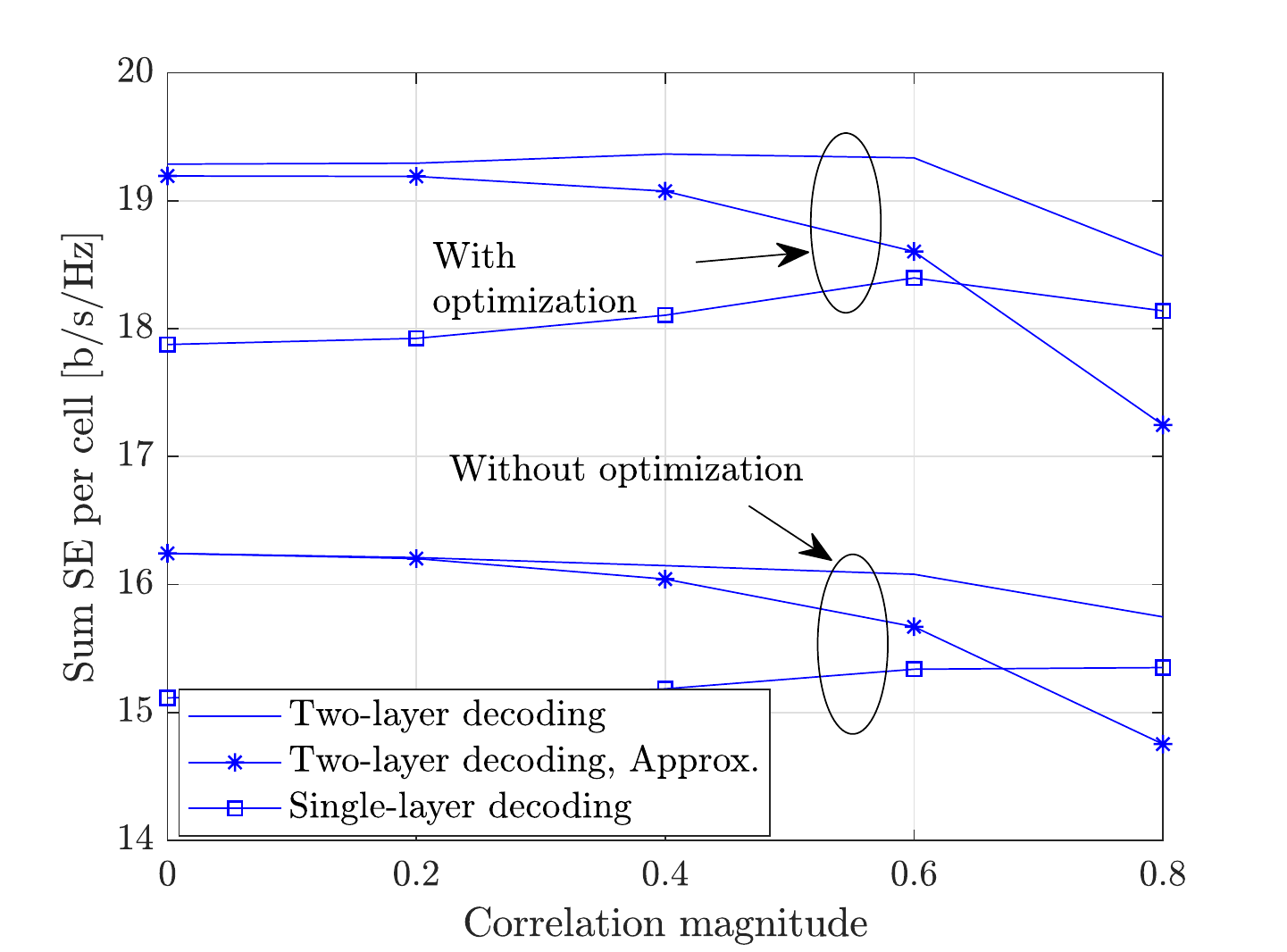} \vspace*{-0.15cm}
			\caption{Sum \abb{SE} per cell [b/s/Hz] versus different correlation magnitudes. The network uses \abb{MMSE} estimation, $M=200$, and $K= 5$.}
			\label{Fig-MMSE-DiffCofact}
			\vspace*{-0.15cm}
		\end{minipage}
		\hfill
		\begin{minipage}{0.48\textwidth}
			\centering
			\includegraphics[trim=0.5cm 0.1cm 1cm 0.5cm, clip=true, width=3in]{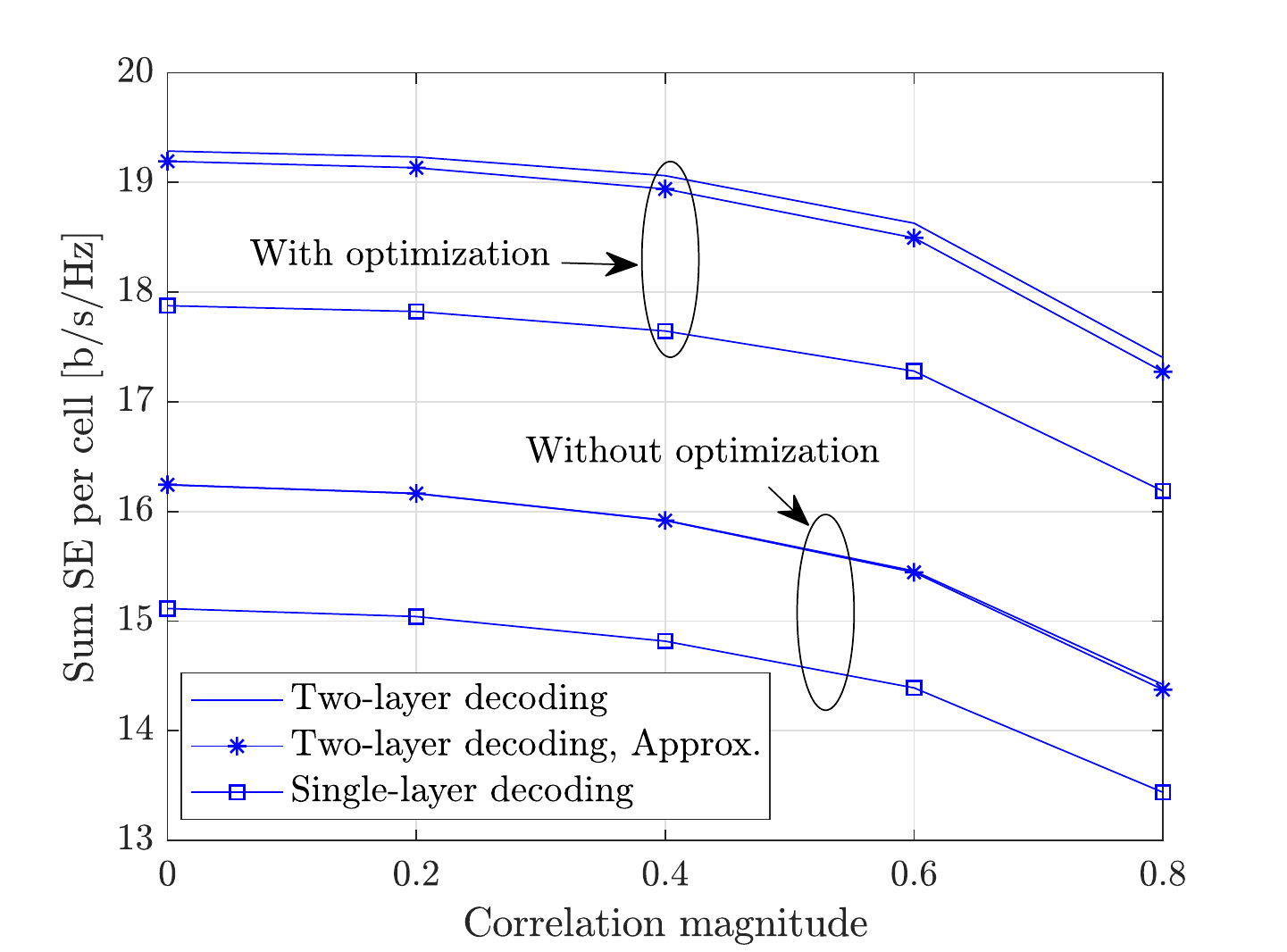} \vspace*{-0.15cm}
			\caption{Sum \abb{SE} per cell [b/s/Hz] versus different correlation magnitudes. The network uses \abb{EW-MMSE} estimation, $M=200$, and $K= 5$.}
			\label{Fig-SMMSE-DiffCofact}
			\vspace*{-0.15cm}
		\end{minipage}
	\end{figure*}
	\section{Numerical Results} \label{Section:NumericalResults}
	To demonstrate the effectiveness of the proposed algorithms, we consider a wrapped-around cellular network with four cells as illustrated in Fig.~\ref{Fig-SMMSEOptimized}. The distance between user~$k$ in cell~$l'$ and BS~$l$ is denoted by $d_{l',k}^l$.  
	The users in each cell are uniformly distributed over the cell area that is at least 35\,m away from the BS, i.e.\ $d_{l',k}^l \geq \text{35\,m}$.  
	Monte-Carlo simulations are done over $300$ random sets of user locations, for almost figures, but Fig.~\ref{Fig-OtherLinears} is obtained by 3000 random sets of user locations  which the moments of complex Gaussian distributions  are computed by 1000 random realizations of small-scale fading, 
	
	We model the system parameters and large-scale fading similar to the 3GPP LTE specifications \cite{LTE2010b}. The system uses $20$\,MHz of bandwidth, the noise variance is $-96$\,dBm, and the noise figure is $5$\,dB.  The large-scale fading coefficient $\beta_{l,k}^{l'}$ in decibel is computed as
	\begin{equation}
	\left[ \beta_{l,k}^{l'} \right]_{\text{dB}} = -148.1 - 37.6 \log_{10} \left( d_{l,k}^{l'} / 1\,\text{km}  \right)+ z_{l,k}^{l'},
	\end{equation}
	where the decibel value of the shadow fading, $z_{l,k}^{l'}$, has a Gaussian distribution with zero mean and standard derivation~$7$.  The spatial correlation matrix of the channel from user~$k$ in cell~$l$ to BS~$l'$ is described by the exponential correlation model, which models a uniform linear array \cite{Loyka2001a}:
	\begin{equation}
	\mathbf{R}_{l,k}^{l'} = \beta_{l,k}^{l'} \begin{bmatrix}
	1     &  r_{l,k}^{l',\ast}   &  \cdots     &  (r_{l,k}^{l',\ast})^{M-1} \\
	r_{l,k}^{l'}        & 1     & \cdots    & (r_{l,k}^{l',\ast})^{M-2} \\
	\vdots & \vdots &  \ddots      &  \vdots  \\
	(r_{l,k}^{l'})^{M-1}     & (r_{l,k}^{l'})^{M-2}    &  \cdots   &  1
	\end{bmatrix},
	\end{equation}
	where the correlation coefficient $r_{l,k}^{l'} = \varsigma e^{j\theta_{l,k}^{l'}}$, the correlation magnitude $\varsigma$ is in the range $[0,1]$ and the user incidence angle to the array boresight is $\theta_{l,k}^{l'}$.  
	
		\begin{figure*}[t]
		\begin{minipage}{0.48\textwidth}
			\centering
			\includegraphics[trim=0.5cm 0.1cm 1cm 0.5cm, clip=true, width=3in]{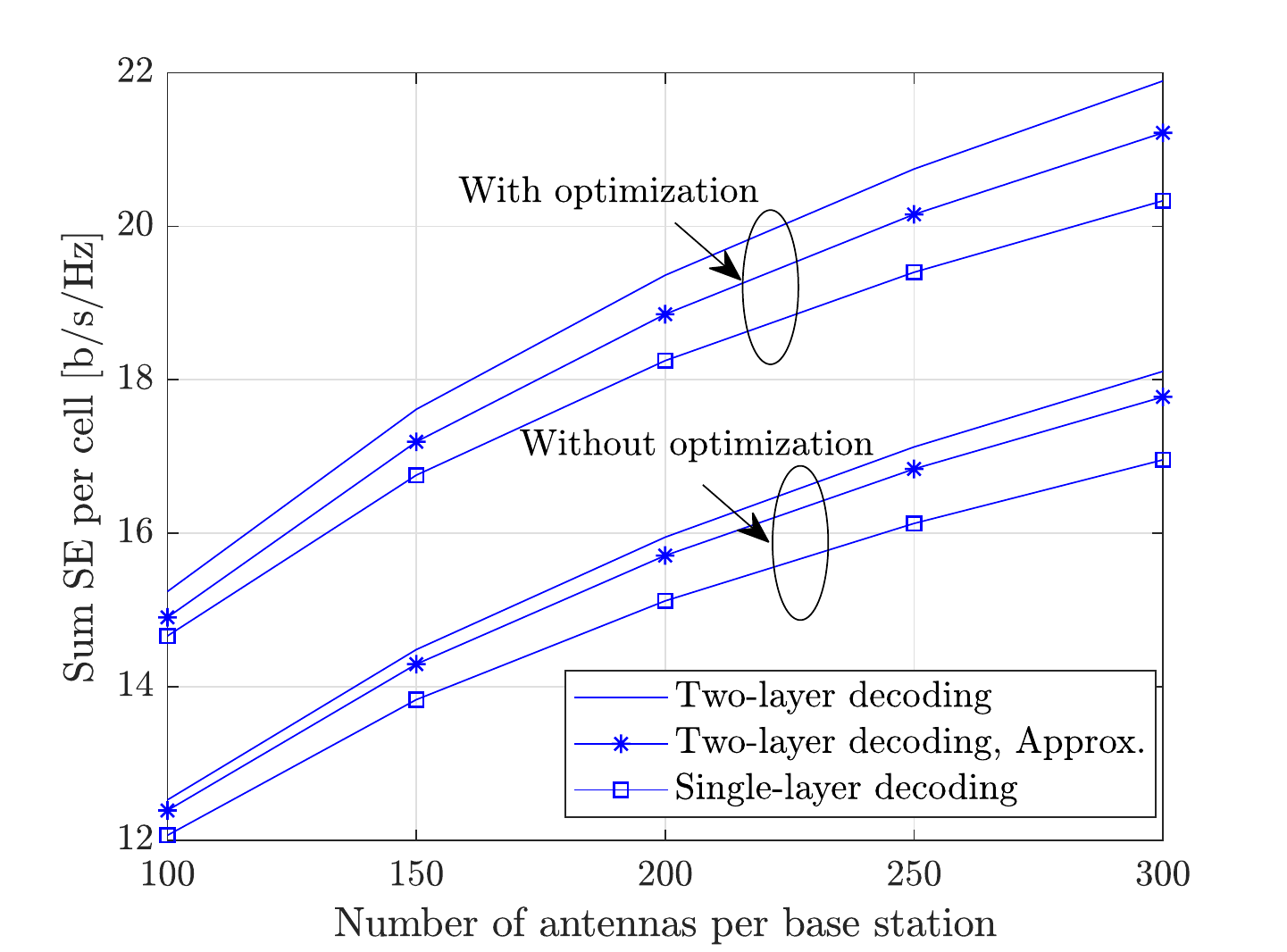} \vspace*{-0.15cm}
			\caption{Sum \abb{SE} per cell [b/s/Hz] versus different number of BS antennas. The network uses \abb{MMSE} estimation, $K= 5$, and $\varsigma = 0.5$.}
			\label{Fig-AntennaMMSE}
			\vspace*{-0.15cm}
		\end{minipage}
		\hfill
		\begin{minipage}{0.48\textwidth}
			\centering
			\includegraphics[trim=0.5cm 0.1cm 1cm 0.5cm, clip=true, width=3in]{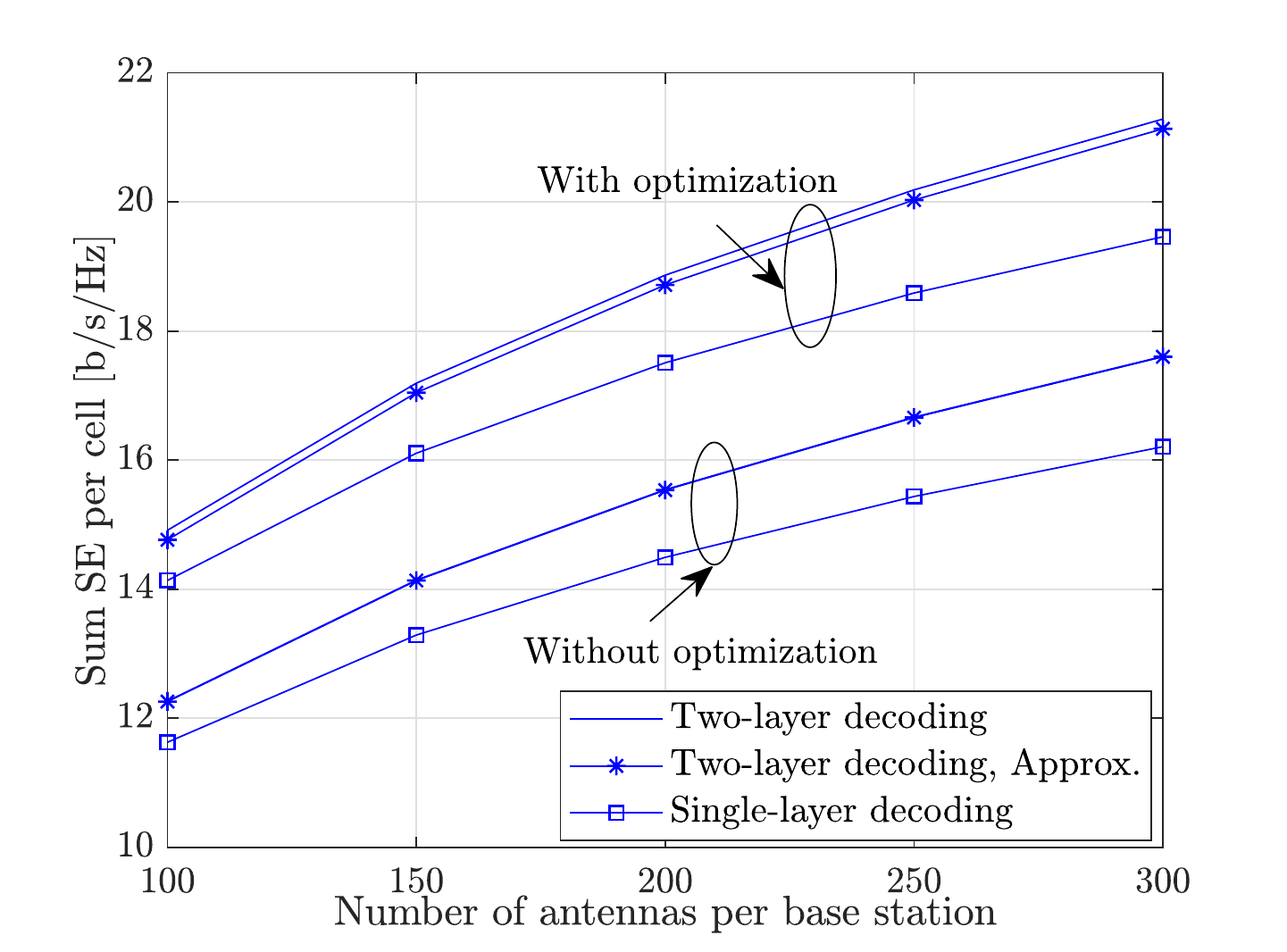} \vspace*{-0.15cm}
			\caption{Sum \abb{SE} per cell [b/s/Hz] versus different number of BS antennas. The network uses \abb{EW-MMSE} estimation, $K= 5$, and $\varsigma = 0.5$.}
			\label{Fig-AntennaSMMSE}
			\vspace*{-0.15cm}
		\end{minipage}
	\end{figure*}
	\begin{figure*}[t]
		\begin{minipage}{0.48\textwidth}
			\centering
			\includegraphics[trim=0.5cm 0.1cm 1cm 0.5cm, clip=true, width=3in]{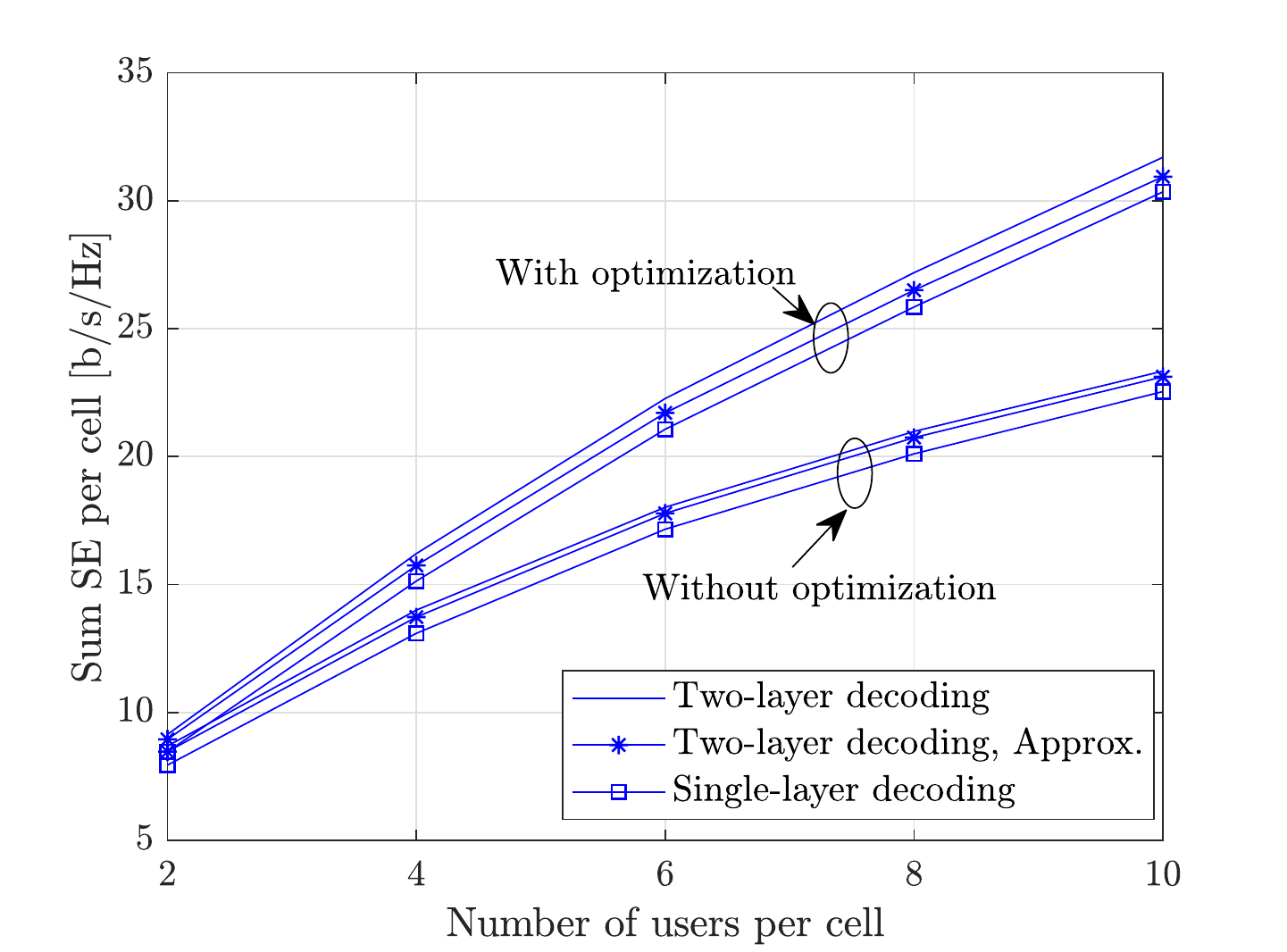} \vspace*{-0.15cm}
			\caption{Sum \abb{SE} per cell [b/s/Hz] versus different number of users per cell. The network uses \abb{MMSE} estimation, $M= 200$, and $\varsigma = 0.5$.}
			\label{Fig-UserMMSE}
			\vspace*{-0.15cm}
		\end{minipage}
		\hfill
		\begin{minipage}{0.48\textwidth}
			\centering
			\includegraphics[trim=0.5cm 0.1cm 1cm 0.5cm, clip=true, width=3in]{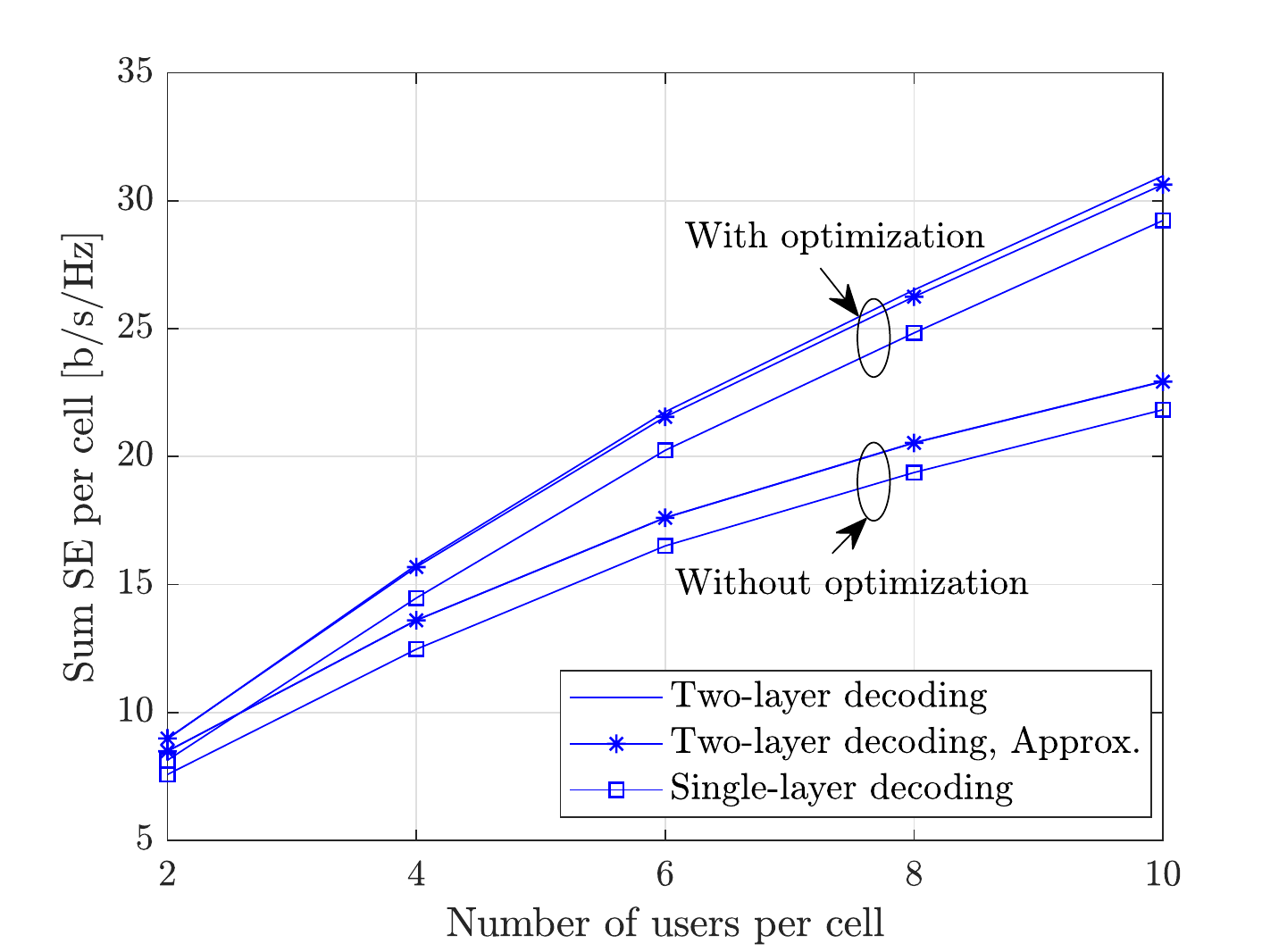} \vspace*{-0.15cm}
			\caption{Sum \abb{SE} per cell [b/s/Hz] versus different number of users per cell. The network uses \abb{EW-MMSE} estimation, $M= 200$, and $\varsigma = 0.5$.}
			\label{Fig-UserSMMSE}
			\vspace*{-0.15cm}
		\end{minipage}
	\end{figure*}

	We assume that the power is fixed to $200$\,mW for each pilot symbol and it is also the maximum power that each user can allocate to a data symbol, i.e., $P_{\max,l,k} = 200$\,mW. 
	Extensive numerical results will be presented from the following methods with either \abb{MMSE} or \abb{EW-MMSE} estimation:
	\begin{itemize}
		\item[\textit{(i)}] \textit{Single-layer decoding system with fixed data power}: Each \abb{BS} uses \abb{MRC} for data decoding for the users in the own cell, and all users transmit data symbols with the same power $200$\,mW.  
		
		\item[\textit{(ii)}] \textit{Single-layer decoding system with data power control}: This benchmark is similar to \textit{(i)}, but the data powers are optimized using the weighted \abb{MMSE} algorithm in Corollary~\ref{Corollary: IterativeSol}.
		
		\item[\textit{(iii)}] \textit{Two-layer decoding system with fixed data power and \abb{LSFD} vectors}: The network deploys the two-layer decoding as shown in Fig.~\ref{fig:decoder}, using \abb{MRC} and \abb{LSFD}. The data symbols have fixed power $200$\,mW and the \abb{LSFD} vectors are computed using Corollary~\ref{corollary:Opt_LSFD}. 
		
		\item[\textit{(iv)}]  \textit{Two-layer decoding system with fixed data power and approximate \abb{LSFD} vectors}: This benchmark is similar to \textit{(iii)}, but the \abb{LSFD} vectors are computed using only the diagonal elements of the channel correlation matrices. This allows us to study how inaccurate \abb{LSFD} vectors degrade sum \abb{SE}. 
		
		\item[\textit{(v)}] \textit{Two-layer decoding system with optimized data power and  \abb{LSFD} vectors}: This benchmark is similar to \textit{(iii)}, but the data powers and \abb{LSFD} vectors are computed using the weighted \abb{MMSE} algorithm as in Theorem~\ref{Theorem: IterativeSol}.
		
		\item[\textit{(vi)}] \textit{Two-layer decoding system with optimized data power and approximate \abb{LSFD} vectors}: This benchmark is similar to \textit{(v)}, but the \abb{LSFD} vectors are computed by Corollary~\ref{corollary:Opt_LSFD} based on only the diagonal coefficients of the channel correlation matrices.
	\end{itemize}

	\subsection{Convergence}
	Fig.~\ref{Fig-Convergence} shows the convergence of the proposed methods for sum \abb{SE} optimization in Theorem~\ref{Theorem: IterativeSol} and Corollary~\ref{Corollary: IterativeSol} for both \abb{MMSE} and \abb{EW-MMSE} estimation. 
	From the initial data powers, in the feasible set, updating the optimization variables gives improved sum \abb{SE} in every iteration. 
	For a system that uses \abb{MMSE} estimation and \abb{LSFD}, the sum \abb{SE} per cell is about $22.2\%$ better at the stationary point than at the initial point. 
	The corresponding improvement for the system that uses \abb{EW-MMSE} estimation is about $24.7\%$. 
	By using \abb{MMSE} estimation, the two-layer decoding system gives $2.4\%$ better sum \abb{SE} than a system with single-layer decoding. 
	The corresponding gain for \abb{EW-MMSE} estimation is up to $7.5\%$. 
	Besides, \abb{MMSE} estimation gives an \abb{SE} that is up to $12.1\%$ higher than \abb{EW-MMSE}.  
	The proposed optimization methods need around $100$ iterations to converge, but the complexity is low since every update in the algorithm consists of evaluating a closed-form expression.
	
	The approximation in $(vi)$ of the channel correlation matrix as diagonal breaks the convergence statement in Theorem~\ref{Theorem: IterativeSol}, so it is not included in Fig.~\ref{Fig-Convergence}. 
	Hereafter, when we consider $(vi)$ for comparison, we select the highest sum \abb{SE} among $500$ iterations.
	
	\subsection{Impact of Spatial Correlation}
	Figs.~\ref{Fig-MMSE-DiffCofact} and \ref{Fig-SMMSE-DiffCofact} show the sum \abb{SE} per cell as a function of the channel correlation magnitude $\varsigma$ for a multi-cell Massive \abb{MIMO} system using either \abb{MMSE} or \abb{EW-MMSE} estimation. 
	First, we observe the large gains in sum \abb{SE} attained by using \abb{LSFD} detection. 
	With \abb{MMSE} estimation, the sum \abb{SE} increases with up to $7.5\%$ in the case of equally fixed data powers, while that gain is about $7.9\%$ for jointly optimizing data powers and \abb{LSFD} vectors. 
	The same maximum gains are observed when using \abb{EW-MMSE} estimation, since these gains occur when $\varsigma = 0$ (in which case \abb{MMSE} and \abb{EW-MMSE} coincide). 
	The performance of \abb{EW-MMSE} estimation is worse than that of \abb{MMSE} when the correlation magnitude is increased, because \abb{EW-MMSE} does not use the knowledge of the spatial correlation to improve the estimation quality. For example, \abb{MMSE} estimation obtains $6.68\%$ and $9.91\%$ higher SE than \abb{EW-MMSE} with and without data power control, respectively. The advantage of \abb{EW-MMSE} is the reduced computational complexity.
	
	Interestingly, Figs.~\ref{Fig-MMSE-DiffCofact} and \ref{Fig-SMMSE-DiffCofact} indicate that it is sufficient to use only the large-scale fading coefficients when constructing the \abb{LSFD} vectors in many scenarios. 
	Specifically, in the system with \abb{EW-MMSE} estimation, the approximate \abb{LSFD} vectors yield almost the same sum \abb{SE} as the optimal ones.
	Meanwhile, in the case of \abb{MMSE} estimation, the loss from the approximation of LSFD vectors, which are only based on the diagonal values of channel correlation matrices, grows up to $6.7\%$ when having a correlation magnitude of $0.8$. In comparison to \abb{MMSE}, increasing the spatial correlation does not improve the performance of the approximate \abb{LFSD} vectors when using \abb{EW-MMSE} estimation, since it does not utilize the spatial correlation in the estimation phase. Consequently, \abb{MMSE} and \abb{EW-MMSE}  perform almost equally with the maximum difference $0.71\%$.
	
	Moreover, the performance is greatly improved when the data powers are optimized. 
	The gain varies from $17.9\%$ to $20.7\%$. 
	The gap becomes larger as the channel correlation magnitude increases. 
	This shows the importance of doing joint data power control and \abb{LSFD} optimization in Massive \abb{MIMO} systems with spatially correlated channels.

	\subsection{Impact of Number of Antennas and Users}
	Figs.~\ref{Fig-AntennaMMSE} and \ref{Fig-AntennaSMMSE} show the sum \abb{SE} per cell as a function of the number of BS antennas with \abb{MMSE} and \abb{EW-MMSE} estimation, respectively. 
	Two-layer decoding gives improvements in all the cases. 
	In case of \abb{MMSE} estimation, by increasing the number of BS antennas from $100$ to $300$, the gain of using \abb{LSFD} increases from $4.0\%$ to $7.7\%$ with optimized data power, and from $3.8\%$ to $6.8\%$ with equal data power. 
	In case of \abb{EW-MMSE} estimation and fixed transmitted power level, \abb{LSFD} increase the sum \abb{SE} by $5.5\%$ to $8.6\%$ compared to using only \abb{MRC}. 
	Besides, by optimizing the data powers, the gain from using \abb{LSFD} is between $5.5\%$ and $9.4\%$. Among all considered scenarios, \abb{MMSE} estimation provides up to $4.6\%$ higher sum \abb{SE} than \abb{EW-MMSE}.

	Figs.~\ref{Fig-UserMMSE} and \ref{Fig-UserSMMSE} show the sum \abb{SE} per cell as a function of the number of users per cell when using \abb{MMSE} and \abb{EW-MMSE} estimation, respectively. 
	The figures demonstrate how the gain from power control increases with the number of users. 
	The gain grows from  $5.2\%$ for two users to $35.8\%$ for for ten users.
	The approximated version of \abb{LSFD} detection works properly in all tested scenarios, in the sense that the maximum loss in \abb{SE} is only up to $2.9\%$. In these figures, \abb{MMSE} provides up to $5\%$ higher SE than \abb{EW-MMSE}.
	
	\begin{figure}[t]
		\centering
		\includegraphics[trim=0.5cm 0.0cm 1cm 0.5cm, clip=true, width=3in]{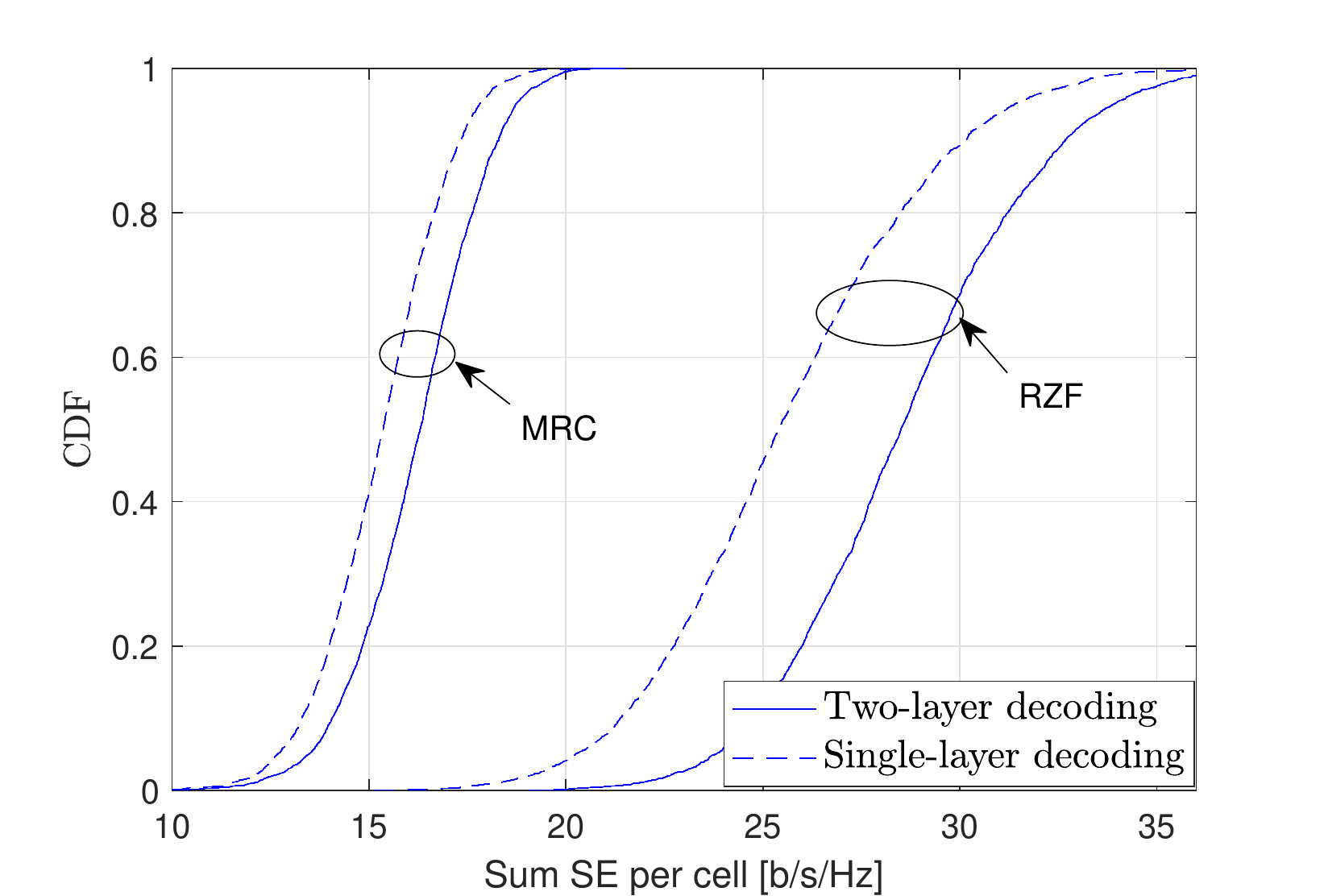} \vspace*{-0.15cm}
		\caption{\abb{CDF} of sum \abb{SE} per cell [b/s/Hz] for \abb{MRC} and \abb{RZF}. The network uses $M=200, K =5,\varsigma = 0.5,$ and \abb{MMSE} estimation.}
		\label{Fig-OtherLinears}
		\vspace*{-0.15cm}
	\end{figure}
	\subsection{Performance of Regularized Zero-Forcing}
	Fig.~\ref{Fig-OtherLinears} compares the cumulative distribution function (\abb{CDF}) of the sum \abb{SE} per cell when using either \abb{MRC} or \abb{RZF} in the first layer. \abb{MMSE} estimation is used for channel estimation. An equal pilot and data power of $200$ mW is allocated to each transmitted symbol. We first observe that  \abb{RZF} achieves much higher SE than  \abb{MRC}. The performance gain is $67.2\%$ and $76.6\%$ with single-layer and two-layer decoding, respectively. Because \abb{RZF} cancels non-coherent interference effectively at the first layer, the second layer can improve the sum \abb{SE} by $11.80\%$. Meanwhile, the improvement is only $5.84\%$ if \abb{MRC} is used. 
	
	\section{Conclusion} \label{Section:Conclustion}
	This paper has investigated the performance of the \abb{LSFD} design in mitigating mutual interference for multi-cell Massive \abb{MIMO} systems with correlated Rayleigh fading. 
	This decoding design is deployed as a second decoding layer to mitigate the interference that remains after classical linear decoding. 
	Numerical results demonstrate the effectiveness of the \abb{LSFD} in reducing pilot contamination with the improvement of sum \abb{SE} for each cell up to about $10\%$ in the tested scenarios. 
	We have also investigated joint data power control and \abb{LSFD} design, which efficiently improves the sum \abb{SE} of the network. 
	Even though the sum \abb{SE} optimization is a non-convex and NP-hard problem, we proposed an iterative approach to obtain a stationary point with low computational complexity. 
	Numerical results showed improvements of sum \abb{SE} for each cell up to more than $20\%$ with using the limited number of BS antennas.
	\appendices
	\section{Useful Lemma and Definition}
	\begin{lemma}[Lemma~$2$ in \cite{Bjornson2015b}] \label{Lemma:Supplementary}
		Let a random vector be distributed as $\mathbf{u} \sim \mathcal{CN}(\mathbf{0}, \pmb{\Lambda})$ and consider an arbitrary, deterministic matrix $\mathbf{M}$. It holds that
		\begin{equation}
		\Exp \{ |\mathbf{u}^\conjtr \mathbf{M} \mathbf{u} |^2 \} = |\mathrm{tr}(\pmb{\Lambda} \mathbf{M} )|^2 + \mathrm{tr} (\pmb{\Lambda} \mathbf{M} \pmb{\Lambda} \mathbf{M}^\conjtr ). 
		\end{equation}
	\end{lemma}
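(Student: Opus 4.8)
The plan is to reduce the claim to a fourth-order moment computation for the zero-mean circularly symmetric complex Gaussian vector $\mathbf{u}$. First I would expand the scalar quadratic form in coordinates, writing $\mathbf{u}^\conjtr \mathbf{M} \mathbf{u} = \sum_{i,j} u_i^\ast M_{ij} u_j$, so that
\[
|\mathbf{u}^\conjtr \mathbf{M} \mathbf{u}|^2 = \sum_{i,j,k,l} M_{ij} M_{kl}^\ast \, u_i^\ast u_j u_k u_l^\ast .
\]
Taking the expectation moves the entire problem onto the single fourth moment $\Exp\{u_i^\ast u_j u_k u_l^\ast\}$, a product of two conjugated and two unconjugated Gaussian entries.

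The key step is to invoke Isserlis' (Wick's) theorem for zero-mean circularly symmetric complex Gaussians. Because of circular symmetry, every pairing of two conjugated or two unconjugated factors vanishes ($\Exp\{u_a u_b\} = 0$), so only the two conjugate--unconjugate pairings survive, giving
\[
\Exp\{u_i^\ast u_j u_k u_l^\ast\} = \Lambda_{ji}\Lambda_{kl} + \Lambda_{jl}\Lambda_{ki},
\]
where I use $\Exp\{u_a u_b^\ast\} = \Lambda_{ab}$. Substituting this back splits the expectation into two index sums, one for each pairing.

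What remains is pure bookkeeping: I would reassemble each index sum into a trace. The first pairing factorizes as $\big(\sum_{i,j} M_{ij}\Lambda_{ji}\big)\big(\sum_{k,l} M_{kl}^\ast \Lambda_{kl}\big) = \mathrm{tr}(\pmb{\Lambda}\mathbf{M})\,\overline{\mathrm{tr}(\pmb{\Lambda}\mathbf{M})} = |\mathrm{tr}(\pmb{\Lambda}\mathbf{M})|^2$, where Hermitian symmetry $\pmb{\Lambda}^\conjtr = \pmb{\Lambda}$ is used to identify $\sum_{k,l} M_{kl}^\ast \Lambda_{kl} = \mathrm{tr}(\pmb{\Lambda}\mathbf{M}^\conjtr) = \overline{\mathrm{tr}(\pmb{\Lambda}\mathbf{M})}$. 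The second pairing telescopes along the matrix chain: summing over $i,j$ first gives $\sum_{i,j}\Lambda_{ki} M_{ij}\Lambda_{jl} = (\pmb{\Lambda}\mathbf{M}\pmb{\Lambda})_{kl}$, and then $\sum_{k,l}(\pmb{\Lambda}\mathbf{M}\pmb{\Lambda})_{kl}\,M_{kl}^\ast = \mathrm{tr}(\pmb{\Lambda}\mathbf{M}\pmb{\Lambda}\mathbf{M}^\conjtr)$. Adding the two pieces yields the claim.

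The main obstacle I anticipate is not conceptual but notational: keeping the conjugate-transpose bookkeeping straight so that the surviving pairings collapse to exactly $|\mathrm{tr}(\pmb{\Lambda}\mathbf{M})|^2$ and $\mathrm{tr}(\pmb{\Lambda}\mathbf{M}\pmb{\Lambda}\mathbf{M}^\conjtr)$ rather than to transposed or conjugated variants. A cleaner alternative that sidesteps most of this is a whitening argument: write $\mathbf{u} = \pmb{\Lambda}^{1/2}\mathbf{w}$ with $\mathbf{w} \sim \mathcal{CN}(\mathbf{0},\mathbf{I})$ and $\pmb{\Lambda}^{1/2}$ Hermitian, reduce to $\Exp\{|\mathbf{w}^\conjtr \widetilde{\mathbf{M}}\mathbf{w}|^2\}$ with $\widetilde{\mathbf{M}} = \pmb{\Lambda}^{1/2}\mathbf{M}\pmb{\Lambda}^{1/2}$, compute the identity-covariance moment directly using independence of the entries (with $\Exp\{|w_i|^4\} = 2$), and finally substitute back via $\mathrm{tr}(\widetilde{\mathbf{M}}) = \mathrm{tr}(\pmb{\Lambda}\mathbf{M})$ and $\mathrm{tr}(\widetilde{\mathbf{M}}\widetilde{\mathbf{M}}^\conjtr) = \mathrm{tr}(\pmb{\Lambda}\mathbf{M}\pmb{\Lambda}\mathbf{M}^\conjtr)$.
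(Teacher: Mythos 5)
Your proof is correct, but note that it follows a genuinely different route from the paper for a simple reason: the paper does not prove this lemma at all. It is imported verbatim as Lemma~2 of the cited reference \cite{Bjornson2015b}, so the paper's ``proof'' is a citation, whereas you supply a self-contained derivation. Your argument is sound at every step: under circular symmetry only the conjugate--unconjugate pairings survive in Wick's theorem, giving $\Exp\{u_i^\ast u_j u_k u_l^\ast\} = \Lambda_{ji}\Lambda_{kl} + \Lambda_{jl}\Lambda_{ki}$; the first pairing factorizes into $\mathrm{tr}(\pmb{\Lambda}\mathbf{M})\,\mathrm{tr}(\pmb{\Lambda}\mathbf{M}^\conjtr) = |\mathrm{tr}(\pmb{\Lambda}\mathbf{M})|^2$, where the identification $\mathrm{tr}(\pmb{\Lambda}\mathbf{M}^\conjtr) = \overline{\mathrm{tr}(\pmb{\Lambda}\mathbf{M})}$ indeed requires the Hermitian symmetry $\pmb{\Lambda}^\conjtr = \pmb{\Lambda}$ that you correctly invoke; and the second pairing telescopes to $\mathrm{tr}(\pmb{\Lambda}\mathbf{M}\pmb{\Lambda}\mathbf{M}^\conjtr)$. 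Your whitening alternative is equally valid and arguably cleaner: with $\widetilde{\mathbf{M}} = \pmb{\Lambda}^{1/2}\mathbf{M}\pmb{\Lambda}^{1/2}$ the identity-covariance case follows from independence of the entries of $\mathbf{w}$ together with $\Exp\{|w_i|^4\}=2$, and the substitution back uses only the cyclic property of the trace and the Hermitian square root. What the paper's citation buys is brevity; what your derivation buys is verifiability and independence from the external reference---either version would serve a reader who wants to check the fourth-moment identity that underpins the closed-form SINR expressions in Theorem~\ref{Theorem1}.
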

	\begin{definition}[Stationary point, \cite{yang2017a}]  \label{Def:StationaryPoint}
		\footnote{Definition~\ref{Def:StationaryPoint} guarantees the existence of at least one stationary point for any non-convex problem as long as the feasible set is convex. The stationary point is even the global optimum if \eqref{Problem:GeneralOpti} is a convex problem.}
		Consider the optimization problem
		\begin{equation} \label{Problem:GeneralOpti}
		\begin{aligned}
		& \underset{ \mathbf{x}\in \mathcal{X} }{\mathrm{minimize}}
		&&  g(\mathbf{x}),
		\end{aligned}
		\end{equation}
		where the feasible set $\mathcal{X}$ is convex and $g(\mathbf{x}): \mathbb{R}^n \rightarrow \mathbb{R}$ is differentiable. 
		A point $\mathbf{y} \in \mathcal{X}$ is a stationary point to the optimization problem~\eqref{Problem:GeneralOpti} if the following property is true for all $\mathbf{x} \in \mathcal{X}$:
		\begin{equation}
		(\mathbf{x} - \mathbf{y})^\tr \nabla g(\mathbf{x})|_{\mathbf{y}} \geq 0.
		\end{equation}
		Note that a stationary point $\mathbf{y}$ of $g(\mathbf{x})$ can be obtained by solving the equation $\nabla g (\mathbf{x})  =\mathbf{0}$.
	\end{definition}
	
	\section{Proof of Theorem~\ref{Theorem1v1}} \label{Appendix: Proof-Theorem-v1}
	The numerator of \eqref{eq:GeneralSINR} is reformulated into
	\begin{equation}
	\Exp\{ |\mathtt{DS}_{l,k}|^2 \} = p_{l,k} | \mathbf{a}_{l,k}^\conjtr \mathbf{b}_{l,k}|^2.
	\end{equation}
	Meanwhile, the pilot contamination term in the denominator of \eqref{eq:GeneralSINR} is rewritten as
	\begin{equation}
	\begin{split}
	\Exp \{ |\mathtt{PC}_{l,k}|^2 \} &= \sum_{\substack{l''=1 \\ l'' \neq l }}^L p_{l'',k} \left|\sum_{l'=1}^L (a_{l,k}^{l'})^{\ast} \Exp \{\mathbf{v}_{l',k}^\conjtr \mathbf{h}_{l'',k}^{l'} \} \right|^2  \\ 
	& = \sum_{\substack{l''=1 \\ l'' \neq l }}^L p_{l'',k} \mathbf{a}_{l,k}^\conjtr \mathbf{b}_{l'',k} \mathbf{b}_{l'',k}^\conjtr \mathbf{a}_{l,k} \\
	& =  \mathbf{a}_{l,k}^\conjtr \mathbf{C}_{l,k}^{(1)} \mathbf{a}_{l,k}.
	\end{split}
	\end{equation}
	The beamforming gain uncertainty term in the denominator of \eqref{eq:GeneralSINR} is rewritten as
	\begin{equation}
	\begin{split}
	\Exp \{ | \mathtt{BU}_{l,k} |^2 \} = \mathbf{a}_{l,k}^\conjtr \mathbf{C}_{l,k}^{(2)} \mathbf{a}_{l,k}.
	\end{split}
	\end{equation}
	Similarly, the non-coherent interference term in the denominator is computed as
	\begin{equation}
	\begin{split}
	&\sum_{l'=1}^L \sum_{\substack{k'=1\\ k' \neq k}}^K \Exp \{ | \mathtt{NI}_{l',k'} |^2 \} \\
	&\quad= \sum_{l'=1}^L \sum_{\substack{k'=1\\ k' \neq k}}^K \left|  \sum_{l''=1}^L (a_{l,k}^{l''})^{\ast} \sqrt{p_{l',k'}} \mathbf{v}_{l'',k}^\conjtr \mathbf{h}_{l',k'}^{l''} \right|^2 \\
	&\quad= \mathbf{a}_{l,k}^H \mathbf{C}_{l,k}^{(3)} \mathbf{a}_{l,k}
	\end{split}
	\end{equation}
	and the additive noise term is computed as
	\begin{equation}
	\begin{split}
	\Exp \{ | \mathtt{AN}_{l,k} |^2 \} &=        \Exp \left\{ \left| \sum_{l'=1}^L (a_{l,k}^{l'})^{\ast} \mathbf{v}_{l',k}^\conjtr \mathbf{n}_{l'} \right|^2 \right\} \\
	&= \mathbf{a}_{l,k}^H \mathbf{C}_{l,k}^{(4)} \mathbf{a}_{l,k}.
	\end{split}
	\end{equation}
	The lower-bound on the uplink capacity given in Lemma~\ref{lemma:General_Rate} is written as
	\begin{equation} \label{eq:RayleighQuotientRate}
	R_{l,k} = \left( 1- \frac{\tau_\text{p}}{\tau_\text{c}} \right) \log_2 \left(1 + \frac{p_{l,k} |\mathbf{a}_{l,k}^\conjtr \mathbf{b}_{l,k} |^2}{\mathbf{a}_{l,k}^\conjtr  \left(\sum\limits_{i=1}^{4}\mathbf{C}_{l,k}^{(i)} \right) \mathbf{a}_{l,k}} \right).
	\end{equation}
	Since the SINR expression in \eqref{eq:RayleighQuotientRate} is a \emph{generalized Rayleigh quotient} with respect to $\mathbf{a}_{l,k}$, we can apply \cite[Lemma~B.10]{Bjornson2017bo} to obtain the maximizing vector $\mathbf{a}_{l,k}$ as in \eqref{eq:LSFDVector}. 
	Hence, using \eqref{eq:LSFDVector} in \eqref{eq:RayleighQuotientRate}, maximizes the \abb{SE} for both \abb{MMSE} and \abb{EW-MMSE} estimation.  The maximum \abb{SE} is given by \eqref{eq:OptimalRate} in the theorem.
	
	\section{Proof of Theorem~\ref{Theorem1} in case of MMSE} \label{Appendix: Proof-Theorem-1}
	Here the expectations in \eqref{eq:GeneralSINR} are computed.  The numerator of \eqref{eq:GeneralSINR}, $\Exp\{ |\mathtt{DS}_{l,k}|^2 \}$, becomes
	\begin{equation} \label{eq:SENumberatorMMSE}
	\begin{split}
	& p_{l,k} \left| \sum_{l'=1}^L (a_{l,k}^{l'})^{\ast} \Exp \left\{ \sqrt{\hat{p}_{l',k}} (\mathbf{R}_{l',k}^{l'} \pmb{\Psi}_{l',k}^{-1} \mathbf{y}_{l',k})^\conjtr \mathbf{h}_{l,k}^{l'} \right\} \right|^2 \\
	&= p_{l,k} \left| \sum_{l'=1}^L (a_{l,k}^{l'})^{\ast} \sqrt{\hat{p}_{l',k}}  \mathrm{tr} \left(\pmb{\Psi}_{l',k}^{-1} \mathbf{R}_{l',k}^{l'} \Exp \{ \mathbf{h}_{l,k}^{l'} \mathbf{y}_{l',k}^\conjtr \} \right) \right|^2 \\
	&=  p_{l,k} \left| \sum_{l'=1}^L (a_{l,k}^{l'})^{\ast}  \tau_{\text{p}} \sqrt{\hat{p}_{l',k} \hat{p}_{l,k} }  \mathrm{tr} \left(\pmb{\Psi}_{l',k}^{-1} \mathbf{R}_{l',k}^{l'} \mathbf{R}_{l,k}^{l'} \right) \right|^2 \\
	&= \tau_{\text{p}} p_{l,k} \left| \sum_{l'=1}^L (a_{l,k}^{l'})^{\ast}  b_{l,k}^{l'} \right|^2,
	\end{split}
	\end{equation}
	The variance of the pilot contamination in the denominator of \eqref{eq:GeneralSINR} is computed as 
	\begin{equation} \label{eq:MMSEPilotContamination}
	\begin{split}
	&  \sum_{\substack{l''=1 \\ l'' \neq l }}^L p_{l'',k} \left| \sum_{l'=1}^L (a_{l,k}^{l'})^{\ast}  \tau_\text{p} \sqrt{\hat{p}_{l',k} \hat{p}_{l'',k} } \mathrm{tr} \left( \pmb{\Psi}_{l',k}^{-1} \mathbf{R}_{l',k}^{l'} \mathbf{R}_{l'',k}^{l'} \right) \right|^2 \\
	&=  \tau_\text{p} \sum_{\substack{l''=1 \\ l'' \neq l }}^L p_{l'',k} \left| \sum_{l'=1}^L (a_{l,k}^{l'})^{\ast}  b_{l'',k}^{l'} \right|^2.
	\end{split}
	\end{equation}
	The variance of the beamforming gain uncertainty, $\Exp \{ |\mathtt{BU}_{l,k}|^2\}$, is evaluated as
	\begin{multline} \label{eq:BeamUncertaintyMMSE}
	\sum_{l'=1}^L p_{l',k}   \sum_{l''=1}^L |a_{l,k}^{l''}|^2   \Biggl( \underbrace{\Exp \left\{ |(\hat{\mathbf{h}}_{l'',k}^{l''})^\conjtr \mathbf{h}_{l',k}^{l''} |^2 \right\}}_{\mathcal{I}_1} \\
	- \underbrace{\left|\Exp \{(\hat{\mathbf{h}}_{l'',k}^{l''})^\conjtr \mathbf{h}_{l',k}^{l''} \} \right|^2}_{\mathcal{I}_2} \Biggr),
	\end{multline}
	where the expectation $\mathcal{I}_1$ is computed by applying the property in Lemma~\ref{Lemma:Supplementary} as
	\begin{equation} \label{eq:I1}
	\begin{split}
	\mathcal{I}_1 =& \Exp \left\{ |(\hat{\mathbf{h}}_{l'',k}^{l''})^\conjtr \hat{\mathbf{h}}_{l',k}^{l''} |^2 \right\} + \Exp \left\{ |(\hat{\mathbf{h}}_{l'',k}^{l''})^\conjtr \mathbf{e}_{l',k}^{l''} |^2 \right\} \\
	=& \hat{p}_{l'',k} \hat{p}_{l',k} \Exp \left\{ | \mathbf{y}_{l'',k}^\conjtr \pmb{\Psi}_{l'',k}^{-1} \mathbf{R}_{l'',k}^{l''} \mathbf{R}_{l',k}^{l''} \pmb{\Psi}_{l'',k}^{-1} \mathbf{y}_{l'',k}|^2 \right.  \\ &\left. + \mathrm{tr} \left(\Exp \{ \mathbf{e}_{l',k}^{l''} (\mathbf{e}_{l',k}^{l''})^\conjtr \hat{\mathbf{h}}_{l'',k}^{l''} (\hat{\mathbf{h}}_{l'',k}^{l''})^\conjtr\} \right) \right\}\\
	=& \tau_\text{p}^2 \hat{p}_{l'',k} \hat{p}_{l',k} |\mathrm{tr} (\pmb{\Psi}_{l'',k}^{-1} \mathbf{R}_{l'',k}^{l''} \mathbf{R}_{l',k}^{l''})|^2  \\
	&+ \tau_{\text{p}} \hat{p}_{l'',k} \mathrm{tr} (\mathbf{R}_{l',k}^{l''} \mathbf{R}_{l'',k}^{l''} \pmb{\Psi}_{l'',k}^{-1} \mathbf{R}_{l'',k}^{l''})\\
	=& \tau_\text{p} \left((b_{l',k}^{l''})^2 + c_{l'',k}^{l',k} \right)
	\end{split}
	\end{equation}
	and the expectation $\mathcal{I}_2$ is computed as
	\begin{equation}\label{eq:I2}
	\begin{split}
	\mathcal{I}_2 &= \left| \Exp \{ (\hat{\mathbf{h}}_{l'',k}^{l''} )^\conjtr \hat{\mathbf{h}}_{l',k}^{l''} \} \right|^2 \\
	&= \hat{p}_{l'',k} \hat{p}_{l',k} \left| \Exp \{ \mathbf{y}_{l'',k}^\conjtr  \pmb{\Psi}_{l'',k}^{-1} \mathbf{R}_{l'',k}^{l''} \mathbf{R}_{l',k}^{l''} \pmb{\Psi}_{l'',k}^{-1} \mathbf{y}_{l'',k} \} \right|^2\\
	&= \tau_{\text{p}}^2 \hat{p}_{l'',k} \hat{p}_{l',k} |\mathrm{tr} (\pmb{\Psi}_{l'',k}^{-1} \mathbf{R}_{l'',k}^{l''} \mathbf{R}_{l',k}^{l''})|^2 \\
	&= \tau_{\text{p}} (b_{l',k}^{l''})^2.
	\end{split}
	\end{equation}
	Combining \eqref{eq:BeamUncertaintyMMSE}, \eqref{eq:I1}, and \eqref{eq:I2}, we obtain the variance of the beamforming gain uncertainty as
	\begin{multline} \label{eq:MMSEBeamUncertain}
	\sum_{l''=1}^L \sum_{l'=1}^L p_{l'',k} \tau_{\text{p}} \hat{p}_{l',k} |a_{l,k}^{l'}|^2 \mathrm{tr} (\mathbf{R}_{l'',k}^{l'} \mathbf{R}_{l',k}^{l'}\pmb{\Psi}_{l',k}^{-1} \mathbf{R}_{l',k}^{l'} ) \\
	= \sum_{l''=1}^L   \sum_{l'=1}^L \tau_{\text{p}}  p_{l'',k} |a_{l,k}^{l'}|^2 c_{l',k}^{l'',k} .
	\end{multline}
	The variance of the non-coherent interference term, $\sum_{l'=1}^L \sum_{\substack{k'=1 \\ k' \neq k}}^K \Exp \{ |\mathtt{NI}_{l',k'}|^2\}$, is computed based on the independent channel properties  
	\begin{equation} \label{eq:MMSENonCoh}
	\begin{split} 
	&\sum_{l'=1}^L \sum_{\substack{k'=1 \\ k'\neq k}}^K p_{l',k'}  \sum_{l''=1}^L |a_{l,k}^{l''}|^2  \Exp \left\{\left|(\hat{\mathbf{h}}_{l'',k}^{l''})^\conjtr\mathbf{h}_{l',k'}^{l''} \right|^2 \right\} \\
	&= \sum_{l'=1}^L \sum_{\substack{k'=1 \\ k'\neq k}}^K \sum_{l''=1}^L p_{l',k'} \tau_{\text{p}} \hat{p}_{l'',k}  |a_{l,k}^{l''}|^2  \textrm{tr} \left( \mathbf{R}_{l',k'}^{l''} \mathbf{R}_{l'',k}^{l''}  \pmb{\Psi}_{l'',k}^{-1} \mathbf{R}_{l'',k}^{l''} \right) \\
	&= \sum_{l'=1}^L \sum_{\substack{k'=1 \\ k'\neq k}}^K \sum_{l''=1}^L\tau_{\text{p}}  p_{l',k'}  |a_{l,k}^{l''}|^2 c_{l'',k}^{l',k'}.
	\end{split}
	\end{equation}
	The last expectation in the denominator is computed as
	\begin{equation} \label{eq:MMSENoise}
	\begin{split}
	\Exp \{ |\mathtt{AN}|_{l,k}^2 \} &= \sum_{l'=1}^L |a_{l,k}^{l'}|^2  \Exp \{ | (\hat{\mathbf{h}}_{l',k}^{l'})^\conjtr \mathbf{n}_{l'} |^2 \} \\
	&= \sum_{l'=1}^L |a_{l,k}^{l'}|^2 \sigma^2 \tau_{\text{p}} \hat{p}_{l',k} \textrm{tr}   \left( \mathbf{R}_{l',k}^{l'} \pmb{\Psi}_{l',k}^{-1}  \mathbf{R}_{l',k}^{l'} \right) \\
	&= \sum_{l'=1}^L \tau_{\text{p}} |a_{l,k}^{l'}|^2 d_{l',k}. 
	\end{split}
	\end{equation}
	Using \eqref{eq:SENumberatorMMSE}, \eqref{eq:MMSEPilotContamination}, \eqref{eq:MMSEBeamUncertain}, \eqref{eq:MMSENonCoh}, and \eqref{eq:MMSENoise} in \eqref{eq:GeneralSE}, we obtain the closed-form expression for the SE as shown in Theorem~\ref{Theorem1}.
	
	\section{Proof of Theorem~\ref{Theorem1} in Case of EW-MMSE} \label{Appendix: Proof-Theorem-2}
	The main steps to prove the results for the case of \abb{EW-MMSE} are similar to that of \abb{MMSE}, but the distributions of the estimate and estimation errors are different (and not independent). However, we can use the relationship in Corollary~\ref{Corollary1} between the channels of the users sending non-orthogonal pilot signals to perform the derivation.  
	The main steps of the proof are summarized as follows: The numerator of \eqref{eq:GeneralSINR} is computed based on the relationship between the estimates of the channels to \abb{BS} $l$ and users~$k$ in cells~$l'$ and $l$ as stated in Corollary~\ref{Corollary1}:
	\begin{equation}\label{eq:EMMSENumerator}
	\begin{split}
	&p_{l,k} \left| \sum_{l'=1}^L (a_{l,k}^{l'})^{\ast} \frac{\sqrt{\hat{p}_{l,k}} \beta_{l,k}^{l'} }{\sqrt{\hat{p}_{l',k}} \beta_{l',k}^{l'} } \Exp \{ \|\hat{\mathbf{h}}_{l',k}^{l'} \|^2 \} \right|^2 \\
	&\quad= p_{l,k} \left| \sum_{l'=1}^L (a_{l,k}^{l'})^{\ast} \sqrt{\tau_{\text{p}}} \varrho_{l',k}^{l'} \varrho_{l,k}^{l'}
	\textrm{tr} \left( \pmb{ \Psi}_{l',k} \right)\right|^2 \\
	&\quad= p_{l,k} \left| \sum_{l'=1}^L (a_{l,k}^{l'})^{\ast} b_{l,k}^{l'} \right|^2.
	\end{split}
	\end{equation}
	Similarly, we use the relationship between the channel estimates of users~$k$ in cells~$l'$ and $l''$ to compute the variance of the pilot contamination term in the denominator of \eqref{eq:GeneralSINR} as
	\begin{equation} \label{eq:EMMSEPilotCont}
	\begin{split}
	&\sum_{\substack{l''=1 \\ l'' \neq l }}^L p_{l'',k} \left| \sum_{l'=1}^L (a_{l,k}^{l'})^{\ast} \frac{\sqrt{\hat{p}_{l'',k}} \beta_{l'',k}^{l'} }{ \sqrt{\hat{p}_{l',k}} \beta_{l',k}^{l'} } \Exp \{ \| \hat{\mathbf{h}}_{l',k}^{l'} \|^2 \}  \right|^2 \\
	&= \sum_{\substack{l''=1 \\ l'' \neq l }}^L p_{l'',k}  \left| \sum_{l'=1}^L (a_{l,k}^{l'})^{\ast} \sqrt{\tau_{\text{p}}} \varrho_{l',k}^{l'} \varrho_{l'',k}^{l'}
	\textrm{tr} \left( \pmb{ \Psi}_{l',k} \right) \right|^2 \\
	& = \sum_{\substack{l''=1 \\ l'' \neq l }}^L p_{l'',k}  \left| \sum_{l'=1}^L (a_{l,k}^{l'})^{\ast} b_{l'',k}^{l'} \right|^2.
	\end{split}
	\end{equation}
	The variance of the beam uncertainty term in the denominator of \eqref{eq:GeneralSINR} is computed as
	\begin{multline} \label{eq:EMMSEBeamUncertain}
	\sum_{l'=1}^L p_{l',k}   \sum_{l''=1}^L |a_{l,k}^{l''}|^2   \left( \Exp \left\{ |(\hat{\mathbf{h}}_{l'',k}^{l''})^\conjtr \hat{\mathbf{h}}_{l',k}^{l''} |^2 \right\} \right.\\\left.
	- \left(\Exp \{(\hat{\mathbf{h}}_{l'',k}^{l''})^\conjtr \hat{\mathbf{h}}_{l',k}^{l''} \} \right)^2 \right)  \\
	+ \sum_{l'=1}^L p_{l',k} \Exp \left\{ \left| \sum_{l''=1}^L (a_{l,k}^{l''})^{\ast}  (\hat{\mathbf{h}}_{l'',k}^{l''})^\conjtr \mathbf{e}_{l',k}^{l''}   \right|^2 \right\}.
	\end{multline}
	By using the relationship between the two channel estimates $\hat{\mathbf{h}}_{l'',k}^{l''}$ and $\hat{\mathbf{h}}_{l',k}^{l''}$ in Corollary~\ref{Corollary1}, \eqref{eq:EMMSEBeamUncertain} is equal to
	\begin{equation} \label{eq:EMMSEBeamUncertainv1}
	\begin{split}
	& \sum_{l'=1}^L p_{l',k} \sum_{l''=1}^L |a_{l,k}^{l''}|^2  \frac{\hat{p}_{l',k} (\beta_{l',k}^{l''})^2 }{\hat{p}_{l'',k} (\beta_{l'',k}^{l''})^2} \left( \Exp \{ \|\hat{\mathbf{h}}_{l'',k}^{l''} \|^4 \} \right. \\ 
	& \left. \quad-\left(\Exp \{ 
	\|\hat{\mathbf{h}}_{l'',k}^{l''} \|^2 \} \right)^2 \right) \\
	&\quad+ \sum_{l'=1}^L \sum_{l''=1}^L p_{l',k} \left|  a_{l,k}^{l''} \right|^2 \Exp \left\{ \left| (\hat{\mathbf{h}}_{l'',k}^{l''})^\conjtr \mathbf{e}_{l',k}^{l''} \right|^2 \right\}\\
	&= \sum_{l'=1}^L \sum_{l''=1}^L p_{l',k} \tau_{\text{p}} |a_{l,k}^{l''}|^2 (\varrho_{l'',k}^{l''})^2  (\varrho_{l',k}^{l''})^2 \textrm{tr} \left( \pmb{\Psi}_{l'',k}^2 \right) \\
	&\quad+ \sum_{l'=1}^L \sum_{l''=1}^L p_{l',k} | a_{l,k}^{l''} |^2 (\varrho_{l'',k}^{l''})^2 \\
	&\quad\times \textrm{tr} \left( \left(\mathbf{R}_{l',k}^{l''} - \tau_{\text{p}} (\varrho_{l',k}^{l''})^2 \pmb{\Psi}_{l'',k} \right) \pmb{\Psi}_{l'',k} \right)\\
	&= \sum_{l'=1}^L \sum_{l''=1}^L p_{l',k}  |a_{l,k}^{l''} |^2 (\varrho_{l'',k}^{l''})^2 \textrm{tr} \left( \mathbf{R}_{l',k}^{l''} \pmb{\Psi}_{l'',k}   \right)\\
	&= \sum_{l'=1}^L \sum_{l''=1}^L p_{l',k}  |a_{l,k}^{l''} |^2 c_{l'',k}^{l',k}.
	\end{split}
	\end{equation}
	By performing \abb{MMSE} estimation separately for every element of a channel vector, it is straightforward to prove that $\hat{\mathbf{h}}_{l'',k}^{l''}$ and $\mathbf{h}_{l',k'}^{l''}$ are independent since the joint density function is the product of their respective marginal densities. Consequently, the variance of the non-coherent interference in the denominator of \eqref{eq:GeneralSINR}, $\sum_{l'=1}^L \sum_{\substack{k'=1, k' \neq k}}^K \Exp \{ |\mathtt{NI}_{l',k'}|^2\}$, is computed as
	\begin{equation} \label{eq:EMMSENonCoh}
	\begin{split}
	& \sum_{l'=1}^L \sum_{\substack{k' =1 \\ k' \neq k}}^K p_{l',k'} \Exp \left\{ \left| \sum_{l''=1}^L (a_{l,k}^{l''})^{\ast}  (\hat{\mathbf{h}}_{l'',k}^{l''})^\conjtr\mathbf{h}_{l',k'}^{l''} \right|^2 \right\} \\
	& = \sum_{l'=1}^L \sum_{\substack{k'=1 \\ k'\neq k}}^K p_{l',k'}  \sum_{l''=1}^L |a_{l,k}^{l''}|^2  \Exp \left\{\left|(\hat{\mathbf{h}}_{l'',k}^{l''})^\conjtr\mathbf{h}_{l',k'}^{l''} \right|^2 \right\} \\
	& = \sum_{l'=1}^L \sum_{\substack{k'=1 \\ k'\neq k}}^K \sum_{l''=1}^L p_{l',k'}   |a_{l,k}^{l''}|^2  (\varrho_{l'',k}^{l''})^2 \textrm{tr} \left( \mathbf{R}_{l',k'}^{l''}  \pmb{\Psi}_{l'',k} \right) \\
	& = \sum_{l'=1}^L \sum_{\substack{k'=1 \\ k'\neq k}}^K \sum_{l''=1}^L p_{l',k'} |a_{l,k}^{l''} |^2 c_{l'',k}^{l',k'}.
	\end{split}
	\end{equation}
	The last expectation in the denominator of \eqref{eq:GeneralSE} is computed by using the fact that the noise and the channel estimate are independent, leading to
	\begin{equation} \label{eq:EMMSENoise}
	\begin{split}
	\Exp \{ |\mathtt{AN}_{l,k}|^2 \} & 
	=  \sum_{l'=1}^L |a_{l,k}^{l'}|^2 (\varrho_{l',k}^{l'} )^2 \sigma^2 \textrm{tr}   \left( \pmb{\Psi}_{l',k}  \right)\\
	&= \sum_{l'=1}^L |a_{l,k}^{l'}|^2 d_{l',k} .
	\end{split}
	\end{equation}
	Applying \eqref{eq:EMMSENumerator}, \eqref{eq:EMMSEPilotCont}, \eqref{eq:EMMSEBeamUncertainv1}, \eqref{eq:EMMSENonCoh}, and \eqref{eq:EMMSENoise} in \eqref{eq:GeneralSE}, we obtain the closed-form expression of the \abb{SE} as shown in Theorem~\ref{Theorem1}.
	
	\section{Proof of Theorem~\ref{Theorem:MMSEOptProblem}} \label{Appendix: Proof-Theorem-MMSEOptProblem}
	The whole system with the aggregate effect of channel and decoding can be viewed as a \abb{SISO} channel with deterministic channel gain, whose \abb{SE} is the equivalent of \eqref{eq: ClosedForm_Rate_MMSE}, namely:
	\begin{multline}
	y_{l,k} = \sqrt{p_{l,k}} \left( \sum_{l'=1}^L (a_{l,k}^{l'})^{\ast} b_{l,k}^{l'} \right) s_{l,k} \\
	+ \sum\limits_{\substack{l'=1 \\ l' \neq l }}^L \sqrt{p_{l',k}} \left( \sum\limits_{l''=1}^L (a_{l,k}^{l''})^{\ast} b_{l',k}^{l''} \right) s_{l',k} \\
	+ \sum\limits_{l'=1}^L \sum\limits_{k'=1}^K \sqrt{\sum\limits_{l''=1}^L p_{l',k'} | a_{l,k}^{l''}|^2 c_{l'',k}^{l',k'}} z_{l',k'} \\
	+ \sqrt{\sum\limits_{l'=1}^L | a_{l,k}^{l'}|^2 d_{l',k}} n_{l,k},
	\end{multline} 
	where $z_{l',k'} \sim \mathcal{CN}(0,1)$ and  $n_{l,k} \sim \mathcal{CN}(0,1)$. The desired signal $s_{l,k}$ is decoded by using a beamforming coefficient $u_{l,k} \in \mathbb{C}$ as
	\begin{multline} \label{eq:Sol_s_lkV1}
	\hat{s}_{l,k} = u_{l,k}y_{l,k} = \sqrt{p_{l,k}} u_{l,k} \left( \sum_{l'=1}^L (a_{l,k}^{l'})^{\ast} b_{l,k}^{l'} \right) s_{l,k} \\
	+ \sum\limits_{\substack{l'=1 \\ l' \neq l }}^L \sqrt{p_{l',k}} u_{l,k} \left( \sum\limits_{l''=1}^L (a_{l,k}^{l''})^{\ast} b_{l',k}^{l''} \right) s_{l',k} \\
	+ \sum\limits_{l'=1}^L \sum\limits_{k'=1}^K u_{l,k} \sqrt{\sum\limits_{l''=1}^L p_{l',k'} | a_{l,k}^{l''}|^2 c_{l'',k}^{l',k'}} z_{l',k'} \\ +  u_{l,k} \sqrt{\sum\limits_{l'=1}^L | a_{l,k}^{l'}|^2 d_{l',k}} n_{l,k}.
	\end{multline}
	We now compute the mean-square error as
	\begin{equation}
	\begin{split}
	&e_{l,k} = \Exp \{ |\hat{s}_{l,k} - s_{l,k}|^2 \} = \Exp \{ | u_{l,k}y_{l,k} - s_{l,k}|^2 \}  \\
	&=  \left|\sqrt{p_{l,k}} u_{l,k} \left( \sum_{l'=1}^L (a_{l,k}^{l'})^{\ast} b_{l,k}^{l'} \right) - 1 \right|^2 + \sum\limits_{\substack{l'=1 \\ l' \neq l }}^L p_{l',k} |u_{l,k}|^2 \times \\
	& \left| \sum\limits_{l''=1}^L  (a_{l,k}^{l''})^{\ast} b_{l',k}^{l''} \right|^2 +  \sum\limits_{l'=1}^L \sum\limits_{k'=1}^K \sum\limits_{l''=1}^L |u_{l,k}|^2 p_{l',k'} |a_{l,k}^{l''}|^2 c_{l'',k}^{l',k'} \\
	&+ |u_{l,k}|^2 \sum\limits_{l'=1}^L | a_{l,k}^{l'}|^2 d_{l',k}.
	\end{split}
	\end{equation}
	After some algebra, we obtain $e_{l,k}$ as shown in \eqref{eq: errorOpt}. The optimal solution of $u_{l,k}$ is computed by equating the first derivative of $e_{l,k}$ with respect to $u_{l,k}$ to zero, leading to
	\begin{multline}
	u_{l,k}^{\ast} \left( \sum\limits_{l'=1}^L p_{l',k} \left| \sum\limits_{l''=1}^L (a_{l,k}^{l''})^{\ast} b_{l',k}^{l''} \right|^2\right.\\\left.
	+ \sum\limits_{l'=1}^L \sum\limits_{k'=1}^K \sum\limits_{l''=1}^L p_{l',k'} |a_{l,k}^{l''}|^2 c_{l'',k}^{l',k'} \right. \\ \left. + \sum\limits_{l'=1}^L | a_{l,k}^{l'}|^2 d_{l',k}  \right) 
	- \sqrt{p_{l,k}} \sum_{l'=1}^L (a_{l,k}^{l'})^{\ast} b_{l,k}^{l'} = 0.
	\end{multline}
	Therefore, the optimal solution $u_{l,k}^{\mathrm{opt}}$ for a given set $\{\mathbf{a}_{l,k}, w_{l,k}, \rho_{l,k} \}$ is given in \eqref{eq:Opt_u_lk}.
	\begin{figure*}
		\begin{equation} \label{eq:Opt_u_lk}
		u_{l,k}^{\mathrm{opt}}= \frac{\sqrt{p_{l,k}}  \sum_{l'=1}^L a_{l,k}^{l'} (b_{l,k}^{l'})^{\ast} }{ \sum_{l'=1 }^L p_{l',k} \left| \sum_{l''=1}^L  (a_{l,k}^{l''})^{\ast} b_{l',k}^{l''} \right|^2 + \sum_{l'=1}^L \sum_{k'=1}^K \sum_{l''=1}^L p_{l',k'} | a_{l,k}^{l''}|^2 c_{l'',k}^{l',k'} + \sum_{l'=1}^L | a_{l,k}^{l'}|^2 d_{l',k} } .
		\end{equation} \hrulefill
	\end{figure*}
	
	The optimal solution $w_{l,k}^{\mathrm{opt}}$ is obtained by taking the first derivative of the objective function of the optimization problem \eqref{Problem: Sumratev2} with respect to $w_{l,k}$ and equating to zero:
	\begin{equation} \label{eq: Opt_wlk}
	w_{l,k}^{\mathrm{opt}} = e_{l,k}^{-1}.
	\end{equation} 
	Using \eqref{eq:Opt_u_lk} and \eqref{eq: Opt_wlk} in \eqref{Problem: Sumratev2}, we obtain the optimization problem
	\begin{equation} \label{Problem: Sumratev4}
	\begin{aligned}
	& \underset{ \{ p_{l,k} \geq 0 \}, \{ \mathbf{a}_{l,k} \} }{\mathrm{minimize}}
	&&  KL -  \sum_{l=1}^L \sum_{k=1}^K \ln \left(1 + \mathrm{SINR}_{l,k} \right) \\
	& \mathrm{subject\,\,to}
	& & p_{l,k} \leq P_{\mathrm{max},l,k} \;, \forall l,k.
	\end{aligned}
	\end{equation}
	Since \eqref{Problem: Sumratev4} is easily converted to \eqref{Problem: Sumratev1}, we have completed the proof.
	
	\section{Proof of Theorem~\ref{Theorem: IterativeSol}} \label{Appendix: Proof_Of_Theorem: IterativeSol}
	For sake of simplicity, we omit the iteration index in the proof. The optimal solution of $u_{l,k}$ and $w_{l,k}$ are easily computed by \eqref{eq:Opt_u_lk} and \eqref{eq: Opt_wlk} by noting that $\rho_{l,k} = \sqrt{p_{l,k}}$. We can find the optimal solution to $\mathbf{a}_{l,k}$ for a given set of $\{ u_{l,k}, w_{l,k}, \rho_{l,k}\}$ from the  optimization problem
	\begin{equation} \label{Problem: a_lk_Subproblem}
	\begin{aligned}
	& \underset{\{ \mathbf{a}_{l,k} \} }{\mathrm{minimize}}
	&&  \sum_{l=1}^L \sum_{k=1}^K w_{l,k} \tilde{e}_{l,k}
	\end{aligned}
	\end{equation}
	where $\tilde{e}_{l,k}$ in \eqref{Problem: a_lk_Subproblem} depends on $\{ \mathbf{a}_{l,k} \} $ and is defined as
	\begin{multline} \label{eq: errorOptv5}
	\tilde{e}_{l,k} = |u_{l,k}|^2 \left( \sum\limits_{l'=1 }^L \rho_{l',k}^2 \left| \sum\limits_{l''=1}^L  (a_{l,k}^{l''})^{\ast} b_{l',k}^{l''} \right|^2  \right.\\\left.
	+ \sum\limits_{l'=1}^L \sum\limits_{k'=1}^K \sum\limits_{l''=1}^L \rho_{l',k'}^2  |a_{l,k}^{l''}|^2 c_{l'',k}^{l',k'} + \sum\limits_{l'=1}^L | a_{l,k}^{l'}|^2 d_{l',k} \right) \\
	- \rho_{l,k}  u_{l,k}  \left( \sum_{l'=1}^L (a_{l,k}^{l'})^{\ast} b_{l,k}^{l'} \right) -
	\rho_{l,k}  u_{l,k}^{\ast}  \left( \sum_{l'=1}^L a_{l,k}^{l'} (b_{l,k}^{l'})^{\ast} \right).
	\end{multline}
	By denoting $f(\mathbf{a}_{l,k}) = \sum_{l=1}^L \sum_{k=1}^K w_{l,k} \tilde{e}_{l,k}$ and using the expression of $\tilde{e}_{l,k}$ in \eqref{eq: errorOptv5}, we can write $f(\mathbf{a}_{l,k})$ as
	\begin{multline} \label{eq: f_function}
	f(\mathbf{a}_{l,k}) = \sum_{l=1}^L \sum_{k=1}^K w_{l,k} \left( |u_{l,k}|^2 \mathbf{a}_{l,k}^\conjtr \widetilde{\mathbf{C}}_{l,k} \mathbf{a}_{l,k} \right. \\ \left. - u_{l,k} \rho_{l,k} \mathbf{a}_{l,k}^\conjtr \mathbf{b}_{l,k}  - u_{l,k}^{\ast} \rho_{l,k} \mathbf{b}_{l,k}^\conjtr \mathbf{a}_{l,k} \right).
	\end{multline}
	Taking the first derivative of $f(\mathbf{a}_{l,k})$ with respect to $\mathbf{a}_{l,k}$, we obtain
	\begin{equation}
	\nabla f  = 2w_{l,k} |u_{l,k}|^2 \widetilde{\mathbf{C}}_{l,k} \mathbf{a}_{l,k} -  2 w_{l,k} u_{l,k}  \rho_{l,k} \mathbf{b}_{l,k}.
	\end{equation}
	Therefore, the solution is
	\begin{equation} \label{eq:alkv1}
	\mathbf{a}_{l,k}^{\mathrm{opt}} = \frac{\rho_{l,k}}{u_{l,k}^{\ast}} \widetilde{\mathbf{C}}_{l,k}^{-1} \mathbf{b}_{l,k}.
	\end{equation}
	After removing $\rho_{l,k}$ in both the numerator and denominator of the fraction in \eqref{eq:alkv1} and doing some algebra, the optimal solution to $\mathbf{a}_{l,k}$ is expressed as in \eqref{eq: a_lk_n1}.
	
	We now compute the optimal solution for $\rho_{l,k}$ for a given set of optimization variables $\{ \mathbf{a}_{l,k}, w_{l,k}, u_{l,k} \}$. In this case, \eqref{Problem: Sumratev3} simplifies to
	\begin{equation} \label{Problem: PowerSubproblem}
	\begin{aligned}
	& \underset{\{ \rho_{l,k} \geq 0 \} }{\mathrm{minimize}}
	&&  \sum_{l=1}^L \sum_{k=1}^K w_{l,k} e_{l,k} \\
	& \mathrm{subject\,\,to}
	& & \rho_{l,k}^2 \leq P_{\mathrm{max},l,k} \;, \forall l,k. \\
	\end{aligned}
	\end{equation} 
	The Lagrangian function of the optimization \eqref{Problem: PowerSubproblem} is 
	\begin{multline}
	\mathcal{L} \left( \{\rho_{l,k}\}, \{\lambda_{l,k}\}\right) = \sum_{l=1}^L \sum_{k=1}^K w_{l,k} e_{l,k} + \\ \sum_{l=1}^L \sum_{k=1}^K \lambda_{l,k} \left(\rho_{l,k}^2 - P_{\max,l,k} \right),
	\end{multline}
	where $\lambda_{l,k} \geq 0$ is the Lagrange multiplier associated with $\rho_{l,k}^2 \leq P_{\mathrm{max},l,k}$. 
	Taking the first derivative of $\mathcal{L} \left( \{\rho_{l,k}\}, \{\lambda_{l,k}\}\right)$ with respect to $\rho_{l,k}$, we obtain
	\begin{multline}
	\frac{\partial \mathcal{L} }{ \partial \rho_{l,k}} = 2\rho_{l,k} \sum_{l'=1}^L w_{l',k} |u_{l',k}|^2 \left| \sum_{l''=1}^L (a_{l',k}^{l''})^{\ast} b_{l,k}^{l''} \right|^2\\
	+ 2\rho_{l,k} \sum_{l'=1}^L \sum_{k'=1}^K w_{l',k'} |u_{l',k'}|^2 \sum_{l''=1}^L | a_{l',k'}^{l''}|^2 c_{l'',k'}^{l,k} \\
	- w_{l,k} u_{l,k}  \left( \sum_{l'=1}^L (a_{l,k}^{l'})^{\ast} b_{l,k}^{l'} \right) - w_{l,k} u_{l,k}^{\ast} \left( \sum_{l'=1}^L a_{l,k}^{l'} (b_{l,k}^{l'})^{\ast} \right) \\
	+ 2\lambda_{l,k} \rho_{l,k}.
	\end{multline}
	By equating the above derivative to zero, the stationary point is obtained as shown in \eqref{eq:rholk}. 
	\begin{figure*}
		\begin{equation} \label{eq:rholk}
		\rho_{l,k} = \frac{ w_{l,k} \mathfrak{Re} \left(u_{l,k} \sum_{l'=1}^L (a_{l,k}^{l'})^{\ast} b_{l,k}^{l'} \right) }{\sum_{l'=1}^L w_{l',k} |u_{l',k}|^2 \left| \sum_{l''=1}^L (a_{l',k}^{l''})^{\ast} b_{l,k}^{l''} \right|^2 + \sum_{l'=1}^L \sum_{k'=1}^K w_{l',k'} |u_{l',k'}|^2 \sum_{l''=1}^L | a_{l',k'}^{l''}|^2 c_{l'',k'}^{l,k} + \lambda_{l,k}}.
		\end{equation}
	\end{figure*}
	The Lagrangian multiplier $\lambda_{l,k}$ must satisfy the complementary slackness condition \cite{Boyd2004a}
	\begin{equation}
	\lambda_{l,k} \left( \rho_{l,k}^2 - P_{\max,l,k} \right) =0.
	\end{equation}
	Therefore, we obtain the solution to $\rho_{l,k}$ as 
	\begin{equation} \label{eq:rholkSolv1}
	\rho_{l,k} = \begin{cases}
	\min (\tilde{\rho}_{l,k}, \sqrt{P_{\max,l,k}}), & \mbox{ if } \lambda_{l,k} =0,\\ 
	\sqrt{P_{\max,l,k}}, &\mbox{ if } \lambda_{l,k} \neq 0,
	\end{cases}
	\end{equation}
	where $\tilde{\rho}_{l,k}$ is defined as in \eqref{eq:rholkSolv2},
	\begin{figure*}
		\begin{equation}\label{eq:rholkSolv2}
		\tilde{\rho}_{l,k} \triangleq \frac{ w_{l,k} \mathfrak{Re}\left(u_{l,k}  \sum_{l''=1}^L (a_{l,k}^{l''})^{\ast} b_{l,k}^{l''} \right) }{\sum_{l'=1}^L w_{l',k} |u_{l',k}|^2 \left| \sum_{l''=1}^L (a_{l',k}^{l''})^{\ast} b_{l,k}^{l''} \right|^2 + \sum_{l'=1}^L \sum_{k'=1}^K w_{l',k'} |u_{l',k'}|^2 \sum_{l''=1}^L | a_{l',k'}^{l''}|^2 c_{l'',k'}^{l,k}}.
		\end{equation}
		\hrulefill
	\end{figure*}
	which is obtained from \eqref{eq:rholk} by setting $\lambda_{l,k} = 0$. From \eqref{eq:rholkSolv1}, the optimal solution to $\rho_{l,k}$ is derived as shown in \eqref{eq:rho_lkSol}.
	
	We now prove that Algorithm~\ref{Algorithm:CentralizedApproach} converges to a stationary point, as defined in Definition~\ref{Def:StationaryPoint}. The optimization problem \eqref{Problem: Sumratev3} is first converted to the following equivalent unconstrained problem:
	\begin{equation} \label{Problem: Sumratev6}
	\begin{aligned}
	& \underset{ \substack{\{ \rho_{l,k} \geq 0 \}, \{ \mathbf{a}_{l,k} \},\\ \{ w_{l,k} \geq 0 \}, \{ u_{l,k} \} }}{\mathrm{minimize}}
	&&  g (\{u_{l,k} \}, \{ w_{l,k} \}, \{\mathbf{a}_{l,k} \},\{\rho_{l,k}\})
	\end{aligned}
	\end{equation}
	where the objective function $g$ is defined as
	\begin{equation}
	\begin{split}
	&g (\{u_{l,k} \}, \{ w_{l,k} \}, \{\mathbf{a}_{l,k} \},\{\rho_{l,k}\}) =\\
	& \sum_{l=1}^L \sum_{k=1}^K w_{l,k} e_{l,k} - \ln (w_{l,k}) + \lambda_{l,k}\left( \rho_{l,k}^2 - P_{\max,l,k} \right).
	\end{split}
	\end{equation}
	Since in every iteration, each subproblem is convex and has a unique optimal solution,
	the objective function of the optimization problem \eqref{Problem: Sumratev6} is monotonically decreasing after iterations.
	Additionally, this function is lower bounded by zero, so  Algorithm~\ref{Algorithm:CentralizedApproach} must converge to a limit point, attained by a solution that we call $ \{(u_{l,k}^{ \mathrm{opt}}, w_{l,k}^{ \mathrm{opt}},  \mathbf{a}_{l,k}^{ \mathrm{opt}}, \rho_{l,k}^{ \mathrm{opt}})\}$. 
	Note that $g (\{u_{l,k} \}, \{ w_{l,k} \}, \{\mathbf{a}_{l,k} \},\{\rho_{l,k}\})$ is convex in each optimization variable, when the others variables are fixed, and the optimal solution to each sub-problem is computed from the first derivative of the cost function. By applying the standard trick in  \cite[Remark~2.2]{Bjornson2013d} to decompose a complex number into the real and imaginary parts, the following properties are obtained:
	\begin{align}
	& \left( \mathfrak{Re}(u_{l,k}) - \mathfrak{Re}(u_{l,k}^{\mathrm{opt}}) \right)  \frac{\partial g}{\partial \mathfrak{Re}(u_{l,k})} \biggl|_{ \mathfrak{Re}(u_{l,k}^{\mathrm{opt}})} \geq 0,\\
	& \left( \mathfrak{Im}(u_{l,k}) - \mathfrak{Im}(u_{l,k}^{\mathrm{opt}}) \right)  \frac{\partial g}{\partial \mathfrak{Im}(u_{l,k})} \biggl|_{ \mathfrak{Im}(u_{l,k}^{\mathrm{opt}})} \geq 0, \\
	&  \left( w_{l,k} - w_{l,k}^{\mathrm{opt}} \right)  \frac{\partial g }{\partial w_{l,k}} \biggl|_{w_{l,k}^{\mathrm{opt}}} \geq 0,\\
	&  \left( \mathfrak{Re}(\mathbf{a}_{l,k}) - \mathfrak{Re}(\mathbf{a}_{l,k}^{\mathrm{opt}}) \right)^T \nabla g \biggl|_{\mathfrak{Re}(\mathbf{a}_{l,k}^{\mathrm{opt}})} \geq 0,\\
	&  \left( \mathfrak{Im}(\mathbf{a}_{l,k}) - \mathfrak{Im}(\mathbf{a}_{l,k}^{\mathrm{opt}}) \right)^T \nabla g \biggl|_{\mathfrak{Im}(\mathbf{a}_{l,k}^{\mathrm{opt}})} \geq 0,\\
	& \left( \rho_{l,k} - \rho_{l,k}^{\mathrm{opt}} \right) \frac{\partial g }{\partial \rho_{l,k}} \biggl|_{ \rho_{l,k}^{\mathrm{opt}}} \geq 0.
	\end{align}
	These properties mean that the limit point is a stationary point to \eqref{Problem: Sumratev3}.
	
	We now prove that the optimal solution $\{\mathbf{a}_{l,k}^{\mathrm{opt}}\}$, $\{(\rho_{l,k}^{\mathrm{opt}})^2\}$ forms a stationary point of \eqref{Problem: Sumratev1}. In fact, the optimization problem \eqref{Problem: Sumratev1} is equivalent to 
	\begin{equation} \label{Problem: Sumraterho}
	\begin{aligned}
	& \underset{ \{ \rho_{l,k} \geq 0 \},  \{ \mathbf{a}_{l,k} \}  }{\textrm{maximize}}
	&&  h(\{\mathbf{a}_{l,k} \}, \{\rho_{l,k}\})\\
	& \mathrm{subject\,\,to}
	& & \rho_{l,k}^2 \leq P_{\mathrm{max},l,k} \;, \forall l,k. \\
	\end{aligned}
	\end{equation}
	where the objective function is
	\begin{equation}
	\begin{split}
	h(\{\mathbf{a}_{l,k} \}, \{\rho_{l,k}\}) &\triangleq \sum_{l=1}^{L} \sum_{k=1}^K \log_2 \left(1 + \mathrm{SINR}_{l,k}\right).
	\end{split}
	\end{equation}
	Here, the SINR value has a similar expression as in \eqref{eq:MMSE_SINR}, but with $p_{l,k} = \rho_{l,k}^2$. For given $w_{l,k} = w_{l,k}^{\mathrm{opt}}$ and $u_{l,k} = u_{l,k}^{\mathrm{opt}}$, for all $l,k$, it is sufficient to prove the following equalities
	\begin{align}
	\frac{\partial g  }{\partial \rho_{l',k'}} &= \frac{1}{\log_2(e)} \frac{ \partial h }{ \partial \rho_{l',k'} } + 2 \lambda_{l',k'} \rho_{l',k'}, \forall l',k', \label{eq:KKT1}\\
	\nabla g ({\mathbf{a}_{l',k'}}) &= \frac{1}{\log_2(e)} \nabla h (\mathbf{a}_{l',k'} ) , \forall l',k'. \label{eq:KKT2}
	\end{align}
	By using \eqref{eq:Opt_u_lk}, \eqref{eq: Opt_wlk}, and the chain rule, \eqref{eq:KKT1} is proved as
	\begin{equation}
	\begin{split}
	\frac{\partial g }{\partial \rho_{l',k'}} &= \sum_{l=1}^L \sum_{k=1}^K w_{l,k}^{\mathrm{opt}} \frac{ \partial e_{l,k}}{ \partial \rho_{l',k'}} + 2 \lambda_{l',k'} \rho_{l',k'} \\
	&= \sum_{l=1}^L \sum_{k=1}^K(e_{l,k}^{\mathrm{opt}})^{-1} \frac{ \partial e_{l,k}^{\mathrm{opt}}}{ \partial \rho_{l',k'}} + 2 \lambda_{l',k'} \rho_{l',k'},
	\end{split}
	\end{equation}
	where $e_{l,k}^{\mathrm{opt}} = \left(1+ \mathrm{SINR}_{l,k} \right)^{-1}$ is derived by using \eqref{eq:Opt_u_lk} in \eqref{eq:errorOptv2} and some algebra. 
	It leads to
	\begin{equation}
	\begin{split}
	\frac{\partial g  }{\partial \rho_{l',k'}} =& \sum_{l=1}^L \sum_{k=1}^K \left(1+ \mathrm{SINR}_{l,k} \right) \frac{ \partial \left(1+ \mathrm{SINR}_{l,k} \right)^{-1}}{ \partial \rho_{l',k'}} \\
	& + 2 \lambda_{l',k'} \rho_{l',k'}\\
	=& \sum_{l=1}^L \sum_{k=1}^K \left(1+ \mathrm{SINR}_{l,k} \right)^{-1} \frac{\partial \mathrm{SINR}_{l,k}}{\partial \rho_{l',k'}} + 2 \lambda_{l',k'} \rho_{l',k'} \\
	=& \frac{1}{\log_2(e)} \frac{ \partial h }{ \partial \rho_{l',k'}} + 2 \lambda_{l',k'} \rho_{l',k'}.
	\end{split}
	\end{equation}
	The proof of \eqref{eq:KKT2} is similar to how \eqref{eq:KKT1} just was proved.
	\bibliographystyle{IEEEtran}
	\bibliography{IEEEabrv,refsv1}
	
	\begin{IEEEbiography} 
		{Trinh Van Chien} (S'16) received the B.S. degree in electronics and telecommunications from the Hanoi University of Science and Technology, Vietnam,
		in 2012, and the M.S. degree in electrical and computer enginneering from Sungkyunkwan University, South Korea, in 2014. He is currently pursuing the Ph.D. degree in communication systems from Link\"oping University, Sweden and being a Marie Sklodowska-Curie research fellow at the H2020 5Gwireless ITN project. His interest lies in
		convex optimization problems for wireless communications and image and video processing. He was an IEEE Wireless Communications Letters Exemplary
		Reviewer for 2016 and 2017. He was a recipient of the Award of Scientific Excellence
		in the first year of the 5Gwireless project funded by European Union Horizon's 2020. 
	\end{IEEEbiography}
	
	\begin{IEEEbiography}
		{Christopher~Mollén} received the M.Sc.\ and
		Ph.D.\ degrees from Linköping University, Sweden,
		in 2013 and 2018, respectively.  
		His Ph.D.\ thesis “High-End Performance with Low-End Hardware: Analysis of Massive MIMO Base Stations
		Transceivers” explored low-complexity hardware
		implementations of massive \abb{MIMO} base stations,
		including low-\abb{PAR} precoding, low-resolution \abb{ADC}s,
		and nonlinear amplifiers. 
		Previously, he was an intern at Ericsson in Kista, Sweden, and in Shanghai, China. 
		From 2011 to 2012, he was an exchange student in the Erasmus Programme at the Eidgenössische Technische Hochschule Zürich, Switzerland. 
		From 2015 to 2016, he visited the University of Texas at Austin as a Fulbright Scholar. 
		The IEEE Transactions on Communications named him exemplary reviewer in 2017.
		This work was done while at Linköping University. 
		He is currently with the 3D Vision Team at Apple.
	\end{IEEEbiography}
	
	\begin{IEEEbiography} 
		{Emil Bj\"ornson} (S'07-M'12-SM'17) received the M.S. degree in Engineering Mathematics from Lund University, Sweden, in 2007. He received the Ph.D. degree in Telecommunications from KTH Royal Institute of Technology, Sweden, in 2011. From 2012 to mid 2014, he was a joint postdoc at the Alcatel-Lucent Chair on Flexible Radio, SUPELEC, France, and at KTH. He joined Link\"oping University, Sweden, in 2014 and is currently Associate Professor and Docent at the Division of Communication Systems.
		
		He performs research on multi-antenna communications, Massive MIMO, radio resource allocation, energy-efficient communications, and network design. He is on the editorial board of the IEEE Transactions on Communications (since 2017) and the IEEE Transactions on Green Communications and Networking (since 2016). He is the first author of the textbooks ``Massive MIMO Networks: Spectral, Energy, and Hardware Efficiency'' (2017)  and ``Optimal Resource Allocation in Coordinated Multi-Cell Systems'' from 2013. He is dedicated to reproducible research and has made a large amount of simulation code publicly available.
		
		Dr. Bj\"ornson has performed MIMO research for more than ten years and has filed more than ten related patent applications. He received the 2018 Marconi Prize Paper Award in Wireless Communications, the 2016 Best PhD Award from EURASIP, the 2015 Ingvar Carlsson Award, and the 2014 Outstanding Young Researcher Award from IEEE ComSoc EMEA. He also co-authored papers that received best paper awards at the conferences WCSP 2017, IEEE ICC 2015, IEEE WCNC 2014, IEEE SAM 2014, IEEE CAMSAP 2011, and WCSP 2009.
	\end{IEEEbiography}
	
\end{document}